\newcommand{\kiran}[1]{{\color{blue} { \textbf Kiran}: #1}}
\title{Distributed Algorithms from Arboreal Ants  for the\\ Shortest Path Problem}
\author{Shivam Garg$^1$, \hspace{1pt} Kirankumar Shiragur$^1$,\hspace{1pt} Deborah M. Gordon, \hspace{1pt} Moses Charikar}
\affil{\texttt{\{shivamgarg, shiragur, dmgordon, charikar\}@stanford.edu}\\ \vspace{5pt} Stanford University

\vspace{-10pt}}
\date{}
\begin{document}
	\maketitle

\def\thefootnote{1}\footnotetext{These authors contributed equally to this work.}
\def\thefootnote{2}\footnotetext{Code for simulations is available at \href{https://github.com/shivamg13/Arboreal-Ants}{ this https URL.}}
\def\thefootnote{3}\footnotetext{Published in PNAS: \href{https://www.pnas.org/doi/10.1073/pnas.2207959120}{pnas.org/doi/10.1073/pnas.2207959120.}}

% \newpage
% \linenumbers
\begin{abstract}
Colonies of the arboreal turtle ant create networks of trails that link nests and food sources on the graph formed by branches and vines in the canopy of the tropical forest. Ants put down a volatile pheromone on edges as they traverse them. At each vertex, the next edge to traverse is chosen using a decision rule based on the current pheromone level. There is a bidirectional flow of ants around the network. In a field study, \textcite{chandrasekhar2019better} observed that the trail networks approximately minimize the number of vertices, thus solving a variant of the popular shortest path problem without any central control and with minimal computational resources. We propose a biologically plausible model, based on a variant of the reinforced random walk on a graph, which explains this observation and suggests surprising algorithms for the shortest path problem and its variants. Through simulations and analysis, we show that when the rate of flow of ants does not change, the dynamics converges to the path with the minimum number of vertices, as observed in the field. The dynamics converges to the shortest path when the rate of flow increases with time, so the  colony can solve the shortest path problem merely  by increasing the flow rate. We also show that to guarantee convergence to the shortest path, bidirectional flow and a decision rule dividing the flow in proportion to the pheromone level are necessary, but convergence to approximately short paths is possible with other decision rules.
\vspace{10pt}
\end{abstract}

\vspace{-7pt}
\section{Introduction}
\replaced{Biological systems, such as ant trail networks, are fascinating examples of distributed algorithms in nature \autocite{lynch1996distributed, navlakha2014distributed, feinerman2017individual, couzin2005effective} , often finding globally optimum solutions using simple local interactions among individuals, devoid of central control. The study of natural algorithms has led to synergistic exchange between biology and computer science \autocite{navlakha2011algorithms, chazelle2012natural}. The algorithmic lens has enhanced our understanding of biological phenomena such as how birds flock \autocite{chazelle2009natural}, how slime molds solve the shortest path problem \autocite{ nakagaki2000maze, bonifaci2012physarum, straszak2021iteratively} and how computation takes place in the brain  \autocite{papadimitriou2020brain, valiant2000circuits}. Moreover, the process of  evolution itself has been studied using an algorithmic lens \autocite{chastain2014algorithms, valiant2009evolvability, livnat2008mixability,  vishnoi2014speed}. Also, inspiration from nature has led to new algorithmic ideas such as ant-inspired algorithms for distributed density estimation \autocite{musco2016ant},  artificial neural networks in machine learning \autocite{lecun2015deep}, and algorithms for similarity search inspired by the fruit fly brain \autocite{dasgupta2017neural},  among others 
 \autocite{ dasgupta2018neural, shen2020habituation, afek2011biological, cardelli2012cell}.}{Biological systems, such as ant trail networks, are fascinating examples of distributed algorithms in nature \autocite{lynch1996distributed, navlakha2014distributed, feinerman2017individual, couzin2005effective} , often finding globally optimum solutions using simple local interactions among individuals, devoid of central control. The study of natural algorithms has led to synergistic exchange between biology and computer science \autocite{navlakha2011algorithms, chazelle2012natural}: the algorithmic lens has enhanced our understanding of biological phenomena such as how birds flock and how slime molds solve a maze \autocite{chazelle2009natural,  nakagaki2000maze, bonifaci2012physarum, papadimitriou2020brain, valiant2000circuits, valiant2009evolvability}; parallels have been drawn between biological processes and known algorithms, such as the mathematical process underlying sexual evolution and the multiplicative weights update algorithm in computer science \autocite{chastain2014algorithms, straszak2021iteratively}, and inspiration from nature has led to new algorithms  such as ant-inspired algorithms for distributed density estimation \autocite{musco2016ant, dasgupta2017neural, dasgupta2018neural, shen2020habituation, afek2011biological, cardelli2012cell, lecun2015deep}.}

Here we investigate how the trail networks of the arboreal turtle ant (Cephalotes goniodontus) can solve variants of the shortest path problem, a basic optimization problem on graphs  \autocite{ford1956network, bellman1958routing, dijkstra1959note}.
Textbook algorithms for this problem find optimum solutions using knowledge of the entire network \autocite{kleinberg2006algorithm, dasgupta2008algorithms, cormen2009introduction}.
Turtle ants nest and forage in the tree canopy of the tropical forest; their trail network is constrained to lie on a natural graph formed by tangled branches and vines, and no ant has any global information about the network. Observations of turtle ants in the field show that a colony's trail network approximately minimizes the number of vertices \autocite{chandrasekhar2019better}. 
We develop a model that gives a biologically plausible explanation for this observation, and outlines other intriguing phenomena as described in the next section.

% \newpage
\subsection{Summary of model and results}

\replaced{Turtle ant colonies form trails on a graph whose vertices correspond to junctions in the vegetation, and edges correspond to branches connecting these junctions. 
A colony's network of trails connects many nests and food sources. 
The trail network minimizes the number of vertices \autocite{chandrasekhar2019better} (compared to simulated random networks),  approximately solving a variant of the Steiner tree problem \autocite{winter1987steiner, charikar1999approximation, latty2011structure}, which is a generalization of the shortest path problem for multiple terminal vertices.
Here we focus on a section of this network, considering  two terminal vertices, such as a nest and a food source, and we seek to explain how a colony can find the path with minimum number of vertices between the two terminals. We model trails as a bidirectional flow of ants between the two terminal vertices; bidirectional flow is characteristic of the trails of this species \autocite{gordon2017local}. The flow in our model is similar to models of flow in traffic networks \autocite{wardrop1952road, beckmann1955studies, roughgarden2002bad}.}{Our model builds on previous work on biologically feasible local algorithms \autocite{chandrasekhar2018distributed}.  We model trails as a bidirectional flow of ants between two terminal vertices, such as a nest and a food source; bidirectional flow is characteristic of the trail networks of {\em C. goniodontus} \autocite{gordon2017local}. 
This is similar to models of  flow in traffic networks \autocite{wardrop1952road, beckmann1955studies, roughgarden2002bad}. Trails are formed on a graph, whose vertices correspond to junctions in the vegetation, and edges correspond to branches connecting these junctions.} 
Ants lay trail pheromone on edges as they traverse them, and the next edge to traverse is chosen based on the level of pheromone. The pheromone decays with time. Some fraction of flow leaks as it passes through each vertex, modeling the loss of ants due to exploration. \textcite{chandrasekhar2019better} hypothesized 
loss of ants at the vertices to be the reason why ants prefer paths with fewer vertices.
\begin{figure}[h]%{r}{0.5\textwidth}
  \begin{center}
\includegraphics[width=0.5\textwidth]{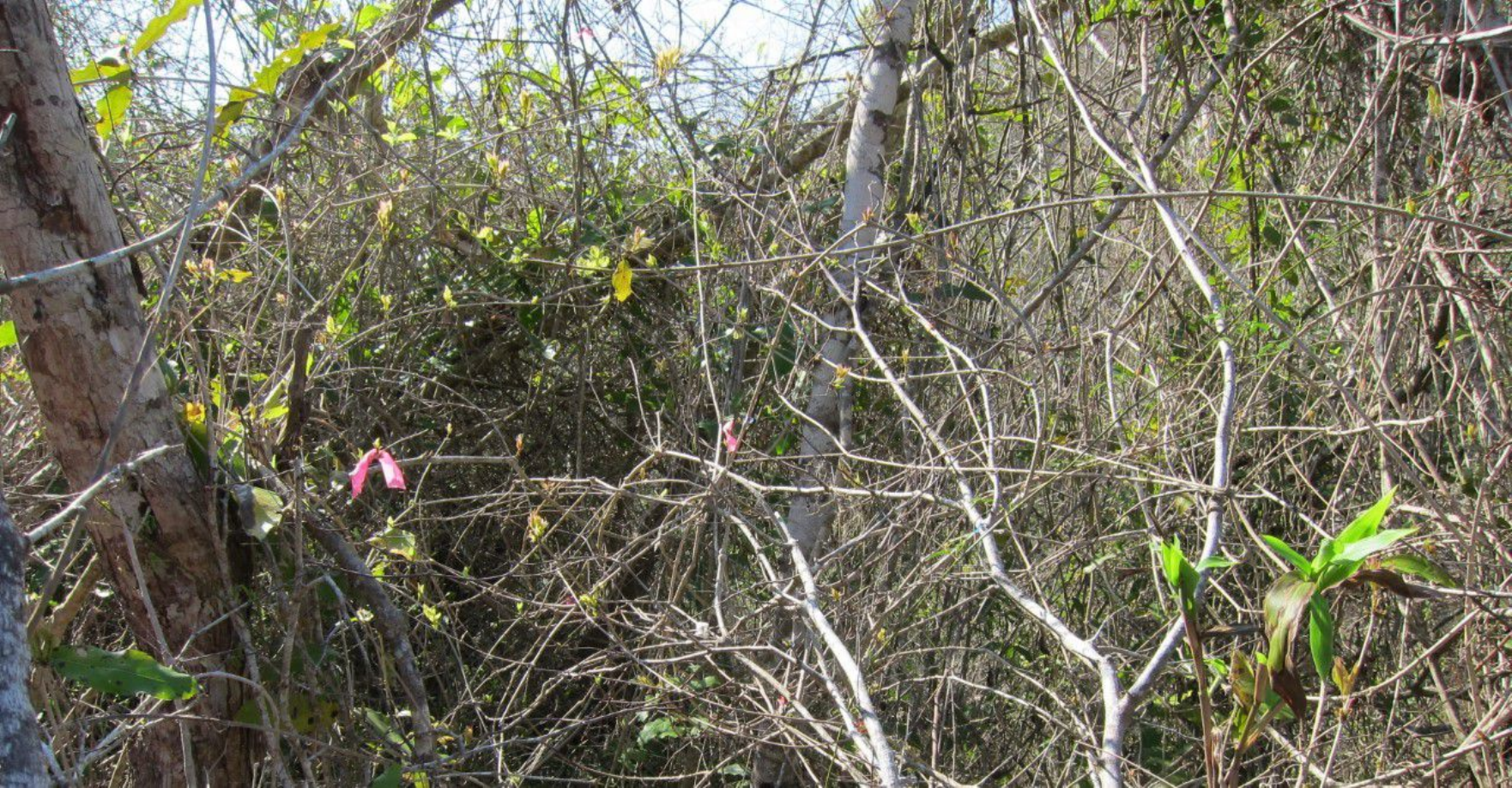}%{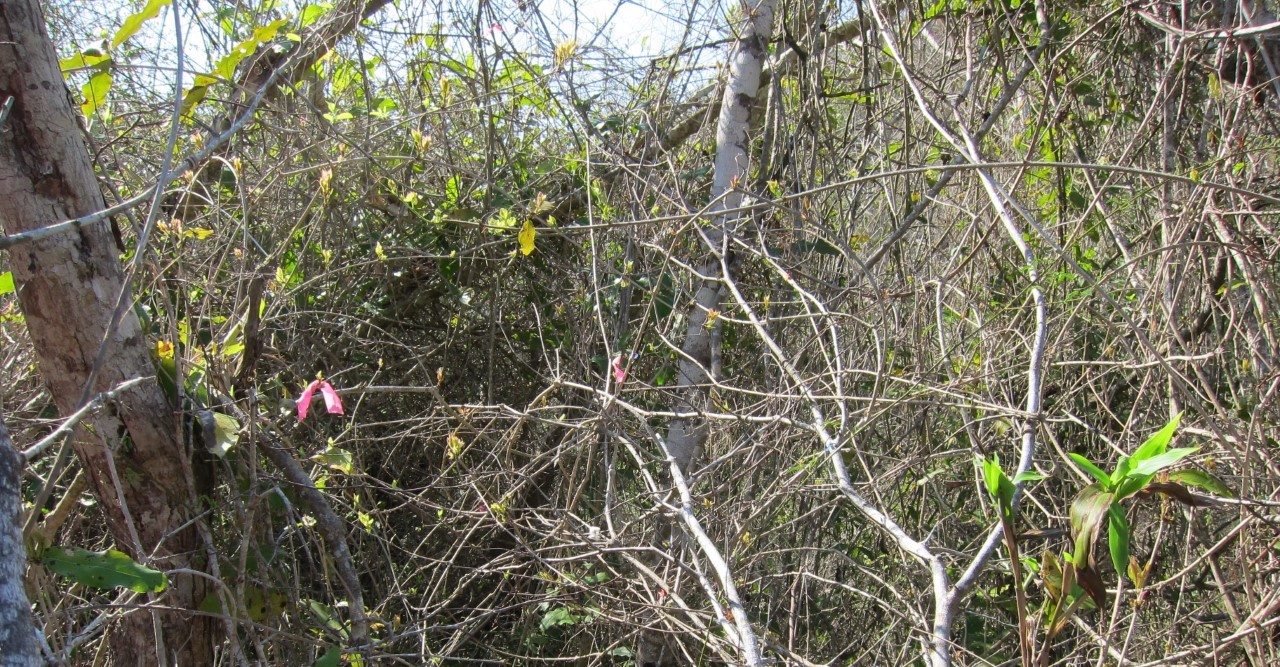}
  \end{center}
\vspace{-8pt}
\captionof{figure}{\small{Tangled branches in which the turtle ants forage. }}
%\label{fig:ants2_graph}
\vspace{-5pt}
\end{figure}

Our model leads to 4 main results:
\begin{enumerate}
\item We first consider the linear decision rule, which at each vertex, divides the flow among the next set of edges in proportion to their pheromone level. We show through simulations and analysis that when the incoming rate of flow remains unchanged, the dynamics converges to the path with the minimum leakage. This is also the path with the minimum number of vertices when all vertices have equal leakage. This result describes a biologically plausible process that explains how colonies can find paths with the minimum number of vertices.

\item 
\replaced{
We show that when the rate of flow increases with time, in the absence of leakage, the dynamics converges to the shortest path.  Flow rate on ant trails can change over time 
\autocite{deneubourg1986random,gordon2012dynamics, bouchebti2015contact, reid2015army, bruce2017tall},  for example, in turtle ants the flow rate can increase in response to new food sources \autocite{gordon2012dynamics}.
In other ant species, it has been shown that ant trails converge to the shortest path in certain simple graphs \autocite{goss1989self}.
Our result shows a surprising link between these two phenomena:  ant colonies can use their ability to increase the flow rate to find the shortest path.}{When the rate of flow increases with time, in the absence of leakage, we show that the dynamics converges to the shortest path. Many studies of ant trail networks show that flow rate changes over time \autocite{deneubourg1986random,gordon2012dynamics, bouchebti2015contact, reid2015army, bruce2017tall}, while others show that these networks converge to the shortest path in certain simple graphs \autocite{goss1989self} . Our result shows a surprising link between these two phenomena: an increase in the flow rate of ants facilitates the discovery of the shortest path. }

\item We establish the utility of bidirectional flow by showing that it is necessary for convergence to the shortest or the minimum leakage path. In contrast, most flow-based problems considered in computer science and operations research have unidirectional flow  \autocite{harris1955fundamentals, ahuja1988network, schrijver2002history, roughgarden2002bad}.

\item  We investigate the effect of increasing flow and leakage with decision rules other than the linear rule. For a general family of decision rules, we show that the linear rule is its unique member with guaranteed convergence to the shortest and the minimum leakage path. However, for various non-linear rules, we show that the dynamics still often converges to a path with smaller length and less leakage, compared to the path found in the absence of increasing flow and leakage respectively. Thus the utility of increasing flow and leakage is not limited to the linear decision rule.
\end{enumerate}

\added{Our model builds on a previous model by  \textcite{chandrasekhar2018distributed},  adding components such as leakage and variation in flow rate. These components were not present in the model of \textcite{chandrasekhar2018distributed} and are crucial for the phenomena we discuss above.
\textcite{chandrasekhar2018distributed} investigated how ants find alternative paths, not necessarily the ones with the minimum number of vertices,  to route around ruptured links in a network. Here we ask how ants can find the path with the minimum number of vertices, and how the flow rate impacts the path found. We demonstrate that leakage at vertices can lead to convergence to the path with the minimum number of vertices, and increase in flow rate over time can lead to convergence to the shortest path.
}

Our work is different from traditional ant-colony optimization \autocite{dorigo1991distributed, dorigo2005ant, lopez2018ant}, in which the algorithms considered are not required to be biologically plausible. Models of ant colony optimization (ACO), inspired by ant behavior, solve combinatorial optimization problems, such as the traveling salesman problem \autocite{yang2008ant} and the shortest path problem \autocite{sudholt2012running, neumann2006runtime} . In ACO, individual agents, simulating ants, construct candidate solutions using heuristics, and then use limited communication, simulating trail pheromone, to lead other agents towards better solutions. The simulated ants have significantly more computational power than is biologically plausible for real ants. Unlike real ants, the simulated ants have the ability to remember, retrace and reinforce entire paths, and can use the quality of the global solution to determine the amount of ``pheromone'' to be laid.

Our model resembles the reinforced random walks introduced by Diaconis and others \autocite{diaconis1980finetti,davis1990reinforced,pemantle2007survey}, which have found applications in biology\autocite{stevens1997aggregation,coding2008random,smouse2010stochastic}.
Here, a single agent traverses a graph by choosing edges with probability proportional to their edge weight, with edge weight here analogous to the level of trail pheromone. This edge weight increases additively each time the edge is traversed. However, there are a few key differences between the model we study and the setup for reinforced random walks: 1) Our model involves many agents, modeled by a flow, and their behavior is affected by their collective action in putting down pheromone, while the model for reinforced random walks considers a single agent whose behavior is influenced by its past random choices. 2) Pheromone decays geometrically over time in our model, but edge weights do not decrease in reinforced random walks. 3) Our model has leakage at each vertex which is not present in the setup for reinforced random walks. Nevertheless, in a similar spirit to random walk-like processes studied before, the model we investigate is a Markov process of particular relevance to biology.

\section{The Model}\label{sec:model}

We consider bidirectional flow on a directed graph $G = (V, E)$ with source vertex and destination vertex $s$ and $d$ respectively.  For each vertex $v$, let $\fflow{v}(t)$ and $\bflow{v}(t)$ denote the forward and backward flow on $v$ present at time $t$. The forward flow moves along the direction of the edges, and the backward flow moves in the opposite direction. Let $p_{uv}(t)$ denote the pheromone level on edge $(u, v)$ at time $t$.  The pheromone level and flow together constitute the state of the system at any time $t$, which is updated as follows:

\begin{enumerate}
    \item \textbf{Flow movement}: At each time step,  the forward flow on a vertex  moves along its outgoing edges,  dividing itself based on a decision rule that depends on the pheromone levels on these edges. The simplest such rule divides the flow  proportional to the pheromone level on the outgoing edges. Formally, let $\fflow{uv}(t)$ denote the forward flow moving  along edge $(u, v)$ at time $t$. Then according to this rule, which we call the \emph{linear decision rule},
    \begin{equation}
    \fflow{uv}(t) = \fflow{u}(t)\frac{p_{uv}(t)}{\sum\limits_{z: (u, z) \in E} {p_{uz}(t)}}.
    \end{equation}

    The total forward flow on vertex $v$ is the sum of flow along its incoming edges, multiplied by a leakage factor:
        \begin{equation}
        \label{eq:no_leak_flow}
        \fflow{v}(t+1) = (1 - l_v) \sum\limits_{z: (z, v) \in E}
    \fflow{zv}(t)  
    \end{equation}
    
    The leakage parameter $l_v \in [0, 1]$ models the loss of ants due to exploration at each vertex.
    % In addition, there is possible leakage of flow at any vertex $v$ governed by a leakage parameter $l_v \in [0, 1]$, such that $l_v$ fraction of flow at $v$ leaks out. Leakage models the loss of ants due to exploration. Equation \ref{eq:no_leak_flow} corresponds to the case when leakage at vertex $v$, $l_v = 0$. In general, the total forward flow on vertex $v$ is given by  
    % \begin{equation}
    %     \label{eq:leak_flow}
    %     \fflow{v}(t+1) = (1 - l_v)  \sum\limits_{z: (z, v) \in E}
    % \fflow{zv}(t)  
    % \end{equation}

    Movement for backward flow takes place in exactly the same manner, with the direction of the flow reversed. Formally, let $\bflow{uv}(t)$ denote the backward flow moving along the edge $(u, v)$ at time $t$. Then
    
        \begin{equation}
    \bflow{uv}(t) =  \ \bflow{v}(t)\frac{p_{uv}(t)}{\sum\limits_{z: (z,v) \in E} {p_{zv}(t)}}
    \end{equation}
    The total backward flow at $u$ is the sum of backward flow along its outgoing edges, multiplied by the leakage parameter. 
    %For simplicity, in our model, the leakage parameter for any given vertex  is same for forward and backward flows.
    \begin{equation}
        \bflow{u}(t+1) = (1 - l_u) \sum\limits_{z: (u,z) \in E}
    \bflow{uz}(t)  
    \end{equation}
    
   At each time $t$, new forward and backward flow $\fflow{s}(t)$ and $\bflow{d}(t)$ appears on the source vertex and destination vertex $s$ and $d$ respectively.

    \item \textbf{Pheromone update}: At each time step, the pheromone level on an edge increases by the amount of flow on it, and decays by a multiplicative factor  of $\delta$ (similar to \textcite{chandrasekhar2018distributed}):
    \begin{equation}
    p_{uv}(t+1) = \delta(p_{uv}(t) + \fflow{uv}(t)  + \bflow{uv}(t))
    \end{equation}
    Note that unlike flow, the pheromone level present on an edge does not have any direction, and is influenced by flow from both the directions.

\end{enumerate}
\begin{figure}[t]
\centering
\includegraphics[scale=0.5]{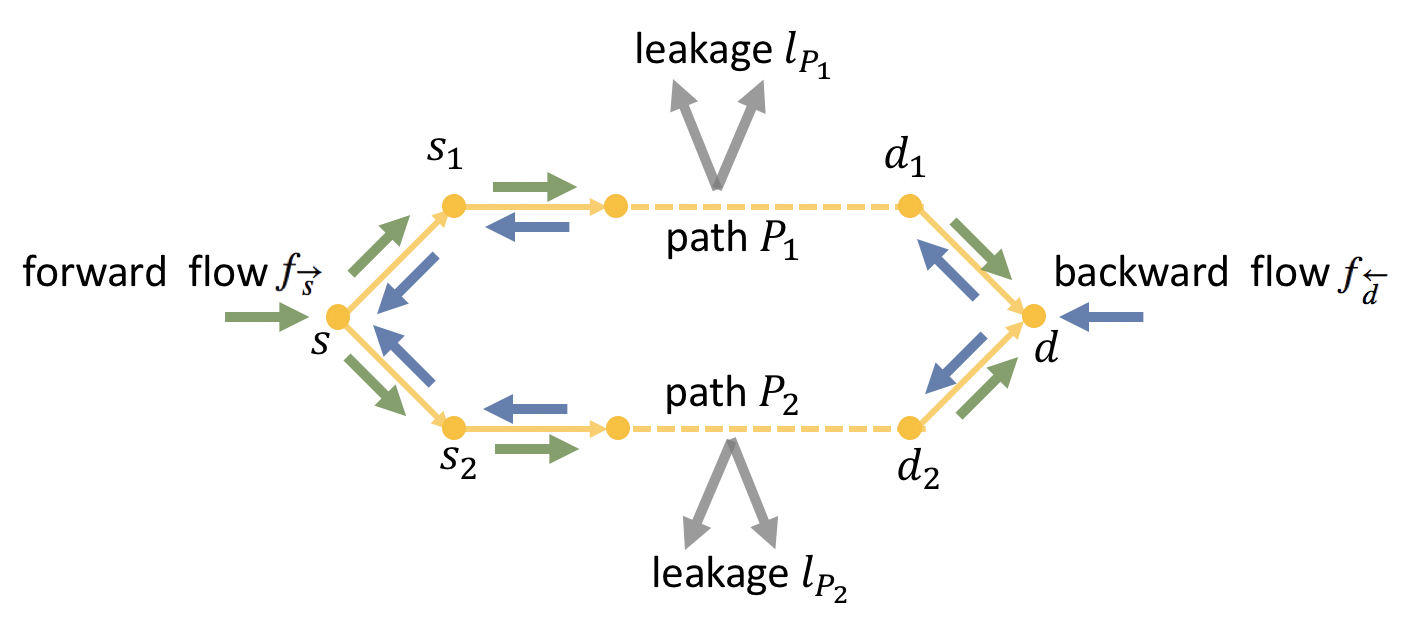}
\caption{Forward and backward flow in a graph with parallel paths.}
\label{fig:path_graph}
\end{figure}

%\moses{added definition environment}
\begin{definition}
\label{def:path_leakage}
For any path $P$ from source $s$ to destination $d$, we define its leakage $l_P$ as the fraction of flow that leaks out while moving through $P$, that is,  $l_P = 1 - \Pi_{{v} \in P \setminus \{s,d\}} (1 - l_v)$. \footnote{For convenience, we overload notation, and use $v \in P$ and $(u, v) \in P$ to denote the vertices and edges present on path $P$ respectively.} 
\end{definition}

\paragraph{Discussion of Modeling Assumptions.}
\added{
Our goal is to provide a simple and minimal biologically plausible model that explains how ants can find the path with the minimum number of vertices.  We discuss our modeling assumptions below.}

While we describe our model for unweighted graphs, it is general enough to capture the case with integral/rational edge lengths. For instance, an edge with length $k$
can be represented using $k$ unit length edges connected in series, with leakage at the vertices connecting these edges set to zero. 

\added{Our model uses a directed graph. Previous work shows that individual ants are unlikely to turn around on the trail, so that when an ant leaves a terminal, such as a nest or a food source, its distance from that terminal increases over time \autocite{gordon2012dynamics}. To account for this, \textcite{chandrasekhar2019better} assign direction to each edge relative to a terminal vertex, where the outbound direction goes away from the terminal vertex and the inbound direction goes towards it. Similarly, we use directed edges in our model.}

\added{
In our model, the backward flow at the destination vertex is not dependent on the forward flow reaching it at the previous time step. This is because an ant does not necessarily make a round trip from one terminal to the other and back. The flow between the two terminal vertices in our model represents only a section of the larger network, which includes many nests and food sources. Ants reaching a terminal vertex can go on to other parts of the network instead of turning back.}

\section{Results}\label{sec:results}
In this section, we discuss the convergence properties for the model defined above. We defer all the proofs and simulation details to the supplementary material.

\subsection{Convergence to the Minimum Leakage Path and the Shortest Path}
\label{sec:parallel_paths_ub}

\paragraph{Constant flow with time.}
Through simulations and analysis, we show that when the incoming forward and backward flow does not change with time, with the linear decision rule, the dynamics converges to the path with the minimum leakage (see Figure \ref{fig:simulation}).  

We run simulations for three families of directed graphs: 
\begin{enumerate}
    \item $G(n, p)$: The $G(n, p)$ model is a widely used random graph model which consists of graphs on $n$ vertices where each pair of vertices has an edge with probability $p$.
    \item $G(n, p)$ with a locality constraint: The standard $G(n, p)$ model allows edges between any two vertices (with probability $p$). However, the graphs formed by  branches and vines in the natural vegetation have a local physical structure in which the edges are more likely between nearby vertices. To capture this, we consider the $G(n, p)$ model with an additional locality constraint that an edge exists between vertex $i$ and $j$ only if $|i - j| \leq k$ for some parameter $k$. Here, the vertices are labelled from $1$ to $n$ with the source and the destination vertex labelled 1 and $n$ respectively.
    \item $n \times n$ grid graph.
\end{enumerate}
We generate a large number of instances with different values for $n$, $p$, $k$ and other parameters, and observe convergence to the minimum leakage path
in \emph{all} the simulated instances. More details about the simulations can be found in supplementary material Appendix \ref{sec:app_sim_details}.

\begin{figure}[t]
    \centering % <-- added
\begin{subfigure}{0.33\textwidth}
  \includegraphics[width=\linewidth]{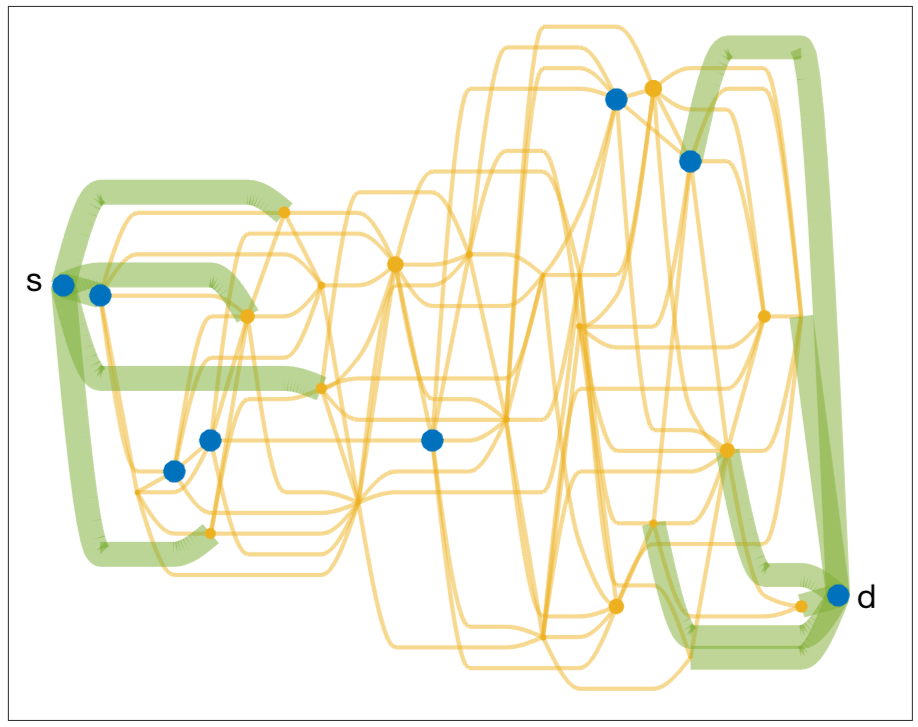}
  %\caption{image1}
  %\label{fig:1}
\end{subfigure}%\hfil % <-- added
\begin{subfigure}{0.33\textwidth}
  \includegraphics[width=\linewidth]{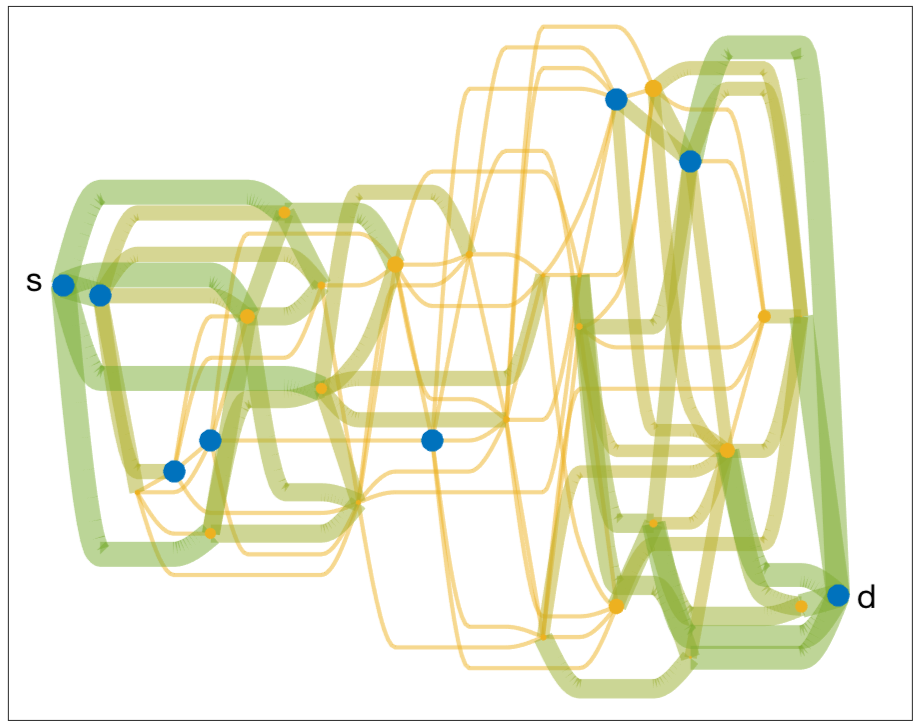}
  %\caption{image2}
  %\label{fig:2}
\end{subfigure}%\hfil % <-- added
\begin{subfigure}{0.33\textwidth}
  \includegraphics[width=\linewidth]{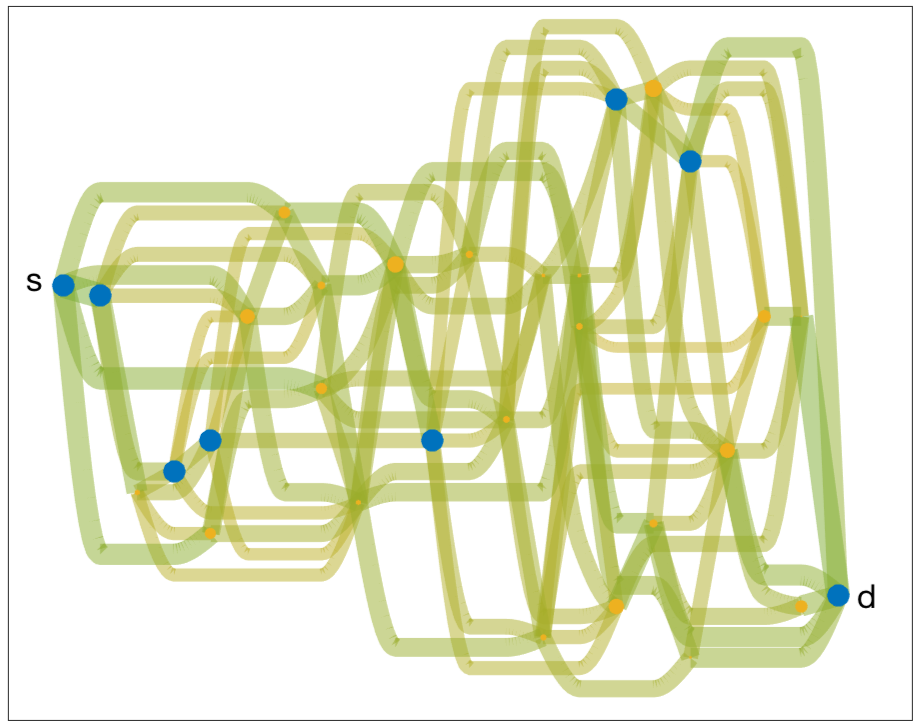}
  %\caption{image3}
  %\label{fig:3}
\end{subfigure}
%\medskip
\begin{subfigure}{0.33\textwidth}
  \includegraphics[width=\linewidth]{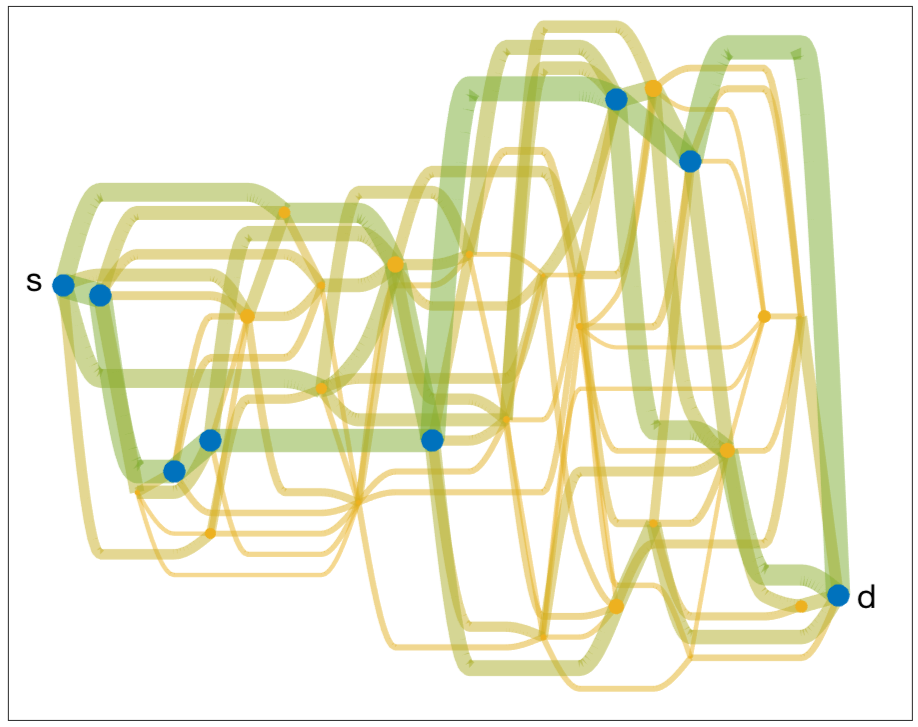}
  %\caption{image4}
  %\label{fig:4}
\end{subfigure}%\hfil % <-- added
\begin{subfigure}{0.33\textwidth}
  \includegraphics[width=\linewidth]{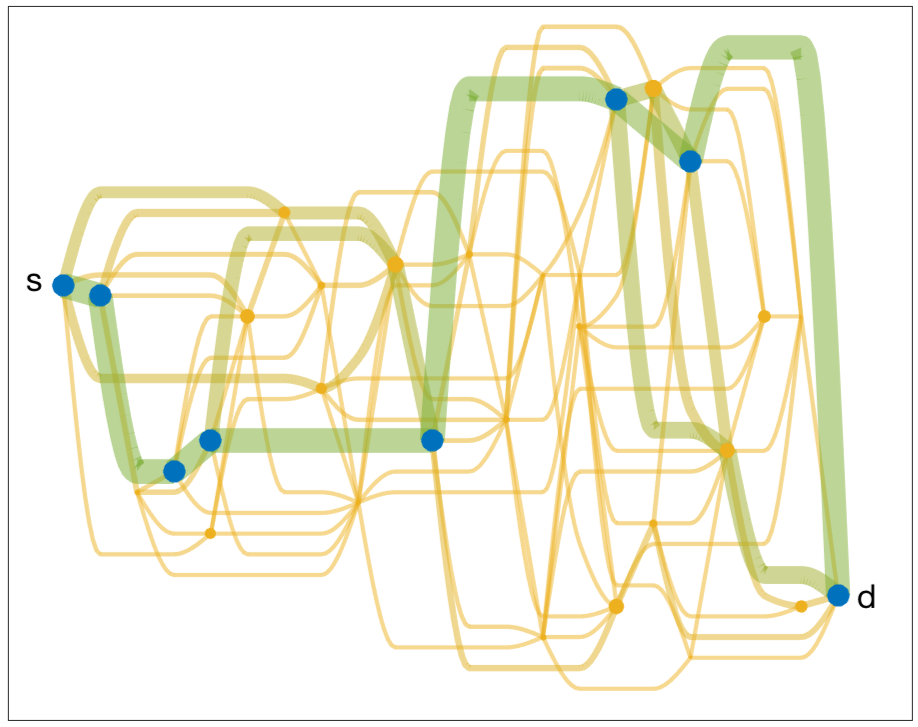}
  %\caption{image5}
  %\label{fig:5}
\end{subfigure}%\hfil % <-- added
\begin{subfigure}{0.33\textwidth}
  \includegraphics[width=\linewidth]{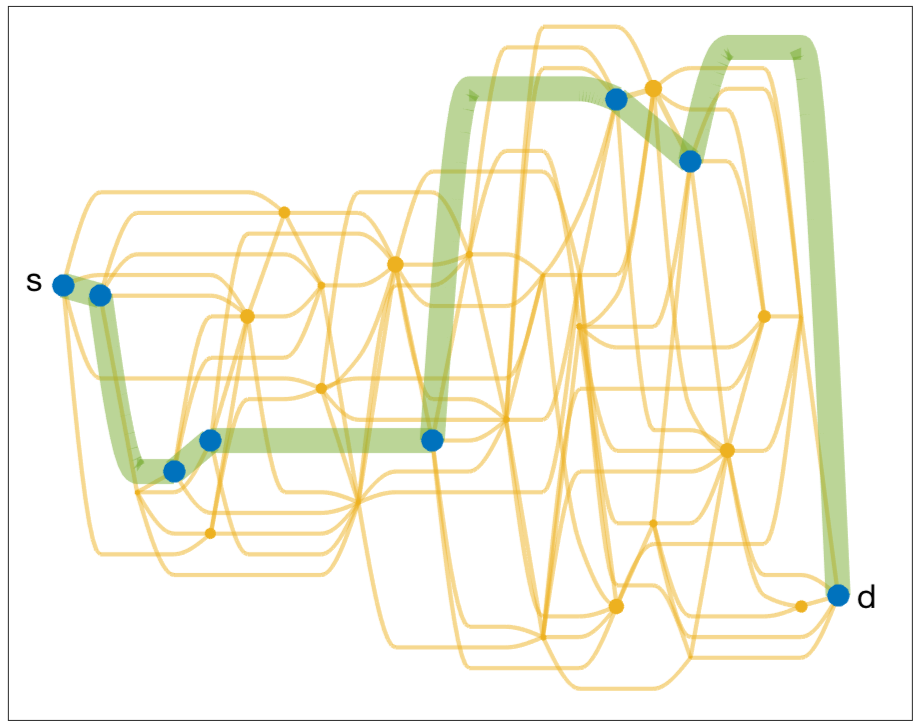}
  %\caption{image6}
  %\label{fig:6}
\end{subfigure}
\caption{Flow dynamics governed by the linear decision rule converges to the path with the minimum leakage (shown by blue vertices) when the incoming flow does not change with time. Larger dots represent vertices with smaller leakage, and thickness of the green edges corresponds to the flow level. %Notice that the paths with smaller leakage get pruned later.
\vspace{-8pt}
}
\label{fig:simulation}
\end{figure}

We complement our simulations on general graph models
with a provable convergence result  for graphs with two parallel paths, a case that has been experimentally investigated in the past \autocite{goss1989self, dorigo2001experimental}.

\begin{restatable}{theorem}{firstthm}
\label{thm:ub-leakage}
Consider a graph $G$ consisting of two parallel paths $P_1$ and $P_2$ from $s$ to $d$. Let the flow and the pheromone levels be updated according to the model in Section \ref{sec:model}, and let $P_1 $ be the path with the minimum leakage.  If (i) the incoming  flow values $\fflow{s}(t)$ and $\bflow{d}(t)$ are non-zero and unchanging with time, and (ii) the initial pheromone level ${p}_{uv}(0)$ is positive for all edges $(u, v) \in P_1$,  then the flow dynamics governed by the linear decision rule converges to a state where all the flow goes through $P_1$. %That is, $\nfp{uv}(t) \geq 1 - \epsilon$ and $\nbp{uv}(t) \geq 1 - \epsilon$, for all $t = \Omega\left(log \left(\frac{1}{\epsilon}\right)\right)$ and $(u, v) \in P_1$.
\end{restatable}

At a high level, the proof involves showing that relatively more pheromone accumulates on the path with less leakage as time progresses.
Although the update rules governing our model are simple, mathematically understanding its dynamics is surprisingly non-trivial. 
Even for the seemingly simple case of two parallel paths, the progression of pheromone levels can be highly non-monotone, and the proof needs a careful construction of an appropriate potential function. We give a sketch of the proof in Section \ref{sec:proof_overview}.

\replaced{To connect this result to the observation of \textcite{chandrasekhar2019better} that ants form trails with approximately the minimum number of vertices, we need to connect leakage to the number of vertices.
Note that the path with minimum leakage is also the path with the minimum number of vertices  when all the vertices have equal leakage. Moreover,  we can show that 
this connection between leakage and number of vertices degrades gracefully, and as long as the variation in leakage between different vertices is not too large, the path with the minimum leakage has approximately the minimum number of vertices. One way to formalize this is to assume that  for any pair of vertices $u$ and $v$, $\log(1 - l_u)$  and $\log(1-l_v)$ are within a $(1+\epsilon)$ factor of each other. Then we can show that the path with the minimum leakage has number of vertices at most $(1+\epsilon)$ times the path with the minimum number of vertices (see supplementary material Appendix \ref{app:connect_leakage_vertices} for the proof).
Thus, our result on convergence to the minimum leakage path suggests a plausible way in which ants can converge to the path with approximately the minimum number of vertices.}{Note that the path with minimum leakage is also the path with the minimum number of vertices  when all the vertices have equal leakage. Thus, this result suggests a plausible way in which ants converge to the path with the minimum number of vertices, as observed in \autocite{chandrasekhar2019better}.}

\added{In the case when there is a large variation in leakage between different vertices, the path with the minimum leakage may not have approximately the minimum number of vertices. However, even in this case, convergence to the minimum leakage path is consistent with the hypothesis of \textcite{chandrasekhar2019better} that turtle ants prefer paths that minimize their chances of getting lost.}

\paragraph{Increasing flow with time.}For the previous result, we assumed that the incoming flow does not change with time. Now we consider the effect of change in flow. We show that if the incoming forward and backward flow increases with time, in the absence of leakage, the dynamics governed by the linear decision rule converges to the shortest path.

We run the simulations for the same families of graphs considered for the last result. The differences in these set of simulations are the absence of leakage, and the incoming  flow increases by a fixed factor in each step. We generate a large number of graph instances with different values of the parameters, and observe convergence to the shortest path in \emph{all} the simulated instances. %More details about the simulations can be found in Appendix \ref{sec:app_sim_details}.

We also show provable convergence to the shortest path for graphs consisting of two parallel paths.

\begin{restatable}{theorem}{secondthm}
\label{thm:ub-changing-flow}
Consider a graph $G$ consisting of two parallel paths $P_1$ and $P_2$ from $s$ to $d$. Let the flow and the pheromone levels be updated according to the model in Section \ref{sec:model}, and let $P_1$ be the shorter path.   If  (i)  the initial pheromone level ${p}_{uv}(0)$ is positive for all edges $(u, v) \in P_1$,
(ii) leakage $l_{P_1} = l_{P_2} = 0$, and (iii) the incoming flow increases as follows: \begin{enumerate}
    \item \textbf{Multiplicative increase:} $\fflow{s}(t) = \alpha^t \fflow{s}(0)$ and $\fflow{d}(t) = \alpha^t \fflow{d}(0)$, for any  $\alpha > 1$,  or
       \item \textbf{Additive increase:} $\fflow{s}(t) = \fflow{s}(0) + \alpha t$ and $\fflow{d}(t) = \fflow{d}(0) +\alpha t$, for any  $\alpha > 0$,  
       
       \end{enumerate}
       then the flow dynamics governed by the linear decision rule converges to a state where all the flow goes through $P_1$ .
\end{restatable}
\added{For ease of analysis, we consider only the cases when the flow increases by a fixed additive or multiplicative factor. Our analysis suggests that the outcome may be similar when the rate of increase is not fixed; further work is needed to demonstrate this.}

For an intuitive explanation of this result, consider a graph with two parallel paths as shown in Figure \ref{fig:path_graph} such that path $P_1$ is shorter than $P_2$, and there is no leakage. Consider the forward flow on edges $(d_1, d)$ and $(d_2, d)$. Since $P_1$ is shorter, the forward flow on $(d_1, d)$ corresponds to the more recent forward flow that entered from $s$ compared to the forward flow on $(d_2, d)$. Since the flow is increasing with time, more recent flow is larger ensuring that relatively more pheromone accumulates on $(d_1, d)$ than $(d_2, d)$ as time progresses.
Similarly, relatively more pheromone accumulates on $(s, s_1)$ than $(s, s_2)$ as time progresses due to the increasing backward flow. Thus, as time progresses, relatively more pheromone accumulates on $P_1$. However, as in the case with leakage, increase in relative pheromone levels on $P_1$ is not monotone and we require a more careful proof. We give a sketch of the proof in Section \ref{sec:proof_overview}.

Previous studies have shown that ants are capable of finding the shortest path in certain simple graphs \autocite{goss1989self}, and that factors such as detection of new food sources can positively reinforce the rate of flow of ants \autocite{gordon2012dynamics, deneubourg1986random}. Our result shows a surprising connection between these two 
phenomena.

The main goal of our work is to demonstrate the intriguing connection between leakage and flow rate and the shortest path problem. 
 We do this using simulations on general graph models, and analysis on graphs with parallel paths. Based on our simulations, we conjecture that the above results showing provable convergence hold for general graphs.
\begin{conjecture}
\label{conjecture1}
The provable convergence results in Theorem \ref{thm:ub-leakage} and \ref{thm:ub-changing-flow} hold for general graphs.
\end{conjecture}

We can view leakage and increasing flow as two possibly conflicting forces, leading to paths with minimum leakage and length respectively. An interesting direction for future work would be to investigate the dynamics when both these forces are active simultaneously.

\subsection{General Rules and Fundamental Limits}
\label{sec:general_rules_lb}

In the previous subsection, we show that  bidirectional flow with linear decision rule converges to the path with the minimum leakage when the flow is fixed, and to the shortest path when the flow is increasing and there is no leakage. How crucial is the bidirectional nature of the flow? How does the dynamics behave when the decision rule is non-linear? Next, we study these questions.

\textbf{Necessity of bi-directional flow.}
We show that bi-directional flow is  necessary to find the shortest or the minimum leakage path. This result holds independent of the decision rule, leakage levels and the change in flow rate; we formally state this result next. Let $\mathcal{G}$ be the set of all decision rules that  distribute the forward (backward) flow at any vertex, onto the outgoing (incoming) edges only based on their pheromone levels.

\begin{restatable}{theorem}{firstprop}
\label{prop:one}
Consider any graph $G$ with two parallel paths between $s$ and $d$. Let there be unidirectional flow from $s$ to $d$. For any decision rule in $\mathcal{G}$, any setting of leakage parameters and with arbitrary incoming flow levels, there exists a setting of initial pheromone levels, such that the dynamics does not converge to the shortest or the minimum leakage path.
\end{restatable}
Thus, bi-directional flow is necessary for all pheromone based rules to guarantee convergence to the shortest or the minimum leakage path. The main idea behind this result is that with only unidirectional flow from $s$ to $d$, pheromone on edges incident on $s$ can not encode information about rest of the graph.

\newcommand{\ft}{f{(t)}}
\newcommand{\pt}{p{(t)}}
\newcommand{\pbt}{p{(t)}}
\newcommand{\bt}{b{(t)}}
\newcommand{\fz}{f{(0)}}
\newcommand{\pfz}{p{(0)}}
\newcommand{\pbz}{p{(0)}}
\newcommand{\bz}{b{(0)}}

\newcommand{\ftt}{f}
\newcommand{\pftt}{p}
\newcommand{\pbtt}{p}
\newcommand{\btt}{b}

\textbf{The conflict between leakage, change in flow and non-linearity.}
To understand the dynamics for other decision rules beyond the linear rule, we consider a  family of decision rules $\mathcal{F}$ satisfying certain assumptions. This family is a subset of the family $\mathcal{G}$ discussed above. These rules distribute the forward flow among the outgoing edges of any vertex, and backward flow among the the incoming edges based on the normalized pheromone levels at these edges (as defined below). That is, allocation of flow does not depend on the absolute pheromone levels. 
We also assume that these rules are monotone in the sense that increasing the normalized pheromone level at any edge does not decrease the proportion of flow entering that edge. 
Most decision rules considered in the past \autocite{chandrasekhar2018distributed} satisfy these conditions.

Here, the normalized pheromone level is defined as follows:
\begin{equation}
\nfp{uv}(t) = \frac{p_{uv}(t)}{\sum\limits_{z: (u, z) \in E} p_{uz}(t)} \ \text{, } \
\nbp{uv}(t) = \frac{p_{uv}(t)}{\sum\limits_{z: (z, v) \in E} p_{zv}(t)}
\end{equation}
Note that while the pheromone level on an edge has no associated direction, there is a forward and backward normalized pheromone level for each edge depending on whether the normalization is done with respect to the incoming edges or the outgoing edges.

More formally, we consider the family of rules  given by functions 
$g:[0,1/2] \rightarrow [0,1] ~ \in \setflb$.  For any vertex with out-degree (in-degree) $2$,   a rule in this family uses a function $g$, which takes as input the minimum of the  normalized pheromone levels on the two outgoing (incoming)  edges, and returns the fraction of the forward (backward) flow on this edge. In other words, $g(x)$ is the fraction of flow sent on edge with normalized pheromone level $x$, for $x < 0.5$. We assume that  $g$ is monotonically increasing, that is, $g(x) \leq g(x')$ for all  $x \leq x'$. Further, $g$ satisfies $g(0)=0$ and $g(1/2)=1/2$. This condition says that if one of the edges has $0$ normalized pheromone, there is no flow on it, and if both the edges have equal pheromone level, there is equal flow on them. For any vertex with out-degree (in-degree) $1$, the forward (backward) flow goes to the next (previous) vertex, just as in the case of the linear rule. For our results below, we need to define these rules for only degree $1$ and $2$ vertices.

\begin{comment}
\begin{itemize}
    \item The rule $g$ is symmetric, that is, the same rule apply to both $s$ and $d$.
    \item The rule $g:[0,1/2] \rightarrow [0,1]$ is pheromone ratio based, meaning at all time steps it takes the minimum of the two normalized pheromone levels on edges belonging to the two paths and returns the fraction of the flow on the minimum edge.
    \item The rule $g$ is monotonically non-decreasing, meaning $g(x) \leq g(x')$ for all $x,x' \in [0,1/2]$ and $x \leq x'$. Further, $g$ satisfies $g(0)=0$ and $g(1/2)=1/2$.
\end{itemize}
\end{comment}

Note that the linear decision rule studied in previous subsection belongs to $\setflb$, and corresponds to $g(x) = x$.

We study the dynamics when the decision rule belongs to the family of decision rules defined above
 and is non-linear. We show that for every non-linear rule $g \in \setflb$, there exists a setting of leakage parameters in which the dynamics fails to converge to the path with the minimum leakage.

\begin{restatable}{theorem}{secondprop}
\label{prop:two}
Consider any graph $G$ consisting of two parallel paths from $s$ to $d$. When the incoming flow is fixed, for every non-linear decision rule $g \in \setflb$, there exists a setting of leakage parameters and initial pheromone and flow levels dependent on $g$, such that the dynamics does not converge to the path with the minimum leakage. 
 
\end{restatable}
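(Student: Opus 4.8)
The idea is to reduce the coupled pheromone–flow dynamics to a one–dimensional map on the forward flow split at $s$, and then, for every non-linear $g$, to tune the leakage so that this map acquires an \emph{interior} fixed point; monotonicity of the map then traps an open set of initial conditions away from the minimum–leakage path.

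First I would identify the steady states. With the incoming flows fixed, a fixed point of the pheromone update $p_{uv}(t+1)=\delta(p_{uv}(t)+\fflow{uv}(t)+\bflow{uv}(t))$ satisfies $p^{*}_{uv}=\tfrac{\delta}{1-\delta}(\fflow{uv}+\bflow{uv})$, so at equilibrium the pheromone on an edge is proportional to the total (forward plus backward) flow through it. For two parallel paths the only branching vertices are $s$ (forward) and $d$ (backward), so a steady state is pinned down by the split fractions $\phi,\psi$ of the forward flow at $s$ and backward flow at $d$ routed onto $P_1$. Writing $\pi_i=1-l_{P_i}=\prod_{v\in P_i\setminus\{s,d\}}(1-l_v)$ and tracking attenuation along each path, the total flow on the first edge of $P_1$ is $\phi\fflow{s}+\psi\bflow{d}\pi_1$ and on that of $P_2$ is $(1-\phi)\fflow{s}+(1-\psi)\bflow{d}\pi_2$; the last edges are symmetric. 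Taking $\fflow{s}=\bflow{d}$ and seeking the symmetric steady state $\phi=\psi=:q$, both the forward split at $s$ and the backward split at $d$ are governed by the same normalized pheromone $\theta_1(q)=\tfrac{cq}{cq+1-q}$ with $c=\tfrac{1+\pi_1}{1+\pi_2}$, and self–consistency becomes the scalar equation
\[
q=h(q),\qquad h(q)=\begin{cases} g\!\big(\theta_1(q)\big), & \theta_1(q)\le \tfrac12,\\[2pt] 1-g\!\big(1-\theta_1(q)\big), & \theta_1(q)> \tfrac12,\end{cases}
\]
where $h:[0,1]\to[0,1]$ is continuous and monotonically increasing (a composition of increasing maps), with $h(0)=0$, $h(1)=1$; convergence to the minimum–leakage path $P_1$ is exactly $q\to1$.

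\textbf{The construction.} Since $g\ne\mathrm{id}$ on $(0,\tfrac12)$ and $g$ is continuous with $g(\tfrac12)=\tfrac12$, there is $\theta^{*}\in(0,\tfrac12)$ with $g(\theta^{*})\ne\theta^{*}$, and, relabelling the paths if necessary, I may assume $g(\theta^{*})<\theta^{*}$. Solving $\theta_1(q^{*})=\theta^{*}$ together with $q^{*}=g(\theta^{*})$ for the free parameter $c$ gives
\[
c=\frac{\theta^{*}\big(1-g(\theta^{*})\big)}{g(\theta^{*})\big(1-\theta^{*}\big)}>1,
\]
so $\pi_1>\pi_2$ and $P_1$ is indeed the strictly minimum–leakage path. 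Because $g(\tfrac12)=\tfrac12$, I can choose $\theta^{*}$ with the gap $\theta^{*}-g(\theta^{*})$ arbitrarily small (for instance taking $\theta^{*}$ near $\tfrac12$, or near the boundary of $\{g=\mathrm{id}\}$), which forces $c\to1^{+}$ and in particular $c\in(1,2)$; any such $c$ is realizable by a genuine leakage profile, since $c=\tfrac{1+\pi_1}{1+\pi_2}$ sweeps $(1,2)$ as $\pi_1,\pi_2\in(0,1)$ with $\pi_1>\pi_2$. By construction $q^{*}=g(\theta^{*})\in(0,\tfrac12)$ is an interior fixed point of $h$, and since $\theta_1$ is increasing we have $\theta_1(q)\le\theta^{*}<\tfrac12$ for all $q\le q^{*}$, so $h(q)=g(\theta_1(q))$ throughout $[0,q^{*}]$.

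\textbf{From an interior fixed point to non–convergence, and the main obstacle.} I would turn this interior fixed point into a non–convergence statement using monotonicity alone, avoiding any delicate derivative or stability computation: since $h$ is increasing and continuous on $[0,q^{*}]$ with $h(0)=0$ and $h(q^{*})=q^{*}$, any orbit started at $q_0\in[0,q^{*}]$ stays in $[0,q^{*}]$ (as $q_t<q^{*}\Rightarrow h(q_t)<h(q^{*})=q^{*}$) and is monotone, hence converges to a fixed point $\le q^{*}<1$. Thus there is a nonempty open set of initial pheromone configurations — those whose forward split at $s$ starts in $[0,q^{*}]$, while $P_1$ still carries strictly positive pheromone, so the hypothesis of \Cref{thm:ub-leakage} holds — from which $\nfp{uv}(t)\not\to1$ on $P_1$; the dynamics fail to converge to the minimum–leakage path, in sharp contrast to the linear rule. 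The case $g(\theta^{*})>\theta^{*}$ is identical after swapping $P_1$ and $P_2$. The genuinely hard part is the first step: rigorously justifying that the full high–dimensional dynamics (all edge pheromones together with both flow fields) are faithfully summarized by the scalar monotone map $h$ on the relevant invariant region. I expect to establish this by reusing the quasi–stationary analysis of the normalized–pheromone recursion developed for \Cref{thm:ub-leakage}, showing the pheromone ratios track their flow–induced steady values and that the induced scalar dynamics inherit monotonicity; a secondary bookkeeping point is to discharge the realizability range of $c$ and to extend the argument to unequal incoming flows, where one locates an interior fixed point of the two–dimensional self–consistency in $(\phi,\psi)$ by the same off–diagonal construction.
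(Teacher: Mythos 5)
Your construction is in the same spirit as the paper's: you locate a point where $g$ deviates from the identity and tune the leakage ratio so that a ``bad'' split is self-consistent, exactly as the paper picks $r$ with $g(r+s)<r+s$ (or $>$) for all $0<s\leq\epsilon$ and imposes $\frac{1-l_{P_1}}{1-l_{P_2}} \leq 1 + c_g\,\frac{\fflow{s}}{\bflow{d}}$ with $c_g$ proportional to the gap $(r+\epsilon)-g(r+\epsilon)$. However, your argument for \emph{non-convergence} rests entirely on the reduction of the true dynamics to the scalar monotone map $h$, and that reduction is never established --- you yourself flag it as ``the genuinely hard part.'' This is a genuine gap, not a bookkeeping point. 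The actual state is not the split fraction $q$: the pheromone on each branching edge is a discounted sum of \emph{past} flows, the backward flow reaching $s$ at time $t$ was split at $d$ at times $t-len_{P_1}+1$ and $t-len_{P_2}+1$ (two different delays), and along a non-equilibrium trajectory the pheromone levels are nowhere near the steady values $\frac{\delta}{1-\delta}(f+b)$ that your map $\theta_1$ presupposes. Your trapping step ($q_t< q^{*}\Rightarrow h(q_t)<q^{*}$) is sound only for the exact scalar iteration; once tracking errors are added, an orbit can cross $q^{*}$, and if $h(q)>q$ just above $q^{*}$ it then escapes toward $1$ --- so the conclusion is not robust to the very approximation you would need to justify. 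Nor can the missing lemma be ``reused'' from \Cref{thm:ub-leakage}: the analysis there proves monotone growth of a pheromone-ratio potential, not a shadowing or quasi-stationarity statement that trajectories track fixed points of a reduced map.

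The paper closes exactly this hole by a different mechanism: it never passes to steady states at all. Instead it exhibits an explicit invariant region of the \emph{full} state space and proves invariance by induction on $t$: at every time, the forward and backward flows on every edge of both paths (including the in-transit flows, which is how the delays are handled) satisfy explicit inequalities, and the normalized pheromone on the min-leakage path's branching edges stays $\leq r+\epsilon < 1/2$. The induction step works because the non-linearity shortfall contributes a fixed slack $\approx c_\epsilon\,\fflow{s}$ (with $c_\epsilon=(r+\epsilon)-g(r+\epsilon)>0$) to the relevant inequality, and the leakage parameters are chosen so that the competing advantage of the min-leakage path, of order $\left(\frac{1-l_{P_1}}{1-l_{P_2}}-1\right)2\bflow{d}$, is at most half that slack; the strict surplus absorbs all coupling and delay effects uniformly in time. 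To repair your proof you would have to prove an invariance statement of essentially this form for the true dynamics --- at which point the scalar map $h$ and its fixed-point analysis become unnecessary scaffolding.
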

We show an analogous result for the increasing flow case with no leakage.
\begin{restatable}{theorem}{thirdprop}
\label{prop:nl-incflow}
Consider  any graph $G$ consisting of two parallel paths from $s$ to $d$ with a unique shortest path. When the leakage is zero for all the vertices, for every non-linear decision rule $g \in \setflb$, there exists a setting of initial pheromone and flow levels, with   incoming flow increasing by a fixed multiplicative factor at each time step, such that the dynamics does not converge to the shortest path. The multiplicative factor and initial pheromone and flow levels are chosen as a function of $g$. 

\end{restatable}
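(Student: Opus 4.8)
The plan is to reduce the full dynamics to a one-dimensional map on the normalized pheromone at the branch, and then exploit the non-linearity of $g$ to trap the trajectory away from the shortest-path configuration. Throughout, let $P_1$ be the unique shortest path, so $len_{P_1} < len_{P_2}$, let all $l_v = 0$, and take the multiplicative increase $\fflow{s}(t) = \alpha^t\fflow{s}(0)$, $\bflow{d}(t) = \alpha^t\bflow{d}(0)$ with $\alpha > 1$ to be fixed later as a function of $g$. On two parallel paths the only decision points are $s$ (for the forward flow) and $d$ (for the backward flow); every internal vertex has in- and out-degree $1$ and merely forwards its flow with a delay equal to the number of remaining hops. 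I would track $x(t) := \nfp{uv}(t)$ for the first edge $(u,v)$ of $P_1$ at $s$; by the forward/backward symmetry of the update (and taking the two injected flows equal), the same quantity governs the split at $d$. Positing the steady state in which every pheromone level grows like $\alpha^t$, the self-consistency equation for the limiting value of $x$ collapses to the scalar fixed-point map
\[ G_r(x) \;=\; \frac{\phi(x)\,r}{\phi(x)\,r + \bigl(1-\phi(x)\bigr)}, \qquad r \;=\; \frac{1+\alpha^{-(len_{P_1}-1)}}{1+\alpha^{-(len_{P_2}-1)}}, \]
where $\phi(x)=g(x)$ for $x\le 1/2$ and $\phi(x)=1-g(1-x)$ for $x>1/2$ is the fraction of forward flow that $g$ sends onto $P_1$. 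The factor $r$ records that the (delayed) backward contribution to an edge is discounted by $\alpha^{-(\text{hops})}$, and hence less for the shorter path.

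The two features I would exploit are: $G_r$ is continuous and non-decreasing on $[0,1]$ with $G_r(0)=0$, $G_r(1)=1$ (so ``convergence to the shortest path'' is precisely $x\to 1$); and although $r>1$ for every $\alpha>1$ --- this positive bias is exactly what drives the linear rule to $P_1$ in \Cref{thm:ub-changing-flow} --- one has $r(\alpha)\to 1$ as $\alpha\to\infty$, so by choosing $\alpha$ large we may make $r$ as close to $1$ as we wish. The strategy is to produce a \emph{barrier}: a point $x_1\in(0,1)$ with $G_r(x_1)\le x_1$. Since $G_r$ is non-decreasing, $x\le x_1$ then forces $G_r(x)\le G_r(x_1)\le x_1$, so $[0,x_1]$ is forward invariant and a monotone orbit started there converges to a fixed point $\le x_1<1$ --- never to the shortest-path corner $x=1$.

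To locate the barrier I would use that $g$ is non-linear: there is $x_0\in(0,1/2)$ with $g(x_0)\ne x_0$. If $g(x_0)<x_0$, set $x_1=x_0$; a short computation shows $G_r(x_1)\le x_1$ is equivalent to $\frac{g(x_0)}{1-g(x_0)}\,r\le \frac{x_0}{1-x_0}$, which holds strictly at $r=1$ and hence for all $r$ sufficiently close to $1$. If instead $g(x_0)>x_0$, set $x_1=1-x_0\in(1/2,1)$; here $G_r(x_1)\le x_1$ is equivalent to $g(x_0)\ge \frac{r\,x_0}{1+(r-1)x_0}$, whose right-hand side tends to $x_0<g(x_0)$ as $r\to 1$, so it again holds for $r$ close to $1$. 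In either case I fix $\alpha$ (as a function of $g$, through $x_0$) so that the required inequality holds, and start the process from any pheromone configuration with positive pheromone on $P_1$ and $x(0)\le x_1$ (in the first case this means the longer path is initially favoured). The reduced dynamics are then trapped in $[0,x_1]$, the normalized pheromone on $P_1$ stays bounded away from $1$, and convergence to the shortest path fails.

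The main obstacle is rigour of the reduction. The genuine process is high-dimensional and carries delays $len_{P_1}-1$ and $len_{P_2}-1$ that couple the forward split at $s$ to the backward split at $d$; the clean scalar map $G_r$ arises only from a steady-state ansatz and the forward/backward symmetry. The real work is to lift the monotone trapping argument to the true system: either by proving that the pheromone vector converges to the $\alpha^t$-steady state with the predicted split (so that $\nfp{uv}(t)$ and $\nbp{uv}(t)$ converge to a fixed point of $G_r$ bounded below $1$), or by building a forward-invariant region directly for the pair of normalized pheromones at $s$ and $d$ and showing it excludes a neighbourhood of the all-$P_1$ configuration. Verifying that monotonicity, and hence the barrier, survive the transients created by the two different delays is the delicate step; this is also where the analogous increasing-flow half of \Cref{thm:ub-changing-flow} does its heavy lifting, and I would expect to reuse those estimates.
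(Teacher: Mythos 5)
Your core mechanism is the same one the paper uses: pick $x_0$ with $g(x_0)\neq x_0$, let the non-linearity create a region the dynamics cannot leave, and tune the multiplicative factor $\alpha$ so that the length-induced bias of the delayed backward flow is too weak to break the trap; your algebra for the barrier of the scalar map $G_r$ is also correct as far as it goes. The genuine gap is the one you flag yourself and then defer: the reduction to $G_r$ is only a steady-state ansatz, and lifting the trapping argument to the true system is not a loose end --- it is essentially the entire content of the paper's proof. The paper never forms a one-dimensional map. Instead, for each of the two cases $g(r+s)<r+s$ and $g(r+s)>r+s$, it exhibits a forward-invariant set for the \emph{full} state: a bound on the normalized pheromone at both branch points (e.g.\ $\nfp{ss_1}(t)\le r+\epsilon$ and its analogue at $d$ in the first case), together with time-shifted bounds on the flow of \emph{every} edge of both paths, of the form $\fflow{v_iv_{i+1}}(t)\le (r+\epsilon)\,\fflow{s}(t-i)$ and $\bflow{v_iv_{i+1}}(t)\le (r+\epsilon)\,\bflow{d}(t-(len_{P_1}-1-i))$ (reversed, as lower bounds on the longer path, in the second case), and verifies by induction that one step of the dynamics preserves all of these bounds simultaneously. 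The delays, the coupling between the splits at $s$ and $d$, and the memory term $\delta p_{uv}(t)$ in the pheromone update are exactly what this invariant absorbs; the admissible $\alpha$ emerges from the induction as the requirement that the excess over $1$ of the delayed-flow ratio $\bflow{d}(t-len_{P_1}+1)/\bflow{d}(t-len_{P_2}+1)=\alpha^{\,len_{P_2}-len_{P_1}}$ be at most a constant multiple of the non-linearity gap $c_\epsilon=\abs{g(r+\epsilon)-(r+\epsilon)}$. Until you supply this (or an equivalent) argument, nothing excludes transients carrying the normalized pheromone past your barrier, so what you have is a plan rather than a proof.

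Separately, your specific choice of $\alpha$ has a concrete flaw. You kill the bias by sending $\alpha\to\infty$, using $r(\alpha)=\bigl(1+\alpha^{-(len_{P_1}-1)}\bigr)/\bigl(1+\alpha^{-(len_{P_2}-1)}\bigr)\to 1$; but this limit requires $len_{P_1}\ge 2$. If the unique shortest path is the single edge $(s,d)$ --- an admissible instance of the proposition --- its backward reinforcement arrives with no delay, so $r(\alpha)\to 2$ as $\alpha\to\infty$, and your barrier inequalities (e.g.\ $\frac{g(x_0)}{1-g(x_0)}\,r\le\frac{x_0}{1-x_0}$) fail for all large $\alpha$ whenever the deviation of $g$ from linearity is small; in that regime $G_r$ in fact pushes every interior point toward $1$, so no barrier exists at all. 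The paper takes the opposite limit: $\alpha$ close to $1$, formally $\alpha^{\,len_{P_2}-len_{P_1}}\le 1+c_g$ with $c_g$ depending only on $g$, which makes the bias uniformly small over all pairs of path lengths (note $r(\alpha)\to 1$ as $\alpha\to 1^+$ in every case, including $len_{P_1}=1$). Adopting that choice repairs your barrier on every instance, but the lifting problem of the previous paragraph remains the real work.
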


For an intuitive explanation for these results, consider the quadratic decision rule that distributes the flow in proportion to the square of the pheromone levels. Due to the square in the quadratic decision rule, given two edges incident on a vertex, this rule sends more than linearly proportional flow on the edge with the higher pheromone. Thus, if a path---that is not necessarily the shortest or the minimum leakage path---has relatively high pheromone initially, even more pheromone accumulates on it as time progresses, and the dynamics may not converge to the shortest or the minimum leakage path. Theorem \ref{prop:two} and \ref{prop:nl-incflow} formalize this intuition for any non-linear decision rule belonging to the family $\mathcal{F}$. 

From the last subsection, we can view  increasing flow and leakage as two conflicting forces, preferring the shortest and the minimum leakage path respectively.  The above results suggest that non-linearity in the decision rule can be viewed as another force, in conflict with the forces of leakage and increasing flow, preferring certain states that may not correspond to the shortest or the minimum leakage path.

\begin{figure}[h]
    \centering
    \begin{subfigure}[b]{0.485\textwidth}
        \centering
        \includegraphics[width=\textwidth]{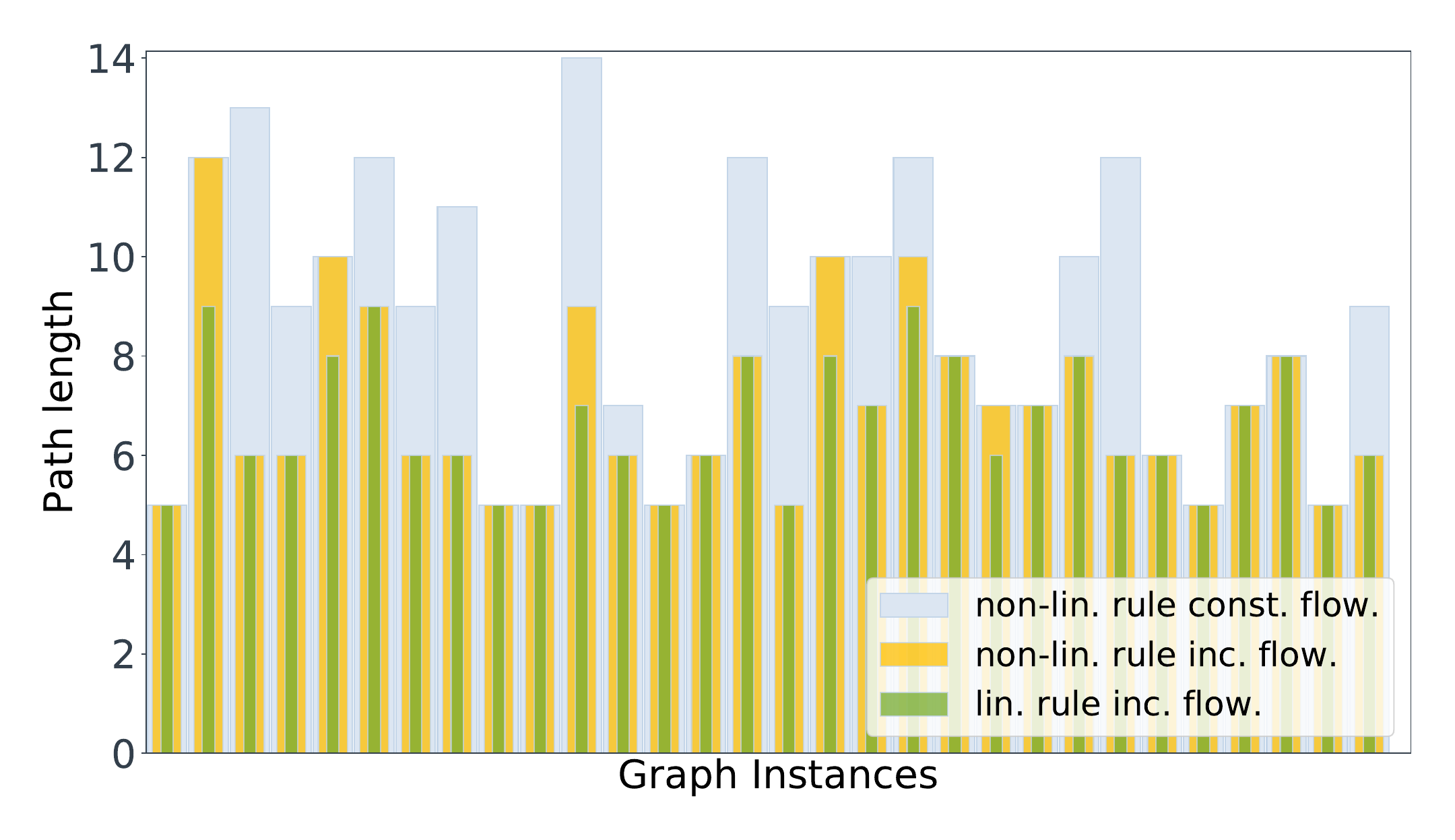}
        \caption{}
        \label{fig:non-lin1}
    \end{subfigure}
    \quad
    \begin{subfigure}[b]{0.485\textwidth}  
        \centering 
        \includegraphics[width=\textwidth]{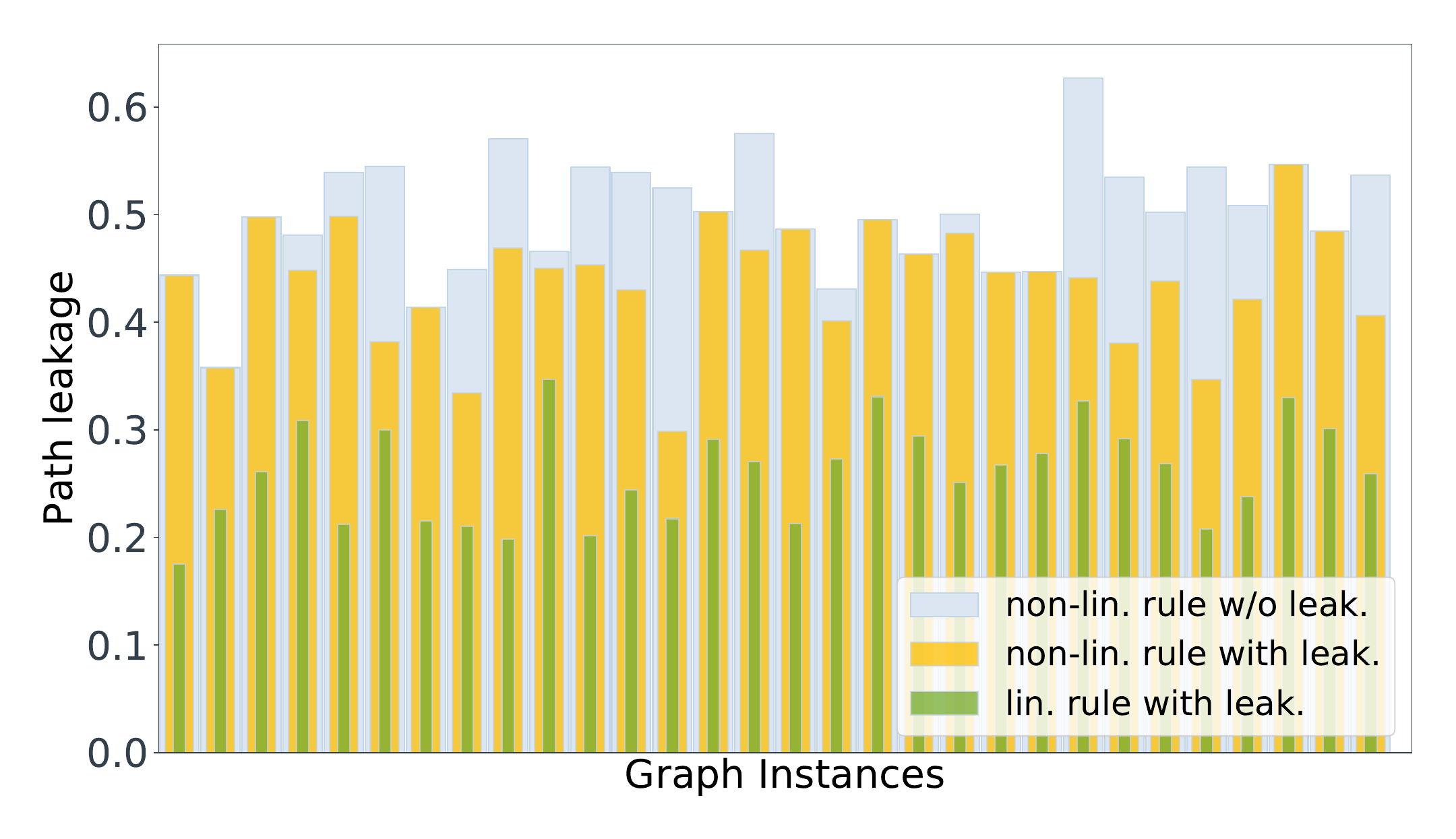}
        \caption{}
        \label{fig:non-lin2}
    \end{subfigure}
    \caption{Effect of leakage and increasing flow with a non-linear decision rule. (a) Path length obtained  with a non-linear decision rule with and without increasing flow, and the linear decision rule with increasing flow. (b) Path leakage obtained with a non-linear decision rule with and without leakage present at vertices, and linear decision rule with leakage present. \vspace{-5pt}}
    \label{fig:non-lin}
\end{figure}

\vspace{-8pt}
\paragraph{Usefulness of increasing flow and leakage not limited to the linear decision rule. }
Note that the above results only suggest that linear decision rule is necessary for \emph{guaranteed convergence} to the shortest or the minimum leakage path. Can it be the case that even with some non-linear decision rules, the forces of increasing flow and leakage still help in finding shorter or smaller leakage paths respectively, compared to the paths found in the absence of these forces? To understand this question, we ran simulations for various non-linear decision rules, some of which have been previously used to model ant behaviour \cite{chandrasekhar2018distributed, deneubourg1990self}.
We observe that within each graph family, for a large fraction of the graph instances, the path found in the presence of these forces has length (respectively leakage) smaller than or equal to the length (respectively leakage) of the path found in their absence. 

Figure \ref{fig:non-lin} demonstrates this for the quadratic decision rule for $G(n, p)$ graphs with the locality constraint. This rule  divides the flow in proportion to the square of the  pheromone levels. 

% Figure \ref{fig:non-lin} demonstrates this for the quadratic-with-offset decision rule \cite{ deneubourg1990self} for $G(n, p)$ graphs with the locality constraint. This rule adds a fixed positive constant to the pheromone levels, and divides the flow in proportion to the square of the offsetted pheromone levels. 

In Figure \ref{fig:non-lin1}, we show the path length obtained by the quadratic  decision rule with and without increase in flow, and by the linear rule with increase in flow, for 30 random graph instances. As discussed before, the linear rule finds the shortest path. But even with the quadratic rule, the path length obtained in the presence of increasing flow is smaller than or equal to the length obtained in its absence.

Similarly, Figure \ref{fig:non-lin2} shows the path leakage obtained by the quadratic decision rule with and without leakage present at the vertices, and by the linear decision rule with leakage present. There is a subtle distinction here between path leakage as an objective function and leakage as a process affecting the dynamics. For each graph instance, we assign leakage values to vertices (see supplementary material Appendix \ref{sec:app_sim_details} for details). This gives us a path leakage objective function which we measure in all the three cases. However, in the case of the quadratic rule without leakage, the leakage process is not applied at vertices during the dynamics. This gives a baseline to which we compare the path leakage objective when the leakage process is applied.
We observe that the quadratic rule with leakage applied leads to path leakage objective smaller than or equal to the baseline, while the linear rule with leakage applied minimizes the objective as discussed in Section \ref{sec:parallel_paths_ub}.

We observe similar results for other graph families and non-linear decision rules. However, the extent to which the forces of increasing flow and leakage are effective varies with the non-linear decision rule and graph family. 
For instance, we observe that compared to the quadratic rule, these forces are more effective for a non-linear rule closer to the linear rule, dividing the flow in proportion to the $1.1^{\text{th}}$ power of the pheromone levels.
Also, for most graph families and non-linear decision rules considered, 
there is a small fraction of instances where these forces end up increasing the path length (respectively path leakage).  Nonetheless, for \emph{all} the graph families and non-linear decision rules considered, for most (> 80\%) graph instances, the path found in the presence of these forces has length (respectively leakage) smaller than or equal to the length (respectively leakage) of the path found in their absence. Thus the usefulness of the forces of leakage and increasing flow is not limited to the linear decision rule. We include more details and discussion of these simulations in supplementary material Appendix \ref{sec:app_sim_details}.

\section{Proof Ideas}
\label{sec:proof_overview}

\subsection{Linear Decision Rule with Increasing Flow}
For a graph consisting of parallel paths $P_1$ and $P_2$, with $len_{P_1} < len_{P_2}$ (see Figure \ref{fig:path_graph}),  \Cref{thm:ub-changing-flow} says that the dynamics converges to $P_1$ when the incoming flow increases multiplicatively or additively with time. To prove this, we show that as time progresses, relatively more pheromone is accumulated on $P_1$ than $P_2$. 

Let $s_1$ and $s_2$ be the neighboring vertices of $s$, and $d_1$ and $d_2$ be the neighboring vertices of $d$, on path $P_1$ and $P_2$ respectively. 
Note that only the pheromone level on edges $(s, s_1), (s, s_2), (d_1, d), (d_2, d)$ affects the dynamics for this graph.   Consider the ratio of pheromone levels on $(s, s_1)$ and $(s, s_2)$ at time $(t+1)$:
\begin{align}
    \frac{p_{ss_1}(t+1)}{p_{ss_2}(t+1)} &= \frac{\delta(p_{ss_1}(t) + \fflow{ss_1}(t)  + \bflow{ss_1}(t))}{\delta(p_{ss_2}(t) + \fflow{ss_2}(t)  + \bflow{ss_2}(t))}\\
    &= \frac{p_{ss_1}(t) + \fflow{ss_1}(t)  + (1 - l_{P_1})\bflow{d_1d}(t- len_{P_1}+1)}{p_{ss_2}(t ) + \fflow{ss_2}(t)  + (1 - l_{P_2})\bflow{d_2d}(t - len_{P_2}+1)} \label{eq:proof_idea_int_8}\\
    &= \frac{p_{ss_1}(t) + \fflow{ss_1}(t)  + \bflow{d}(t- len_{P_1}+1) \nbp{d_1d}(t - len_{P_1}+1)}{p_{ss_2}(t ) + \fflow{ss_2}(t)  + \bflow{d}(t - len_{P_2}+1)\nbp{d_2d}(t - len_{P_2}+1)}
\end{align}
where in the last equation, we set leakage $l_{P_1} = l_{P_2} = 0$, and write the backward flow in terms of the normalized pheromone level. For simplicity, let us assume $\bflow{d}(t) = \alpha^t$, for some $\alpha > 1$. This gives
\begin{align}
    \frac{p_{ss_1}(t+1)}{p_{ss_2}(t+1)}
    &= \frac{p_{ss_1}(t) + \fflow{ss_1}(t)  + \alpha^{(t - len_{P_1}+1)} \nbp{d_1d}(t - len_{P_1}+1)}{p_{ss_2}(t ) + \fflow{ss_2}(t)  + \alpha^{(t - len_{P_2}+1)}\nbp{d_2d}(t - len_{P_2}+1)}
\end{align}
As $len_{P_1} < len_{P_2}$, we know $\alpha^{(t - len_{P_1}+1)} > \alpha^{(t - len_{P_2}+1)}$. These backward flow terms, $\alpha^{(t - len_{P_1}+1)}$ and $\alpha^{(t - len_{P_2}+1)}$, are the main reason why relatively more pheromone accumulates on $ss_1$ compared to $ss_2$ as time progresses. However, the ratio $\frac{p_{ss_1}(t)}{p_{ss_2}(t)}$ may not increase monotonically at each time step.
%, which makes analysis of progression of pheromone ratio tricky.  
To circumvent this issue, we carefully construct a potential function which increases monotonically with time.
Our potential function is  given by the minimum of the ratio of the pheromone levels $\frac{p_{ss_1}(t)}{p_{ss_2}(t)}$ and $\frac{p_{d_1d}(t)}{p_{d_2d}(t)}$ across the last $max(len_{P_1}, len_{P_2})$ time steps. Let $r_{ss_1}(t) \defeq \frac{p_{ss_1}(t)}{p_{ss_2}(t)}$, $r_{d_1 d}(t) \defeq \frac{p_{d_1d}(t)}{p_{d_2d}(t)}$, and $L \defeq max(len_{P_1}, len_{P_2})$. Our potential function is given by 
$$r_{min}(t) \defeq min\{r_{ss_1}(t), r_{ss_1}(t-1), \cdots, r_{ss_1}(t-L+1), r_{d_1d}(t), r_{d_1d}(t-1), \cdots, r_{d_1d}(t-L+1)   \}. $$
Using the definition of the linear decision rule, we know that $\frac{\fflow{ss_1}(t)}{\fflow{ss_2}(t)} = \frac{p_{ss_1}(t)}{p_{ss_2}(t)} = r_{ss_1}(t) \geq r_{min}(t)$. 
Further, it can be shown that $\nbp{d_1d}(t - len_{P_1}+1) \geq  \frac{r_{min}(t)  }{1+r_{min}(t)}$, and $\nbp{d_2d}(t - len_{P_2}+1) \leq  \frac{1 }{1+r_{min}(t)}$, which implies $\frac{\nbp{d_1d}(t - len_{P_1}+1)}{\nbp{d_2d}(t - len_{P_2}+1)} \geq r_{min}(t)$. These inequalities give us
\begin{align}
   \frac{p_{ss_1}(t+1)}{p_{ss_2}(t+1)} &\geq r_{min}(t)\frac{p_{ss_2}(t) + \fflow{ss_2}(t)  + \alpha^{(t - len_{P_1}+1)}\nbp{d_2d}(t- len_{P_2}+1)}{p_{ss_2}(t ) + \fflow{ss_2}(t)  + \alpha^{(t - len_{P_2}+1)}\nbp{d_2d}(t - len_{P_2}+1)}
   > r_{min}(t)
\end{align}
where we used $\alpha^{(t - len_{P_1}+1)} > \alpha^{(t - len_{P_2}+1)}$ for the last inequality. This gives us $\frac{p_{ss_1}(t+1)}{p_{ss_2}(t+1)} = r_{ss_1}(t+1) > r_{min}(t)$. Thus the backward flow terms $\alpha^{(t - len_{P_1}+1)}$ and $\alpha^{(t - len_{P_2}+1)}$ ensure that the pheromone ratio at the edges incident on $s$ at time $t+1$ is greater than $r_{min}(t)$, the minimum of the pheromone ratios at the edges incident on $s$ and $d$ across last $L$ time steps.    Similarly, the forward flow ensures $r_{d_1d}(t+1) > r_{min}(t)$. This implies that $r_{min}(t)$ never decreases, and strictly increases every $L$ time steps. We use this to show convergence to the shortest path. 

The proof for the case involving leakage with fixed flow (\Cref{thm:ub-leakage}) is similar and uses the same potential function. The only difference is that in this case, the potential function goes up due to the leakage terms $1-l_{P_1}$ and $1-l_{P_2}$ (Equation \ref{eq:proof_idea_int_8}), instead of the $\alpha^{(t - len_{P_1}+1)}$ and $\alpha^{(t - len_{P_2}+1)}$ terms.

% $\alpha^{(t - len_{P_1}+1)}$ and $\alpha^{(t - len_{P_2}+1)}$ terms would ensure  $\frac{p_{ss_1}(t+1)}{p_{ss_2}(t+1)} > \frac{p_{ss_1}(t+1)}{p_{ss_2}(t+1) }$, if $\frac{\nbp{d_1d}(t- len_{P_1}+1)}{\nbp{d_2d}(t - len_{P_2}+1)} \geq \frac{p_{ss_1}(t)}{p_{ss_2}(t)}$. However, this is not true in general, and the ratio of pheromone levels $p_{ss_1}$ and $p_{ss_2}$ does not increase monotonically. To get past this obstacle, similar to the proof for leakage with fixed flow case, we consider the potential function $r_{min}(t)$, given by the minimum of the ratio of the pheromone levels $\frac{p_{ss_1}(t)}{p_{ss_2}(t)}$ and $\frac{p_{d_1d}(t)}{p_{d_2d}(t)}$ across the last $max(len_{P_1}, len_{P_2})$ time steps. We show that this potential function never decreases and increases sufficiently every $max(len_{P_1}, len_{P_2})$ time steps. 

\subsection{ General Rules and Fundamental Limits}
%Here we provide the proof sketch for all the results stated in \Cref{sec:general_rules_lb}. 

In \Cref{prop:one}, we claim that for any pheromone based rule, bi-directional flow is necessary for convergence to the shortest or the minimum leakage path.  When there is unidirectional flow from $s$ to $d$, the pheromone levels on the edges incident on $s$ is only a function of their initial pheromone levels and  forward flow at $s$. It does not depend on the flow and pheromone levels on the rest of the graph. Therefore, in the case of two parallel paths, for a given decision rule and initial setting of the pheromone levels, if the dynamics converges to a particular path, then they will converge to the other path if we swap the initial pheromone levels on the two edges incident on $s$. Hence, we can always set the initial pheromone levels on the edges incident on $s$, such that the dynamics does not converge to the shortest or the minimum leakage path.

Next,  we provide a proof sketch for \Cref{prop:two}. The proof sketch for \Cref{prop:nl-incflow} is similar. Let $s_1$ and $s_2$ be the neighboring vertices of $s$, and $d_1$ and $d_2$ be the neighboring vertices of $d$, on path $P_1$ and $P_2$ respectively (see Figure \ref{fig:path_graph}). Let $P_1$ be the minimum leakage path.
We would show that the dynamics is not guaranteed to converge to path $P_1$ for non-linear $g \in \setflb$. For any non-linear $g \in \setflb$, consider some $r \in (0, 1/2)$  such that $g(r)\neq r$. Such an $r$ exists because $g$ is non-linear. Consider the two cases: 1) $g(r)< r$, 2) $g(r)>r$. 

Suppose $g$ is such that $g(r)< r$. Consider an instance where initial normalized pheromone levels $\nfp{ss_1}(0)$ and $\nbp{d_1d}(0)$ are at most $r$. And the initial 
flow values on the edges of path $P_1$ and $P_2$ are  at most $r$ and at least $1-r$ respectively. Let the incoming forward and backward flow be equal to $1$ at all times. The decision rule $g$ sends at most $r - c_{g, r}$  amount of forward and backward flow on path $P_1$ (because $g(r)< r$ and $g$ is monotone) and at least $1-r+c_{g, r}$ on path $P_2$, where $c_{g, r}$ is a positive constant dependent on $g$ and $r$. The positive constant $c_{g, r}$ ensures that we can set the leakage levels to be small enough, satisfying $l_{P_1} < l_{P_2}$, such that even after leakage, the flow levels on $P_1$ and $P_2$ remain at most $r$ and at least $1-r$ respectively, throughout the future. And normalized pheromone levels $\nfp{ss_1}$ and $\nbp{d_1d}$  remain at most $r$.
%Now, we can choose the leakage parameters to be small enough so that even after leakage,  the flow levels on path $P_1$ and $P_2$ remain at most $r$ and at least $1-r$ throughout the future and normalized pheromone levels $\nfp{ss_1}$ and $\nbp{d_1d}$  remain at most $r$.  
%Thus, the dynamics never converges to the  minimum leakage path $P_1$. 

Using a similar idea, in the case when $g(r) > r$, we can set the initial flow and pheromone levels and the leakage parameters, such that the normalized pheromone levels $\nfp{ss_2}$ , $\nbp{d_2d}$ and the flow levels on $P_2$ never fall below $r$ and the flow levels on $P_1$ remain at most $1-r$. Thus, the dynamics never converges to the minimum leakage path $P_1$.

\section{Discussion}
Like ant colonies, engineered systems such as molecular robots and swarm computing \autocite{rubenstein2012kilobot, lund2010molecular, brambilla2013swarm, werfel2014designing, hecker2015beyond} involve a large population of individuals lacking central control and equipped with minimal computational resources. Searching for a target \autocite{hoff2013distributed, o2007pervasive}, and in particular, finding the shortest path \autocite{szymanski2006distributed} is a basic task for such systems. Our algorithms based on leakage and increasing flow can also be applied to such swarms of robots equipped with the ability to release and detect pheromone \autocite{na2021bio, 7353405, kurabayashi2009realization, fujisawa2014designing, russell1999ant}, to solve the shortest path problem and its variants. 

Our result on convergence to the shortest path suggests that an ant colony has the ability to discover the shortest path merely by increasing its flow rate. An interesting direction for future research would be to empirically investigate the relationship
between flow rate of ants and path length, and understand whether ant colonies increase flow rates to find short paths.

Our results also open up avenues for further theoretical investigation. While the algorithms designed by humans are often set up so as to be amenable to analysis, nature is not constrained in this way. For seemingly simple models of biological systems, it has thus been notoriously difficult to devise mathematical guarantees on the quality of the solutions produced \autocite{tero2007mathematical, miyaji2007mathematical,miyaji2008physarum, bonifaci2012physarum, chazelle2012natural, chazelle2017challenges, bhattacharyya2013convergence,   chazelle2014convergence, 6375309} .  In our model, analyzing the dynamics is challenging as it involves understanding the progression of pheromone level with time, which is affected by the actions of  a large number of agents (modeled by flow), and can be highly non-monotone even for the simple case of graphs with parallel paths. The behavior of this model in extensive simulations suggests that it should be possible to significantly generalize the results we prove here. In particular, we conjecture that provable convergence to the shortest or the minimum leakage path holds for general graphs (Conjecture \ref{conjecture1}). Another direction for future research is to extend our analysis to the case when multiple terminal vertices are present in the graph, as trail networks in nature usually include many  nests and food sources. 

In summary, our model for how ant trails change over time contributes to the synergistic exchange between biology and computer science, providing a plausible explanation for how turtle ant colonies can find paths that minimize the number of vertices, and suggesting a surprising algorithm for the shortest path discovery, by increasing the flow rate, applicable to distributed engineering systems.

\section*{Acknowledgements}
 Shivam Garg was supported by NSF awards AF-1813049 and AF-1704417, and a Stanford Interdisciplinary Graduate Fellowship. Kirankumar Shiragur was supported by a Stanford Data Science Scholarship and a Dantzig-Lieberman Operations Research Fellowship.  Deborah M. Gordon was supported by the  Templeton Fund.
 Moses Charikar was supported by a Simons Investigator Award, a Google Faculty Research Award and an Amazon Research Award.

\printbibliography
\newpage
\begin{center}
\textbf{\LARGE Supplementary Material}
\end{center}
\vspace{30pt}
\appendix
\newcommand{\lf}{\fflow{s}}
\newcommand{\rf}{\bflow{d}}
\newcommand{\phe}{p_{uv}}
\newcommand{\fluv}{\fflow{uv}}
\newcommand{\flvu}{\bflow{uv}}
\newcommand{\phup}{C_{\lf,\rf,\delta}}
\newcommand{\fsos}{\bflow{ss_1}}
\newcommand{\fsts}{\bflow{ss_2}}
\newcommand{\fsso}{\fflow{ss_1}}
\newcommand{\fsst}{\fflow{ss_2}}
\newcommand{\psso}{p_{ss_1}}
\newcommand{\psst}{p_{ss_2}}
\newcommand{\pddo}{p_{dd_1}}
\newcommand{\pddt}{p_{dd_2}}
\newcommand{\npsso}{\nfp{ss_1}}
\newcommand{\npsst}{\nfp{ss_2}}
\newcommand{\npddo}{\nbp{d_1d}}
\newcommand{\npddt}{\nbp{d_2d}}
\newcommand{\lpo}{l_{P_1}}
\newcommand{\lpt}{l_{P_2}}
\DeclarePairedDelimiter\ceil{\lceil}{\rceil}
\DeclarePairedDelimiter\floor{\lfloor}{\rfloor}

\section{Proof of Convergence}
\label{sec:appendix-ub-proofs}
Here we provide proof for all the results stated in \Cref{sec:parallel_paths_ub}.
\subsection{Fixed flow with different leakage on each path (\Cref{thm:ub-leakage})}
Here, we provide a proof of \Cref{thm:ub-leakage}. We restate it below.
% \begin{theorem*}
% Consider a graph $G$ consisting of two parallel paths $P_1$ and $P_2$ from $s$ to $d$. Let $P_1 $ be the path with the minimum leakage such that $l_{P_1} < l_{P_2}$.  If (1) the incoming forward and backward flow values $\fflow{s}(t)$ and $\bflow{d}(t)$ are non-zero and unchanging with time, and (2) the initial pheromone level ${p}_{uv}(0)$ is positive for any edge $(u, v) \in P_1$, then the flow dynamics governed by the linear decision rule converges to a state in which the forward and backward normalized pheromone levels for any edge $(u, v) \in P_1$ tends to $1$. That is, $\nfp{uv}(t) \geq 1 - \epsilon$ and $\nbp{uv}(t) \geq 1 - \epsilon$, for all $t = \Omega\left(log \left(\frac{1}{\epsilon}\right)\right)$ and $(u, v) \in P_1$.
% \end{theorem*}
\firstthm*
 Let $s_1$, $s_2$ be the neighboring vertices of $s$ that belong to paths $P_1$ and $P_2$ respectively. Similarly let $d_1$ and $d_2$ be the corresponding neighbors for $d$. Let $\lf,\rf >0$ be some fixed forward and backward flow values, such that $\fflow{s}(t) = \lf $ and $\bflow{d}(t) = \rf $ for all $t$.
 Define $r_{ss_1}(t) \defeq \frac{p_{ss_1}(t)}{p_{ss_2}(t)}$ and $r_{d_1 d}(t) \defeq \frac{p_{d_1 d}(t)}{p_{d_2 d}(t)}$ to be the relative pheromone levels at $(s,s_1)$ and $(d_1,d)$ respectively. For notational simplicity, we define $m \defeq len_{P_1}$ and $n \defeq len_{P_2}$ to be the lengths of path $P_1$ and $P_2$,  and $L \defeq max(m, n)$. We also define $\alpha \defeq 1 - l_{P_1}$ and $\beta \defeq 1 - l_{P_2}$.

Our potential function at any time $t \geq L$  is given by the minimum of the relative pheromone levels $r_{ss_1}(t)$ and $r_{d_1d}(t)$ across the last $L$ time steps:
\begin{equation}
    r_{min}(t) \defeq min\{r_{ss_1}(t), r_{ss_1}(t-1), \cdots, r_{ss_1}(t-L+1), r_{d_1d}(t), r_{d_1d}(t-1), \cdots, r_{d_1d}(t-L+1)   \}
\end{equation}

We divide our proof into 3 steps:
\begin{itemize}
    \item \textbf{Step 1}: In this step, we show that $r_{min}(t)$ is non-decreasing at every time step and increases by a factor of $ \gamma(t)$ every $L$ time steps,  for all $t \geq L$. Here, $\gamma(t)$ is some appropriately defined function which is greater than 1 for all $t \geq L$.
    \item \textbf{Step 2}: In this step, we will give a lower bound $\gamma_l > 1$, on $\gamma(t)$, to show that $r_{min}(t)$ increases sufficiently every $L$ time steps. 
    \item \textbf{Step 3}: We will find the rate of convergence based on the rate of increase shown in step 2.
\end{itemize}
\begin{comment}
\textbf{High level idea}: For each flow currently present at any edge of the graph, we associate a number $r$ which is the ratio of pheromone levels when this flow entered the graph. Let $\rmin$ be the minimum value of $r$ among all the flow currently present in the graph. Claim: $\rmin$ is non-decreasing, and it strictly increases every $max(n,m)$ time steps, where $n$ and $m$ are the path lengths.

Let $p_{uv}(t)$ be the flow from node $u$ to $v$ at time $t$, and  let $p_l$ and $\rf$ be fixed flow levels from left and right respectively. Define $r_{uv}(t)$ to be the ratio of pheromone levels at the terminal edges when the flow $p_{uv}(t)$ had entered the graph. For example, let $(u,v)$ be an edge on the path along $(B,D)$, then $r_{uv}(t)$ is the ratio of pheromone levels at $BD$ and $BC$ when the flow $p_{uv}(t)$ entered $BD$, that is,  $p_{uv}(t) = \frac{r_{uv}(t)}{1+r_{uv}(t)} * \rf$. 

Let $\rmin(t) = \min_{(u,v)} r_{uv}(t)$ be the minimum pheromone level ratio among the flow present on all the edges at time $t$. Here, all the edges are directed. We will show that $r_{\text{min}(t)}$ is non-decreasing with $t$ and strictly increases every $max(m,n)$ time steps.
\end{comment}
Now, we give proofs for each of these three steps.

\paragraph{Step 1.}
We will use the following lemma for the proof.
\begin{lemma}\label{lem:helper_min_ratio}
For any time $t \geq L$, the following is true,
$$\frac{ \bflow{d_1d}(t-m+1)}{ \bflow{d_2d}(t-n+1)} \geq \min(r_{d_1d}(t-m+1), r_{d_1 d}(t-n+1)) 
\geq r_{min}(t)~.$$
\end{lemma}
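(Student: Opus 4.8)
The plan is to reduce everything to the explicit form of the backward split at $d$ under the linear rule and then verify an elementary two-variable inequality. First I would record how the incoming backward flow $\rf$ is apportioned at $d$. Since the only edges feeding $d$ are $(d_1,d)$ and $(d_2,d)$, at any time $\tau$ the linear rule gives $\bflow{d_1d}(\tau) = \rf \cdot \frac{p_{d_1d}(\tau)}{p_{d_1d}(\tau)+p_{d_2d}(\tau)} = \rf\cdot\frac{r_{d_1d}(\tau)}{1+r_{d_1d}(\tau)}$ and $\bflow{d_2d}(\tau) = \rf\cdot\frac{1}{1+r_{d_1d}(\tau)}$, where I used $r_{d_1d}(\tau)=p_{d_1d}(\tau)/p_{d_2d}(\tau)$ and, crucially, that $\rf$ is constant in $\tau$ so it cancels in any ratio.

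With these closed forms the first inequality becomes pure algebra. Writing $a \defeq r_{d_1d}(t-m+1)$ and $b\defeq r_{d_1d}(t-n+1)$, the ratio of interest is
\[
\frac{\bflow{d_1d}(t-m+1)}{\bflow{d_2d}(t-n+1)} = \frac{a}{1+a}\,(1+b) = \frac{a(1+b)}{1+a},
\]
and I want $\frac{a(1+b)}{1+a}\ge \min(a,b)$. I would split into two cases. If $a\le b$, then $\frac{1+b}{1+a}\ge 1$, so the expression is at least $a=\min(a,b)$. If $a>b$, then $\min(a,b)=b$ and the desired bound $a(1+b)\ge b(1+a)$ simplifies to $a\ge b$, which holds in this case. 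This establishes the first inequality and also explains why the $\min$ appears: because $P_1$ and $P_2$ have different lengths, the numerator and denominator are evaluated at different times ($t-m+1$ versus $t-n+1$), so the factor $1+r_{d_1d}$ does not cancel and we cannot simply read off a single relative-pheromone value.

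Finally, for the second inequality I would observe that $1\le m,n\le L=\max(m,n)$, so both indices satisfy $t-m+1,\,t-n+1\in[t-L+1,\,t]$. Hence both $r_{d_1d}(t-m+1)$ and $r_{d_1d}(t-n+1)$ occur among the terms $r_{d_1d}(t),\dots,r_{d_1d}(t-L+1)$ in the definition of $r_{min}(t)$, so each is at least $r_{min}(t)$, and therefore so is their minimum. I do not expect a genuine obstacle here: the entire content is the closed form of the backward split together with the one-line case analysis. The only thing requiring care is tracking the two distinct time arguments, which is precisely what forces the intermediate $\min$ in the statement rather than a single relative-pheromone term.
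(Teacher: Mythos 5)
Your proposal is correct and follows essentially the same route as the paper: both write the backward flows at the two relevant times in closed form under the linear rule (using that $\bflow{d}$ is constant), reduce to the ratio $\frac{a(1+b)}{1+a}$, and then verify the elementary bound against $\min(a,b)$, with the second inequality read off from the definition of $r_{min}(t)$. The only cosmetic difference is that you verify the final inequality by a two-case analysis while the paper bounds numerator and denominator separately via monotonicity of $x/(1+x)$.
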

\begin{proof}
Under the linear decision rule we know that,
\begin{align}
 \bflow{d_1d}(t-m+1) &=   \frac{\rf r_{d_1 d}(t-m+1)}{r_{d_1 d}(t-m+1)+1} \\
 \bflow{d_2d}(t-n+1) &=  \frac{\rf}{r_{d_1 d}(t-n+1)+1} 
\end{align}
Let $r\defeq \min(r_{d_1d}(t-m+1), r_{d_1d}(t-n+1))$, then note that $\frac{ r_{d_1d}(t-m+1)}{r_{d_1d}(t-m+1)+1} \geq \frac{r}{r+1}$ and $\frac{1}{r_{d_1d}(t-n+1)+1} \leq \frac{1}{r+1}$. Therefore,
$$\frac{ \bflow{d_1d}(t-m+1)}{ \bflow{d_2d}(t-n+1)} \geq  r=\min(r_{d_1 d}(t-m+1), r_{d_1d}(t-n+1))~.$$
and from the definition of $r_{min}(t)$, we know that $\min(r_{d_1 d}(t-m+1), r_{d_1 d}(t-n+1)) \geq r_{min}(t)$.
\begin{comment}
\begin{itemize}
    \item 
\end{itemize}
Case 1: $r_{d d_1}(t-n+1) > r_{d d_1}(t-m+1)$. In this case, we want to show that 
\begin{align}
\frac{ \fsos(t)}{ \fsts(t)} &\geq \frac{\lpo}{\lpt} r_{d d_1}(t-m+1)\\
\Leftrightarrow \
\frac{\lpo r_{d d_1}(t-n+1) (r_{d d_1}(t-m+1)+1)}{\lpt (r_{d d_1}(t-n+1)+1)} & \geq \frac{\lpo}{\lpt} r_{d d_1}(t-m+1)\\
\Leftrightarrow \
r_{d d_1}(t-n+1) (r_{d d_1}(t-m+1)+1) &\geq r_{d d_1}(t-m+1)(r_{d d_1}(t-n+1)+1) \\
\Leftrightarrow \
r_{d d_1}(t-n+1) &\geq r_{d d_1}(t-m+1)
\end{align}
Case 2: $r_{d d_1}(t-n+1) \leq r_{d d_1}(t-m+1)$. In this case, we want to show that 
\begin{align}
\frac{ \fsos(t)}{ \fsts(t)} &\geq \frac{\lpo}{\lpt} r_{d d_1}(t-n+1)\\
\Leftrightarrow \
\frac{\lpo r_{d d_1}(t-n+1) (r_{d d_1}(t-m+1)+1)}{\lpt (r_{d d_1}(t-n+1)+1)} & \geq \frac{\lpo}{\lpt} r_{d d_1}(t-n+1)\\
\Leftrightarrow \
r_{d d_1}(t-n+1) (r_{d d_1}(t-m+1)+1) &\geq r_{d d_1}(t-n+1)(r_{d d_1}(t-n+1)+1) \\
\Leftrightarrow \
r_{d d_1}(t-m+1) &\geq r_{d d_1}(t-n+1)
\end{align}
As desired, this gives us 
\begin{align}
\frac{ \fsos(t)}{ \fsts(t)} \geq \frac{\lpo}{\lpt}\text{min}(r_{d d_1}(t-n+1), r_{d d_1}(t-m+1))
\end{align}
\end{comment}
\end{proof}
Now we show that $r_{min}(t)$ is non-decreasing at every time step and increases by a factor of $ \gamma(t)$ every $L$ time steps, for all $t \geq L$. To show this, we would show that $r_{ss_1}(t+1) \geq r_{min}(t)\gamma_s(t)$, and $r_{d_1 d}(t+1) \geq r_{min}(t) \gamma_d(t)$.  Here, $\gamma_s(t)$ and $\gamma_d(t)$ are appropriately defined functions such that $\gamma_s(t)) > 1$ and $\gamma_d(t)) > 1$ for all $t \geq L$.

This would give us $r_{min}(t+L) \geq r_{min}(t) \gamma(t)$ where $\gamma(t) = min\{\gamma_s(t+L-1),\gamma_s(t+L-2), \cdots, \gamma_s(t), \gamma_d(t+L-1),\gamma_d(t+L-2), \cdots, \gamma_d(t) \}$.
Below, we give the proof for $r_{ss_1}(t+1) \geq r_{min}(t) \gamma_s(t)$. The proof for  $r_{d_1 d}(t+1) \geq r_{min}(t) \gamma_d(t)$ is  similar.
\begin{lemma}\label{lem:leakage1}
For all time $t \geq L$, $r_{ss_1}(t+1) \geq r_{min}(t) \gamma_s(t)$,
for some appropriately defined  function $\gamma_s(t)$, such that $\gamma_s(t) > 1$ for all $t \geq L$.
\end{lemma}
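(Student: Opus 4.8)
The plan is to start from the exact recursion for the relative pheromone level at the source terminal edges, which (as derived in the proof overview, with the path leakages folded into $\alpha = 1-l_{P_1}$ and $\beta = 1-l_{P_2}$) reads
\begin{equation}
r_{ss_1}(t+1) = \frac{p_{ss_1}(t) + \fflow{ss_1}(t) + \alpha\,\bflow{d_1d}(t-m+1)}{p_{ss_2}(t) + \fflow{ss_2}(t) + \beta\,\bflow{d_2d}(t-n+1)}.
\end{equation}
My goal is to lower bound the numerator by $r_{min}(t)$ times an expression that differs from the denominator only in carrying the \emph{larger} leakage factor $\alpha$ in place of $\beta$ on the backward term; this single asymmetry is exactly what will force $\gamma_s(t) > 1$.

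Next I would bound the three summands of the numerator separately. Because the linear rule splits the forward flow at $s$ in the pheromone ratio, $\fflow{ss_1}(t) = r_{ss_1}(t)\,\fflow{ss_2}(t)$, and likewise $p_{ss_1}(t) = r_{ss_1}(t)\,p_{ss_2}(t)$; since $r_{ss_1}(t)$ is one of the terms in the minimum defining $r_{min}(t)$, both quantities are at least $r_{min}(t)$ times their $s_2$ counterparts. For the backward term I invoke \Cref{lem:helper_min_ratio}, which gives $\bflow{d_1d}(t-m+1) \geq r_{min}(t)\,\bflow{d_2d}(t-n+1)$ and thereby absorbs the mismatch between the two path lengths (and hence between the time offsets $t-m+1$ and $t-n+1$). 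Adding the three bounds yields
\begin{equation}
p_{ss_1}(t) + \fflow{ss_1}(t) + \alpha\,\bflow{d_1d}(t-m+1) \geq r_{min}(t)\bigl(p_{ss_2}(t) + \fflow{ss_2}(t) + \alpha\,\bflow{d_2d}(t-n+1)\bigr).
\end{equation}

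Dividing by the denominator, I would define
\begin{equation}
\gamma_s(t) \defeq \frac{p_{ss_2}(t) + \fflow{ss_2}(t) + \alpha\,\bflow{d_2d}(t-n+1)}{p_{ss_2}(t) + \fflow{ss_2}(t) + \beta\,\bflow{d_2d}(t-n+1)},
\end{equation}
so that $r_{ss_1}(t+1) \geq r_{min}(t)\,\gamma_s(t)$. It remains to check $\gamma_s(t) > 1$: since $l_{P_1} < l_{P_2}$ gives $\alpha > \beta$, and since $\bflow{d_2d}(t-n+1) = \rf/\bigl(1+r_{d_1d}(t-n+1)\bigr) > 0$ is strictly positive (using $\rf > 0$, and $t \geq L$ so the offset is a valid past time), the numerator strictly exceeds the denominator. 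The symmetric argument at the destination end provides the companion bound $r_{d_1d}(t+1) \geq r_{min}(t)\,\gamma_d(t)$ needed to close Step 1.

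The calculation itself is short; the two points requiring care are, first, keeping the factor $\alpha$ (not $\beta$) attached to the backward term when lower bounding the numerator — replacing it by $\beta$ would collapse $\gamma_s$ to $1$ and lose the strict gain — and second, the legitimacy of applying \Cref{lem:helper_min_ratio}, which hinges on both $t-m+1$ and $t-n+1$ lying within the last $L$ steps, so that $r_{d_1d}(t-m+1)$ and $r_{d_1d}(t-n+1)$ are genuinely among the quantities minimized in $r_{min}(t)$; this is precisely what the hypothesis $t \geq L$ secures. I expect no deeper obstacle here, since the conceptual work was already done in setting up the potential $r_{min}$ and in \Cref{lem:helper_min_ratio}; the quantitative lower bound $\gamma_l > 1$ on $\gamma_s(t)$ is deferred to Step 2.
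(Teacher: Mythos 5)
Your proof is correct and follows essentially the same route as the paper: the same recursion for the pheromone ratio, the same use of the linear rule to compare the pheromone and forward-flow terms, the same appeal to \Cref{lem:helper_min_ratio} to handle the backward terms with their mismatched time offsets, and the same source of strictness ($\alpha > \beta$ together with positive backward flow on $(d_2,d)$). The one substantive difference is the definition of $\gamma_s(t)$. The paper factors out $r_{min}(t)$ \emph{exactly}, writing $r_{ss_1}(t+1) = r_{min}(t)\gamma_s(t)$ with
\begin{equation*}
\gamma_s(t) = \frac{a(t)\,p_{ss_2}(t) + a(t)\fflow{ss_2}(t) + \alpha\, b(t)\, \bflow{d_2d}(t-n+1)}{p_{ss_2}(t) + \fflow{ss_2}(t) + \beta\, \bflow{d_2d}(t-n+1)},
\end{equation*}
where $a(t) = r_{ss_1}(t)/r_{min}(t) \geq 1$ and $b(t) = \frac{\bflow{d_1d}(t-m+1)}{\bflow{d_2d}(t-n+1)\, r_{min}(t)} \geq 1$, whereas you discard the slack factors $a(t), b(t)$ immediately and define $\gamma_s(t)$ with both set to $1$. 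For the present lemma this is harmless: your $\gamma_s(t) > 1$ holds for exactly the same reason. But the bookkeeping is load-bearing in Step 2, which you defer to: there the paper lower-bounds its $\gamma_s(t)$ while keeping $b(t)$ attached to the backward-flow term, so that the quantity to control is $\frac{\nfp{ss_2}(t)}{b(t)\, \nbp{d_2d}(t-n+1)}$, which \Cref{lem:helper-s1} bounds by $1$. With your simpler $\gamma_s(t)$ the corresponding quantity is $\frac{\nfp{ss_2}(t)}{\nbp{d_2d}(t-n+1)} = \frac{1+r_{d_1d}(t-n+1)}{1+r_{ss_1}(t)}$, which admits no uniform upper bound (the source-side and destination-side ratios can be of very different magnitudes even though both exceed $r_{min}(t)$), so your version of $\gamma_s(t)$ cannot in general be bounded away from $1$ by a constant. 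If you intend this lemma to feed into the paper's Step 2, keep the exact-equality form with $a(t)$ and $b(t)$ rather than the weakened ratio.
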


\begin{proof}

 The pheromone levels on edges $(s,s_1)$ and $(s,s_2)$ at time $t+1$ is provided by the following expressions. 
\begin{align*}
    \psso(t+1) = \delta \left(\psso(t) + 
    \fsso(t) +  \fsos(t)\right), \quad 
    \psst(t+1) = \delta \left(\psst(t) + \fsst(t) +  \fsts(t)\right)~.\\
\end{align*}
Therefore, for $t \geq L$, we get
 \begin{align}
\frac{\psso(t+1)}{\psst(t+1)} &= \frac{ \psso(t) + 
    \fsso(t) +  \fsos(t)}{ \psst(t) + \fsst(t) +  \fsts(t)}\\
    &= \frac{ \psso(t) + 
    \fsso(t) +  (1-l_{P_1}) \ \bflow{d_1d}(t-m+1)}{ \psst(t) + \fsst(t) +  (1 - l_{P_2}) \ \bflow{d_2d}(t-n+1)}\\
    &= \frac{ \psso(t) + 
    \fsso(t) +  \alpha \ \bflow{d_1d}(t-m+1)}{ \psst(t) + \fsst(t) +  \beta \ \bflow{d_2d}(t-n+1)}~.\label{eq:9-s1}
\end{align}
By the definition of $r_{ss_1}(t)$ and linear decision rule we know that,
\begin{align*}
\frac{\psso(t)}{\psst(t)} = \frac{  
    \fsso(t)}{ \fsst(t)} = r_{ss_1}(t)  \geq r_{min}(t)~.
\end{align*}
From \Cref{lem:helper_min_ratio}, we know that
\begin{align*}
    \frac{\bflow{d_1d}(t-m+1)}{\bflow{d_2d}(t-n+1)} \geq r_{min}(t)~.
\end{align*}
For $t \geq L$, define 
\begin{align}
a(t) \defeq \frac{r_{ss_1(t)}}{r_{min}(t)}, \ b(t) \defeq \frac{\bflow{d_1d}(t-m+1)}{\bflow{d_2d}(t-n+1)}\frac{1}{r_{min}(t)}~.
\end{align}
Note that, from the definition of $r_{min}(t)$ and using \Cref{lem:helper_min_ratio}, we know that $a(t) \geq 1$ and $b(t) \geq 1$, for all $t \geq L$.
Now, we can write
\begin{align}
\label{eq:9-s1_1}
\frac{\psso(t)}{\psst(t)} = \frac{
    \fsso(t)}{ \fsst(t)} = a(t) r_{min}(t), \ \frac{\bflow{d_1d}(t-m+1)}{\bflow{d_2d}(t-n+1)} = b(t) r_{min}(t)
\end{align}
Substituting this in \Cref{eq:9-s1}, we get 
\begin{align}
\label{eq:9-s2}
\frac{\psso(t+1)}{\psst(t+1)} &= 
    r_{min}(t)\frac{a(t) \psst(t) + 
    a(t)\fsst(t) +  \alpha b(t)  \ \bflow{d_2d}(t-n+1)}{ \psst(t) + \fsst(t) +  \beta \ \bflow{d_2d}(t-n+1)}\\
    &= r_{min}(t) \gamma_s(t)
\end{align}
where we define 
\begin{align}
    \gamma_s(t) \defeq \frac{a(t) \psst(t) + 
    a(t)\fsst(t) +  \alpha b(t)  \ \bflow{d_2d}(t-n+1)}{ \psst(t) + \fsst(t) +  \beta \ \bflow{d_2d}(t-n+1)}.
\end{align}
Since $a(t) \geq 1$, $b(t) \geq 1$ and $\alpha > \beta$, we get that $\gamma_s(t) > 1$ for all $t \geq L$. 

This completes the proof of the Lemma.
\end{proof}

\paragraph{Step 2.}
In this step, we will give a lower bound $\gamma_l > 1$, on $\gamma(t)$, to show that $r_{min}(t)$ increases sufficiently every $L$ time steps. If the pheromone levels on the edges are too high as compared to the flow, it will take more time for the relative pheromone levels to change. We will first show that there exists a time $T_1$, such that for $t \geq T_1$, the pheromone levels and flow are comparable. Our lower bound $\gamma_l \leq \gamma(t)$ will hold for all $t \geq T_1$.

\begin{lemma}\label{lem:pbound1}
In the flow dynamics governed by the linear decision rule, the pheromone level on any edge $e=(u,v)$ is always  bounded as follows:
\begin{equation*}
 p_{uv}(t) \leq \frac{2(\lf+\rf)}{1-\delta},
\end{equation*}
for all 
\begin{align*}
t \geq T_1 \defeq \max_{(u,v) \in E}\left( \frac{log \left(\frac{p_{uv}(0)}{\lf + \rf}\right)} {log \left(\frac{1}{\delta}\right)}\right)  ~.\quad 
\end{align*}
\end{lemma}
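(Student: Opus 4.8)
The plan is to reduce the claim to a scalar linear recurrence for the pheromone on a single edge and then solve it explicitly. The first ingredient I would establish is a uniform bound on the flow carried by any edge: namely $\fflow{uv}(t) \le \lf$ and $\bflow{uv}(t) \le \rf$ for every edge $(u,v)$ and every time $t$. This follows from flow conservation together with the fact that the leakage factors $1 - l_v \le 1$ can only decrease flow as it propagates. The forward flow injected at $s$ equals $\lf$, and since the linear decision rule merely splits this flow among outgoing edges (never creating new flow), the total forward flow across the two parallel paths is at most $\lf$ at every stage, so no single edge can carry more than $\lf$; the symmetric argument bounds the backward flow by $\rf$. Hence $\fflow{uv}(t) + \bflow{uv}(t) \le \lf + \rf$ on every edge.

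Given this, the pheromone update rule immediately yields the scalar inequality
\begin{equation*}
p_{uv}(t+1) = \delta\bigl(p_{uv}(t) + \fflow{uv}(t) + \bflow{uv}(t)\bigr) \le \delta\bigl(p_{uv}(t) + \lf + \rf\bigr)~.
\end{equation*}
Writing $F \defeq \lf + \rf$ and unrolling this recurrence by induction, I would obtain
\begin{equation*}
p_{uv}(t) \le \delta^t p_{uv}(0) + F\sum_{i=1}^{t}\delta^i \le \delta^t p_{uv}(0) + \frac{F}{1-\delta}~,
\end{equation*}
where the geometric sum is bounded using $0 < \delta < 1$.

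It then remains to control the transient term $\delta^t p_{uv}(0)$. Solving $\delta^t p_{uv}(0) \le F$ for $t$, and recalling that $\log(1/\delta) > 0$, gives exactly the threshold $t \ge \log\bigl(p_{uv}(0)/F\bigr)/\log(1/\delta)$; taking the maximum of this quantity over all edges is precisely the definition of $T_1$. Thus for $t \ge T_1$ we have $\delta^t p_{uv}(0) \le F \le F/(1-\delta)$, and substituting into the displayed bound yields $p_{uv}(t) \le 2F/(1-\delta) = 2(\lf + \rf)/(1-\delta)$, as claimed.

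The only genuinely non-routine step is the flow bound $\fflow{uv}(t) \le \lf$, $\bflow{uv}(t) \le \rf$; everything after it is a standard linear-recurrence computation. For the parallel-paths setting of this theorem the bound is nearly transparent, since the flow simply divides between the two paths and leakage only shrinks it further — but stating it carefully as a conservation-plus-leakage argument is what makes the recurrence inequality legitimate, so I would make sure to spell this out rather than assume it.
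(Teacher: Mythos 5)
Your proposal is correct and follows essentially the same route as the paper's proof: bound the per-edge flows by $\lf + \rf$, unroll the resulting recurrence $p_{uv}(t+1) \leq \delta(p_{uv}(t) + \lf + \rf)$ into a geometric sum plus the transient $\delta^t p_{uv}(0)$, and use the definition of $T_1$ to bound the transient by $\lf + \rf$. The only difference is that you spell out the flow bound $\fflow{uv}(t) \leq \lf$, $\bflow{uv}(t) \leq \rf$ (via splitting at $s$ and leakage) which the paper uses implicitly in its first inequality; this is a welcome elaboration, not a different argument.
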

\begin{proof}
Consider any edge $e=(u,v)$ and time $t>0$,
\begin{align*}
    p_{uv}(t) & = \delta(\phe(t-1) + \fflow{uv}(t-1)  + \bflow{uv}(t-1))
     \leq \delta \left(\phe(t-1) + \lf+\rf \right)~,\\
    & \leq (\lf+\rf) \sum_{i=1}^{t}\delta ^{i} + \delta^{t} \phe(0)
\end{align*}
For $t \geq \frac{log \left(\frac{p_{uv}(0)}{\lf + \rf}\right)} {log \left(\frac{1}{\delta}\right)}$, we get \begin{align}
    p_{uv}(t)  \leq \frac{(\lf+\rf)}{1-\delta} + (\lf + \rf)
    \leq \frac{2(\lf+\rf)}{1-\delta}.
\end{align}
Since $T_1 \geq \frac{log \left(\frac{p_{uv}(0)}{\lf + \rf}\right)} {log \left(\frac{1}{\delta}\right)}$, we get that for $t \geq T_1$,
\begin{align}
    p_{uv}(t) 
    \leq \frac{2(\lf+\rf)}{1-\delta}.
\end{align}
\end{proof}

In \Cref{eq:9-s1_1}, we defined $a(t)$ and $b(t)$ such that \begin{align*}
\frac{\psso(t)}{\psst(t)} = \frac{
    \fsso(t)}{ \fsst(t)} = a(t) r_{min}(t), \ \frac{\bflow{d_1d}(t-m+1)}{\bflow{d_2d}(t-n+1)} = b(t) r_{min}(t)
\end{align*} 
Similarly, for $t \geq L$, we define $c(t) \defeq \frac{r_{d_1d}(t-n+1)}{r_{min}(t)}$.  From the definition of $r_{min}(t)$, we know that $c(t) \geq 1$ for all $t \geq L$. We will show a upper bound on $c(t)$ in terms of $b(t)$ which will be useful later. 
\begin{lemma}
\label{lem:helper-s1}
For all $t \geq L$, $1+c(t) r_{min}(t) \leq b(t)(1+r_{min}(t))$.
\end{lemma}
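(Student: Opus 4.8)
The plan is to unwind the definitions of $b(t)$ and $c(t)$ in terms of the relative pheromone levels $r_{d_1d}(\cdot)$, and then check that the target inequality collapses to the single fact $r_{d_1d}(t-m+1) \geq r_{min}(t)$, which is immediate from the definition of the potential.

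First I would record the closed forms. From the backward-flow expressions established in the proof of \Cref{lem:helper_min_ratio}, dividing one by the other gives
$$b(t)\, r_{min}(t) = \frac{\bflow{d_1d}(t-m+1)}{\bflow{d_2d}(t-n+1)} = \frac{r_{d_1d}(t-m+1)\bigl(r_{d_1d}(t-n+1)+1\bigr)}{r_{d_1d}(t-m+1)+1},$$
while the definition of $c(t)$ gives $c(t)\,r_{min}(t) = r_{d_1d}(t-n+1)$. Writing $x \defeq r_{d_1d}(t-m+1)$, $y \defeq r_{d_1d}(t-n+1)$ and $r \defeq r_{min}(t)$ for brevity, the claim $1 + c(t)r_{min}(t) \leq b(t)\bigl(1+r_{min}(t)\bigr)$ becomes
$$1 + y \;\leq\; \frac{x(y+1)}{r(x+1)}\,(1+r).$$

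Next I would simplify. All the pheromone levels are strictly positive (the initialization is positive on $P_1$ and the update only adds non-negative flow and scales by $\delta > 0$), so $x,y,r > 0$ and in particular $1+y > 0$; dividing both sides by $1+y$ removes that factor and leaves
$$1 \;\leq\; \frac{x(1+r)}{r(x+1)} \;\Longleftrightarrow\; r(x+1) \leq x(1+r) \;\Longleftrightarrow\; r \leq x.$$
So the entire claim reduces to $r_{min}(t) \leq r_{d_1d}(t-m+1)$.

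Finally I would close the argument using the definition of the potential. Since $m = len_{P_1} \leq L$, the index $t-m+1$ lies in the window $[t-L+1,\,t]$ over which $r_{min}(t)$ is taken as a minimum, so $r_{d_1d}(t-m+1)$ is literally one of the terms in that minimum; hence $r_{d_1d}(t-m+1) \geq r_{min}(t)$, as required. I do not expect any genuine obstacle: the only substantive point is recognizing that the time index $t-m+1$ falls inside the minimization window, and everything else is routine algebraic simplification. The single care point is justifying the division by $1+y$, which is legitimate because all the relevant ratios stay strictly positive throughout the dynamics.
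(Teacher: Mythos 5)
Your proof is correct and is essentially the paper's own argument in different notation: both unwind the definitions of $b(t)$ and $c(t)$ via the linear-rule split of the backward flow at $d$, and both reduce the inequality to the single fact $r_{d_1d}(t-m+1) \geq r_{min}(t)$, which holds because the index $t-m+1$ lies inside the minimization window defining $r_{min}(t)$. The only difference is presentational: the paper routes the algebra through the normalized pheromone level, showing $\nbp{d_1 d}(t-m+1) = b(t)r_{min}(t)/(1+c(t)r_{min}(t))$ and lower bounding it by $r_{min}(t)/(1+r_{min}(t))$, whereas you work directly with the raw ratios and exhibit the reduction as an equivalence.
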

\begin{proof}
We know 
\begin{align}
    r_{d_1 d}(t-n+1) = \frac{\nbp{d_1 d}(t-n+1)}{\nbp{d_2 d}(t-n+1)} = c(t) r_{min}(t)
\end{align}
which gives
\begin{align}
     \frac{1 - \nbp{d_2 d}(t-n+1)}{\nbp{d_2 d}(t-n+1)} = c(t) r_{min}(t)
\end{align}
which gives $\frac{1}{\nbp{d_2d}(t-n+1)} = 1 + c(t)r_{min}(t)$. 
We also know 
\begin{align}
    \frac{\nbp{d_1 d}(t-m+1)}{\nbp{d_2 d}(t-n+1)} = b(t) r_{min}(t)~.
\end{align}
Substituting $\frac{1}{\nbp{d_2d}(t-n+1)} = 1 + c(t)r_{min}(t)$, we get 
\begin{align}
    \nbp{d_1 d}(t-m+1) = \frac{b(t) r_{min}(t)}{1 + c(t)r_{min}(t)}~.
\end{align}
Finally, from the definition of $r_{min}(t)$, we know 
\begin{align}
    \frac{\nbp{d_1 d}(t-m+1)}{\nbp{d_2 d}(t-m+1)} = r_{d_1 d}(t-m+1) \geq r_{min}(t).
\end{align}
Since $\nbp{d_2 d}(t-m+1) = 1 - \nbp{d_1 d}(t-m+1)$, this gives
\begin{align}
    {\nbp{d_1 d}(t-m+1)}  \geq \frac{r_{min}(t)}{r_{min}(t) + 1}.
\end{align}
Substituting the value $\nbp{d_1 d}(t-m+1) = \frac{b(t) r_{min}(t)}{1 + c(t)r_{min}(t)}$, we get 
\begin{align}
 \frac{b(t) r_{min}(t)}{1 + c(t)r_{min}(t)} \geq \frac{r_{min}(t)}{r_{min}(t) + 1}.
\end{align}
which implies
\begin{align}
    b(t)(1+r_{min}(t)) \geq 1+c(t) r_{min}(t)  ~.
\end{align}
This finishes the proof of the Lemma.
\end{proof}
Now, we come to the main part of step 2 where we show $\gamma(t) > \gamma_l$ for all $t \geq T_1 + L$, for some $\gamma_l > 1$. From step 1, we know that $\gamma(t) = min\{\gamma_s(t+L-1),\gamma_s(t+L-2), \cdots, \gamma_s(t), \gamma_d(t+L-1),\gamma_d(t+L-2), \cdots, \gamma_d(t) \}$. Therefore if we show that $\gamma_s(t) \geq \gamma_{s_l}$ and $\gamma_d(t) \geq \gamma_{d_l}$ for all $t \geq T_1 + L$, for some $\gamma_{s_l}, \gamma_{d_l} > 1$, we can set $\gamma_l = min(\gamma_{s_l}, \gamma_{d_l})$, and we will be done.

Below we will prove $\gamma_s(t) \geq \gamma_{s_l}$. The proof for $\gamma_d(t) \geq \gamma_{d_l}$ is similar.

\begin{lemma}\label{lem:leakage2}
For all time $t \geq L+T_1$, $r_{ss_1}(t+1) \geq r_{min}(t) \gamma_s(t)$ where  $\gamma_s(t) \geq \gamma_{s_l}$ for some fixed constant $\gamma_{s_l} > 1$.
\end{lemma}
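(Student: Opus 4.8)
The plan is to upgrade the strict-but-possibly-vanishing inequality $\gamma_s(t)>1$ from \Cref{lem:leakage1} into a bound by a constant $\gamma_{s_l}>1$ that is uniform over all $t\geq L+T_1$. Writing $X\defeq \psst(t)+\fsst(t)$ and $Y\defeq \bflow{d_2d}(t-n+1)$, the expression for $\gamma_s(t)$ from \Cref{lem:leakage1} gives
\begin{equation*}
\gamma_s(t)-1=\frac{(a(t)-1)X+(\alpha b(t)-\beta)Y}{X+\beta Y},
\end{equation*}
so it suffices to bound the numerator below by a constant multiple of the denominator. The danger is that as the dynamics converge, $r_{min}(t)\to\infty$ and both $X$ and $Y$ tend to $0$; a careless estimate (dropping $(a(t)-1)X$ and using only the absolute bound $X\leq 2(\lf+\rf)/(1-\delta)$ of \Cref{lem:pbound1}) leaves $\gamma_s(t)-1$ decaying like $1/r_{min}(t)$, which would yield only polynomial rather than the claimed logarithmic convergence. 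The crux is therefore to show that numerator and denominator both scale like $1/(1+r_{min}(t))$, so that their ratio is a genuine constant.

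For the numerator I would discard the nonnegative term $(a(t)-1)X$ and lower bound $(\alpha b(t)-\beta)Y$. Using $Y=\rf/(1+r_{d_1d}(t-n+1))$ together with $r_{d_1d}(t-n+1)=c(t)r_{min}(t)\geq r_{min}(t)$ gives $\beta Y\leq \beta\rf/(1+r_{min}(t))$, while \Cref{lem:helper-s1} (which states $b(t)(1+r_{min}(t))\geq 1+c(t)r_{min}(t)$) combined with the same expression for $Y$ yields $b(t)Y\geq \rf/(1+r_{min}(t))$. Since $\alpha>\beta$ — this is exactly where the hypothesis $\lpo<\lpt$ enters — the numerator is at least $(\alpha-\beta)\rf/(1+r_{min}(t))$.

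The key step is the matching upper bound on the denominator, where \Cref{lem:pbound1} by itself is too weak. The improvement is to control the pheromone on the losing edge relative to the winning one: by definition $\psst(t)=\psso(t)/r_{ss_1}(t)$ and $r_{ss_1}(t)\geq r_{min}(t)$, so for $t\geq T_1$,
\begin{equation*}
\psst(t)\leq \frac{\psso(t)}{r_{min}(t)}\leq \frac{2(\lf+\rf)}{(1-\delta)\max(1,r_{min}(t))}\leq \frac{4(\lf+\rf)}{(1-\delta)(1+r_{min}(t))},
\end{equation*}
using $1/\max(1,x)\leq 2/(1+x)$ in the last step. Combined with $\fsst(t)=\lf/(1+r_{ss_1}(t))\leq \lf/(1+r_{min}(t))$ and $\beta Y\leq \rf/(1+r_{min}(t))$, this gives $X+\beta Y\leq C_2/(1+r_{min}(t))$ with $C_2\defeq 4(\lf+\rf)/(1-\delta)+\lf+\rf$. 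Dividing, the factors of $1/(1+r_{min}(t))$ cancel and we obtain $\gamma_s(t)-1\geq (\alpha-\beta)\rf/C_2=:\gamma_{s_l}-1>0$, uniformly in $t\geq L+T_1$.

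I expect the main obstacle to be precisely this denominator estimate: recognizing that $\psst(t)$ must be bounded not by the crude absolute bound of \Cref{lem:pbound1} but relative to $\psso(t)$ through the ratio $r_{ss_1}(t)$, so that it decays at the same rate $1/(1+r_{min}(t))$ as the backward-flow gap appearing in the numerator; without this refinement the constant cannot be made uniform. The symmetric statement $\gamma_d(t)\geq \gamma_{d_l}$ follows by the identical argument with the roles of $s$ and $d$ interchanged, and setting $\gamma_l=\min(\gamma_{s_l},\gamma_{d_l})$ completes Step 2.
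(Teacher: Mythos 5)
Your proof is correct and follows essentially the same route as the paper's: both arguments reduce to showing that $\psst(t)+\fsst(t)$ is controlled relative to $b(t)\bflow{d_2d}(t-n+1)$, using \Cref{lem:pbound1} for the absolute pheromone bound and \Cref{lem:helper-s1} to get $b(t)\bflow{d_2d}(t-n+1)\geq \rf/(1+r_{min}(t))$, yielding a uniform constant $\gamma_{s_l}>1$. The only difference is cosmetic — you bound numerator and denominator separately, each scaling as $1/(1+r_{min}(t))$, whereas the paper normalizes by $b(t)\bflow{d_2d}(t-n+1)$ and bounds the resulting ratio of normalized pheromone levels by $1$ — and the resulting constants are comparable.
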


\begin{proof}
From \Cref{lem:leakage1}, we know that 
\begin{align}
\frac{\psso(t+1)}{\psst(t+1)} &= 
    r_{min}(t)\gamma_s(t)
\end{align}
where 
\begin{align*}
    \gamma_s(t) \defeq \frac{a(t) \psst(t) + 
    a(t)\fsst(t) +  \alpha b(t)  \ \bflow{d_2d}(t-n+1)}{ \psst(t) + \fsst(t) +  \beta \ \bflow{d_2d}(t-n+1)}.
\end{align*}
We want to show that $\gamma_s(t) \geq \gamma_{s_l}$ for some fixed constant $\gamma_{s_l} > 1$.
\begin{comment}
We consider two cases:

\textbf{Case 1:} $a(t) \geq 2$.\\
In this case, 

\begin{align}
    \gamma_s(t) &= 
    \frac{a(t) \psst(t) + 
    a(t)\fsst(t) +  b(t) \alpha \ \bflow{d_2d}(t-n+1)}{ \psst(t) + \fsst(t) +  \beta \ \bflow{d_2d}(t-n+1)}\\ \label{eq:10-s1}
    &\geq \frac{2 \psst(t) + 
    2\fsst(t) +   \alpha \ \bflow{d_2d}(t-n+1)}{ \psst(t) + \fsst(t) +  \beta \ \bflow{d_2d}(t-n+1)}\\
    &\geq    \frac{ \min\left(2, \frac{\alpha}{\beta}\right) \psst(t) + 
     \min\left(2, \frac{\alpha}{\beta}\right)\fsst(t) +     \min\left(2, \frac{\alpha}{\beta}\right) \beta \ \bflow{d_2d}(t-n+1)}{ \psst(t) + \fsst(t) +  \beta \ \bflow{d_2d}(t-n+1)}\\
     &=  \ \min\left(2, \frac{\alpha}{\beta}\right)\\
     &=  \ \min\left(2, \left( 1 + \frac{\alpha - \beta}{\beta} \right) \right)~.
\end{align}
where we used $a(t) \geq 2$, $b(t) \geq 1$ for inequality \ref{eq:10-s1}.

\textbf{Case 2:} $a(t) < 2$.\\
In this case, 
\end{comment}
Using $a(t) \geq 1$ and $b(t) \geq 1$, we get 
\begin{align}
\gamma_s(t) 
    &\geq \frac{ \psst(t) + 
    \fsst(t) +  b(t) \alpha \ \bflow{d_2d}(t-n+1)}{ \psst(t) + \fsst(t) +  b(t)\beta \ \bflow{d_2d}(t-n+1)}\\\label{eq:10-s5}
    &=  1 + \frac{\alpha - \beta}{\frac{ \psst(t)}{b(t)\bflow{d_2d}(t-n+1)} + \frac{ \fsst(t)}{b(t)\bflow{d_2d}(t-n+1)} + \beta} 
\end{align}
 Now, to lower bound $\gamma_s(t)$, we need to upper bound $\frac{ \psst(t)}{b(t)\bflow{d_2d}(t-n+1)}$ and  $\frac{ \fsst(t)}{b(t)\bflow{d_2d}(t-n+1)}$. Using the definition of normalized pheromone level, we can write
\begin{align}
    \frac{ \psst(t)}{b(t)\bflow{d_2d}(t-n+1)} &= \left( \frac{p_{ss_1}(t)+p_{ss_2}(t)}{\bflow{d}} \right) \left( \frac{ \nfp{ss_2}(t)}{b(t) \nbp{d_2 d}(t-n+1)} \right) \\\label{eq:10-s2}
    &\leq  \left( \frac{4(\fflow{s}+\bflow{d})}{\bflow{d}(1 - \delta)} \right) \left( \frac{ \nfp{ss_2}(t)}{b(t) \nbp{d_2 d}(t-n+1)} \right)
\end{align}
where we used the upper bound for pheromone level shown in \Cref{lem:pbound1}. Similarly, we can write 
\begin{align}
\label{eq:10-s3}
    \frac{ \fsst(t)}{b(t)\bflow{d_2d}(t-n+1)} &= \left( \frac{\fflow{s}}{\bflow{d}} \right) \left( \frac{ \nfp{ss_2}(t)}{b(t) \nbp{d_2 d}(t-n+1)} \right).
\end{align}
Combining \Cref{eq:10-s2} and  \Cref{eq:10-s3}, we  get
\begin{align}
    \frac{ \psst(t)}{b(t)\bflow{d_2d}(t-n+1)} + \frac{ \fsst(t)}{b(t)\bflow{d_2d}(t-n+1)} &\leq 
    \left( \frac{4(\fflow{s}+\bflow{d})}{\bflow{d}(1 - \delta)}  + \frac{\fflow{s}}{\bflow{d}}\right)\left( \frac{ \nfp{ss_2}(t)}{b(t) \nbp{d_2 d}(t-n+1)} \right)\\
    &\leq \left( \frac{5(\fflow{s}+\bflow{d})}{\bflow{d}(1 - \delta)}  \right)\left( \frac{ \nfp{ss_2}(t)}{b(t) \nbp{d_2 d}(t-n+1)} \right)\\\label{eq:10-s4}
     &=  C_{\fflow{s}, \bflow{d}, \delta} \left( \frac{ \nfp{ss_2}(t)}{b(t) \nbp{d_2 d}(t-n+1)} \right)
\end{align}
where we define $ C_{\fflow{s}, \bflow{d}, \delta} \defeq \left( \frac{5(\fflow{s}+\bflow{d})}{\bflow{d}(1 - \delta)}  \right) $.
Now, we upper bound  $ \frac{ \nfp{ss_2}(t)}{b(t) \nbp{d_2 d}(t-n+1)} $. From the definition of normalized pheromone level, we get 
\begin{align*}
     \frac{ \nfp{ss_2}(t)}{b(t) \nbp{d_2 d}(t-n+1)}  &= \frac{1}{b(t)} \left(\frac{1+r_{d_1 d}(t-n+1)}{1+r_{ss_1}(t)} \right)
\end{align*}
We know $r_{d_1 d}(t-n+1) = c(t) r_{min}(t)$ where $c(t) \geq 1$, and we earlier defined $r_{ss_1}(t) = a(t) r_{min}$ where $a(t) \geq 1$. This gives us 
\begin{align*}
     \frac{ \nfp{ss_2}(t)}{b(t) \nbp{d_2 d}(t-n+1)}  &= \frac{1}{b(t)} \left(\frac{1+c(t) r_{min}(t)}{1+a(t) r_{min}(t)} \right)\\
     &\leq 
     \frac{1}{b(t)} \left(\frac{1+c(t) r_{min}(t)}{1+ r_{min}(t)} \right)
\end{align*}
In \Cref{lem:helper-s1}, we show that $(1+c(t) r_{min}(t)) \leq b(t)(1+r_{min}(t))$, which gives us 
\begin{align*}
     \frac{ \nfp{ss_2}(t)}{b(t) \nbp{d_2 d}(t-n+1)}  &\leq 1
\end{align*}
Substituting this in \Cref{eq:10-s4}, we get
\begin{align}
    \frac{ \psst(t)}{b(t)\bflow{d_2d}(t-n+1)} + \frac{ \fsst(t)}{b(t)\bflow{d_2d}(t-n+1)} \leq  C_{\fflow{s}, \bflow{d}, \delta}
\end{align}
Substituting this in \Cref{eq:10-s5}, we get 
\begin{align}
    \gamma_s(t)
    &\geq  1 + \frac{\alpha - \beta}{ C_{\fflow{s}, \bflow{d}, \delta} + \beta} 
\end{align}

Define $\gamma_{s_l} \defeq  1 + \frac{\alpha - \beta}{C_{\fflow{s}, \bflow{d}, \delta} + \beta} $. Since $\gamma_{s_l} > 1$, this completes the proof of the lemma.
\end{proof}
Using a very similar proof, we can bound $\gamma_d(t)$ by $\gamma_{d_l}$. Finally, setting $\gamma_l = min(\gamma_{s_l}, \gamma_{d_l})$, we will get a lower bound $\gamma_l > 1$ on $\gamma(t)$, for all $t \geq T + L$. This finishes the proof for step 2.

\paragraph{Step 3.} In step 1, we show that $r_{min}(t)$ is non-decreasing for $t \geq L$, therefore $r_{min}(L + T_1) \geq r_{min}(L)$. And from step 2, we know that $r_{min}(t)$ increases at least by a factor of $\gamma_l$ every $L$ time steps, for all $t \geq L+T_1$. This  gives us $r_{min}(t) \geq r_{min}(L)  \gamma_l^{\lfloor\frac{t - L - T_1}{L}\rfloor}$. Let $T_2 = L \frac{ log\left(\frac{2}{\epsilon r_{min}(L) }\right)}{log(\gamma_l)}$. For  $t \geq L+T_1+T_2$ time steps, we would get that $r_{min} \geq \frac{2}{\epsilon}$, which implies $\nfp{ss_1}(t) \geq 1 - \epsilon$ and $\nbp{d_1 d}(t) \geq 1 - \epsilon$. Since the pheromone level on all edges on $P_1$ is always non-zero, we trivially know that $\nfp{uv}(t) = 1$ for $(u, v) \in P_1 \setminus {(s,s_1)}$ and $\nbp{uv}(t) = 1$ for $(u, v) \in P_1 \setminus {(d_1, d)}$. This gives us that for $t \geq L + T_1 + T_2$, normalized pheromone level on all edges on $P_1$ are at least $1- \epsilon$, where
\begin{align*}
T_1 &= \max_{(u,v) \in E}\left( \frac{log \left(\frac{p_{uv}(0)}{\lf + \rf}\right)} {log \left(\frac{1}{\delta}\right)}\right)\\
    T_2 &= L \frac{ log\left(\frac{2}{\epsilon r_{min}(L) }\right)}{log(\gamma_l)}
\end{align*}
This completes the proof of \Cref{thm:ub-leakage}.

\subsection{Increasing flow with no leakage on both paths (\Cref{thm:ub-changing-flow})}
Here, we provide a proof of \Cref{thm:ub-changing-flow}. We restate it below.
% \begin{theorem*}
% Consider a graph $G$ consisting of two parallel paths $P_1$ and $P_2$ from $s$ to $d$. Let $P_1$ be the shorter path such that $len_{P_1} < len_{P_2}$.  If  (1)  the initial pheromone levels ${p}_{uv}(0)$ is positive for any edge $(u, v) \in P_1$,
% and (2) leakage $l_{P_1} = l_{P_2} = 0$, then the following is true: \begin{enumerate}
%     \item \textbf{Exponential increase:} When $\fflow{s}(t) = \alpha^t \fflow{s}(0)$ and $\fflow{d}(t) = \alpha^t \fflow{d}(0)$, for any  $\alpha > 1$,   $\nfp{uv}(t) \geq 1 - \epsilon$ and $\nbp{uv}(t) \geq 1 - \epsilon$, for all  $t = \Omega \left(log \left(\frac{1}{\epsilon}\right)  \right)$ and  $(u, v) \in P_1$.
%       \item \textbf{Linear increase:} When $\fflow{s}(t) = \fflow{s}(0) + \alpha t$ and $\fflow{d}(t) = \fflow{d}(0) +\alpha t$, for any  $\alpha > 0$,  we get  $\nfp{uv}(t) \geq 1 - \epsilon$ and $\nbp{uv}(t) \geq 1 - \epsilon$, for all  $t = \Omega \left(  \left(\frac{1}{\epsilon} \right)^{C_{\delta, len_{P_2}}}\right) $ and $(u, v) \in P_1$. Here, $C_{\delta, len_{P_2}}$ is some constant dependent on $\delta$ and the path length.
%       \end{enumerate}
% \end{theorem*}
\secondthm*
\vspace{11pt}
 Let $s_1$, $s_2$ be the neighboring vertices of $s$ that belong to paths $P_1$ and $P_2$ respectively. Similarly let $d_1$ and $d_2$ be the corresponding neighbors for $d$. 
 Define $r_{ss_1}(t) \defeq \frac{p_{ss_1}(t)}{p_{ss_2}(t)}$ and $r_{d_1 d}(t) \defeq \frac{p_{d_1 d}(t)}{p_{d_2 d}(t)}$ to be the relative pheromone levels at $(s,s_1)$ and $(d_1,d)$ respectively. For notational simplicity, we define $m \defeq len_{P_1}$ and $n \defeq len_{P_2}$ to be the lengths of path $P_1$ and $P_2$,  and $L \defeq max(m, n)$. 

Our potential function at any time $t \geq L$  is given by the minimum of the relative pheromone levels $r_{ss_1}(t)$ and $r_{d_1d}(t)$ across the last $L$ time steps:
\begin{equation}
    r_{min}(t) \defeq min\{r_{ss_1}(t), r_{ss_1}(t-1), \cdots, r_{ss_1}(t-L+1), r_{d_1d}(t), r_{d_1d}(t-1), \cdots, r_{d_1d}(t-L+1)   \}
\end{equation}

The proof is similar to the proof for the case of leakage with constant flow.  
We divide our proof into 3 steps:
\begin{itemize}
    \item \textbf{Step 1}: In this step, we show that $r_{min}(t)$ is non-decreasing at every time step and increases by a factor of $ \gamma(t)$ every $L$ time steps,  for all $t \geq L$. Here, $\gamma(t)$ is some appropriately defined function which is greater than 1 for all $t \geq L$.
    \item \textbf{Step 2}: In this step, we will lower bound  $\gamma(t)$ to show that $r_{min}(t)$ increases sufficiently every $L$ time steps. 
    \item \textbf{Step 3}: We will find the rate of convergence based on the rate of increase shown in step 2.
\end{itemize}
\begin{comment}
\textbf{High level idea}: For each flow currently present at any edge of the graph, we associate a number $r$ which is the ratio of pheromone levels when this flow entered the graph. Let $\rmin$ be the minimum value of $r$ among all the flows currently present in the graph. Claim: $\rmin$ is non-decreasing, and it strictly increases every $max(n,m)$ time steps, where $n$ and $m$ are the path lengths.

Let $p_{uv}(t)$ be the flow from node $u$ to $v$ at time $t$, and  let $p_l$ and $\rf$ be fixed flow levels from left and right respectively. Define $r_{uv}(t)$ to be the ratio of pheromone levels at the terminal edges when the flow $p_{uv}(t)$ had entered the graph. For example, let $(u,v)$ be an edge on the path along $(B,D)$, then $r_{uv}(t)$ is the ratio of pheromone levels at $BD$ and $BC$ when the flow $p_{uv}(t)$ entered $BD$, that is,  $p_{uv}(t) = \frac{r_{uv}(t)}{1+r_{uv}(t)} * \rf$. 

Let $\rmin(t) = \min_{(u,v)} r_{uv}(t)$ be the minimum pheromone level ratio among flows present on all the edges at time $t$. Here, all the edges are directed. We will show that $r_{\text{min}(t)}$ is non-decreasing with $t$ and strictly increases every $max(m,n)$ time steps.
\end{comment}
Now, we give proofs for each of these three steps.

\paragraph{Step 1.}
We will use the following lemma for the proof.
\begin{lemma}\label{lem:inc-helper_min_ratio}
For any time $t \geq L$, the following is true,
$$\frac{ \nbp{d_1d}(t-m+1)}{ \nbp{d_2d}(t-n+1)} \geq \min(r_{d_1d}(t-m+1), r_{d_1 d}(t-n+1)) \geq r_{min}(t)~.$$
\end{lemma}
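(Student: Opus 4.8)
The plan is to reduce the entire statement to elementary algebra in the single quantity $r_{d_1 d}$, exploiting the fact that in the two-parallel-paths graph the destination $d$ has exactly two incoming edges, $(d_1,d)$ and $(d_2,d)$. First I would record the identities that follow immediately from the definition of the backward normalized pheromone level: since $\nbp{d_1d}(t) = p_{d_1d}(t)/(p_{d_1d}(t)+p_{d_2d}(t))$ and $r_{d_1d}(t) = p_{d_1d}(t)/p_{d_2d}(t)$, dividing numerator and denominator by $p_{d_2d}(t)$ gives
\[
\nbp{d_1d}(t) = \frac{r_{d_1d}(t)}{1+r_{d_1d}(t)}, \qquad \nbp{d_2d}(t) = \frac{1}{1+r_{d_1d}(t)}.
\]

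Next, writing $r_a \defeq r_{d_1d}(t-m+1)$ and $r_b \defeq r_{d_1d}(t-n+1)$, the target ratio becomes
\[
\frac{\nbp{d_1d}(t-m+1)}{\nbp{d_2d}(t-n+1)} = \frac{r_a/(1+r_a)}{1/(1+r_b)} = \frac{r_a}{1+r_a}\,(1+r_b).
\]
Then, setting $r \defeq \min(r_a, r_b)$, I would bound the two factors separately: since $x \mapsto x/(1+x)$ is increasing on $[0,\infty)$ and $r_a \geq r$, we have $r_a/(1+r_a) \geq r/(1+r)$; and since $r_b \geq r$, we have $1+r_b \geq 1+r$. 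Multiplying yields
\[
\frac{r_a}{1+r_a}\,(1+r_b) \geq \frac{r}{1+r}\,(1+r) = r = \min(r_a,r_b),
\]
which is the first inequality. The second inequality, $\min(r_{d_1d}(t-m+1), r_{d_1d}(t-n+1)) \geq r_{min}(t)$, is immediate from the definition of the potential: because $L = \max(m,n)$ and each path has length at least one, both indices $t-m+1$ and $t-n+1$ lie in the look-back window $\{t-L+1,\dots,t\}$, so each of $r_{d_1d}(t-m+1)$ and $r_{d_1d}(t-n+1)$ is one of the quantities over which $r_{min}(t)$ takes its minimum.

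I do not expect any genuine obstacle here: the statement is the exact analogue of \Cref{lem:helper_min_ratio} from the fixed-flow case, with the backward flows $\bflow{d_1d}(t-m+1)$ and $\bflow{d_2d}(t-n+1)$ replaced by the normalized pheromone levels $\nbp{d_1d}(t-m+1)$ and $\nbp{d_2d}(t-n+1)$, and the same two-step ``monotone factorization'' argument transfers without change. The only point requiring a little care is verifying that the time indices $t-m+1$ and $t-n+1$ really fall inside the potential's window, which holds precisely because $L = \max(m,n)$.
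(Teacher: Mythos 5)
Your proof is correct and follows essentially the same route as the paper's: both rewrite $\nbp{d_1d}(t-m+1)$ and $\nbp{d_2d}(t-n+1)$ as $\frac{r_a}{1+r_a}$ and $\frac{1}{1+r_b}$ respectively, bound these using monotonicity of $x \mapsto x/(1+x)$ with $r = \min(r_a,r_b)$, and then invoke the definition of $r_{min}(t)$ for the second inequality. Your explicit check that the indices $t-m+1$ and $t-n+1$ fall inside the look-back window is a detail the paper leaves implicit, but the argument is the same.
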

\begin{proof}
Under the linear decision rule we know that,
\begin{align}
 \nbp{d_1d}(t-m+1) &=   \frac{ r_{d_1d}(t-m+1)}{r_{d_1d}(t-m+1)+1} \\
 \nbp{d_2d}(t-n+1) &=  \frac{1}{r_{d_1d}(t-n+1)+1} \\
\end{align}
Let $r\defeq \min(r_{d_1d}(t-m+1), r_{ d_1d}(t-n+1))$, then note that $\frac{ r_{d_1d}(t-m+1)}{r_{d d_1}(t-m+1)+1} \geq \frac{r}{r+1}$ and $\frac{1}{r_{d_1d}(t-n+1)+1} \leq \frac{1}{r+1}$. Therefore,
$$\frac{ \nbp{d_1d}(t-m+1)}{ \nbp{d_2d}(t-n+1)} \geq  r=\min(r_{d_1d}(t-m+1), r_{d_1d}(t-n+1))~.$$
and from the definition of $r_{min}(t)$, we know that $\min(r_{d_1d}(t-m+1), r_{d_1 d}(t-n+1)) \geq r_{min}(t)$.
\begin{comment}
\begin{itemize}
    \item 
\end{itemize}
Case 1: $r_{d d_1}(t-n+1) > r_{d d_1}(t-m+1)$. In this case, we want to show that 
\begin{align}
\frac{ \fsos(t)}{ \fsts(t)} &\geq \frac{\lpo}{\lpt} r_{d d_1}(t-m+1)\\
\Leftrightarrow \
\frac{\lpo r_{d d_1}(t-n+1) (r_{d d_1}(t-m+1)+1)}{\lpt (r_{d d_1}(t-n+1)+1)} & \geq \frac{\lpo}{\lpt} r_{d d_1}(t-m+1)\\
\Leftrightarrow \
r_{d d_1}(t-n+1) (r_{d d_1}(t-m+1)+1) &\geq r_{d d_1}(t-m+1)(r_{d d_1}(t-n+1)+1) \\
\Leftrightarrow \
r_{d d_1}(t-n+1) &\geq r_{d d_1}(t-m+1)
\end{align}
Case 2: $r_{d d_1}(t-n+1) \leq r_{d d_1}(t-m+1)$. In this case, we want to show that 
\begin{align}
\frac{ \fsos(t)}{ \fsts(t)} &\geq \frac{\lpo}{\lpt} r_{d d_1}(t-n+1)\\
\Leftrightarrow \
\frac{\lpo r_{d d_1}(t-n+1) (r_{d d_1}(t-m+1)+1)}{\lpt (r_{d d_1}(t-n+1)+1)} & \geq \frac{\lpo}{\lpt} r_{d d_1}(t-n+1)\\
\Leftrightarrow \
r_{d d_1}(t-n+1) (r_{d d_1}(t-m+1)+1) &\geq r_{d d_1}(t-n+1)(r_{d d_1}(t-n+1)+1) \\
\Leftrightarrow \
r_{d d_1}(t-m+1) &\geq r_{d d_1}(t-n+1)
\end{align}
As desired, this gives us 
\begin{align}
\frac{ \fsos(t)}{ \fsts(t)} \geq \frac{\lpo}{\lpt}\text{min}(r_{d d_1}(t-n+1), r_{d d_1}(t-m+1))
\end{align}
\end{comment}
\end{proof}
Now we show that $r_{min}(t)$ is non-decreasing at every time step and increases by a factor of $ \gamma(t)$ every $L$ time steps, for all $t \geq L$. To show this, we would show that $r_{ss_1}(t+1) \geq r_{min}(t)\gamma_s(t)$, and $r_{d_1 d}(t+1) \geq r_{min}(t) \gamma_d(t)$.  Here, $\gamma_s(t)$ and $\gamma_d(t)$ are appropriately defined functions such that $\gamma_s(t)) > 1$ and $\gamma_d(t)) > 1$ for all $t \geq L$.

This would give us $r_{min}(t+L) \geq r_{min}(t) \gamma(t)$ where $\gamma(t) = min\{\gamma_s(t+L-1),\gamma_s(t+L-2), \cdots, \gamma_s(t), \gamma_d(t+L-1),\gamma_d(t+L-2), \cdots, \gamma_d(t) \}$.
Below, we give the proof for $r_{ss_1}(t+1) \geq r_{min}(t) \gamma_s(t)$. The proof for  $r_{d_1 d}(t+1) \geq r_{min}(t) \gamma_d(t)$ is  similar.
\begin{lemma}\label{lem:inc-leakage1}
For all time $t \geq L$, $r_{ss_1}(t+1) \geq r_{min}(t) \gamma_s(t)$,
for some appropriately defined  function $\gamma_s(t)$, such that $\gamma_s(t) > 1$ for all $t \geq L$.
\end{lemma}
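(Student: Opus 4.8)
The plan is to mirror the argument of \Cref{lem:leakage1} from the leakage case, with the crucial difference that the favorable boost now comes from the \emph{increasing} backward flow rather than from a gap in the leakage coefficients. First I would expand the ratio of pheromone levels at time $t+1$ using the pheromone update rule together with the fact that, since $l_{P_1}=l_{P_2}=0$, the backward flow arriving on $(s,s_1)$ and $(s,s_2)$ at time $t$ is exactly the backward flow that entered at $d$ and travelled along $P_1$ and $P_2$ respectively, namely $\bflow{ss_1}(t)=\bflow{d_1 d}(t-m+1)$ and $\bflow{ss_2}(t)=\bflow{d_2 d}(t-n+1)$. This yields
\begin{align*}
\frac{\psso(t+1)}{\psst(t+1)} = \frac{\psso(t) + \fsso(t) + \bflow{d_1 d}(t-m+1)}{\psst(t) + \fsst(t) + \bflow{d_2 d}(t-n+1)}.
\end{align*}

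Next I would rewrite the backward flows through the linear decision rule as $\bflow{d_i d}(\tau)=\bflow{d}(\tau)\,\nbp{d_i d}(\tau)$ and isolate the two sources of amplification of the numerator over the denominator. For the forward part, $\frac{\psso(t)}{\psst(t)}=\frac{\fsso(t)}{\fsst(t)}=r_{ss_1}(t)\geq r_{min}(t)$ by the definition of the potential. For the backward part I would set $\rho(t)\defeq\frac{\bflow{d}(t-m+1)}{\bflow{d}(t-n+1)}$ and observe that $\rho(t)>1$ because the incoming flow is strictly increasing and $m<n$ forces $t-m+1>t-n+1$ (in the exponential case $\rho(t)=\alpha^{n-m}$, a constant exceeding $1$; in the linear case $\rho(t)>1$ but tends to $1$). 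Combining $\rho(t)$ with \Cref{lem:inc-helper_min_ratio}, which controls the normalized-pheromone ratio, gives
\begin{align*}
\frac{\bflow{d_1 d}(t-m+1)}{\bflow{d_2 d}(t-n+1)} = \rho(t)\,\frac{\nbp{d_1 d}(t-m+1)}{\nbp{d_2 d}(t-n+1)} \geq \rho(t)\,r_{min}(t).
\end{align*}

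With $a(t)\defeq r_{ss_1}(t)/r_{min}(t)\geq 1$ and $b(t)\defeq\frac{\bflow{d_1 d}(t-m+1)}{\bflow{d_2 d}(t-n+1)}\cdot\frac{1}{r_{min}(t)}\geq\rho(t)>1$, factoring $r_{min}(t)$ out of the numerator and denominator gives $\frac{\psso(t+1)}{\psst(t+1)}=r_{min}(t)\,\gamma_s(t)$ with
\begin{align*}
\gamma_s(t) \defeq \frac{a(t)\psst(t) + a(t)\fsst(t) + b(t)\,\bflow{d_2 d}(t-n+1)}{\psst(t) + \fsst(t) + \bflow{d_2 d}(t-n+1)}.
\end{align*}
Since $a(t)\geq 1$ and $b(t)>1$, every term of the numerator dominates the corresponding positive term of the denominator (and the last term does so strictly), so $\gamma_s(t)>1$ for all $t\geq L$, which proves the lemma; the symmetric computation at $d$ gives $r_{d_1 d}(t+1)\geq r_{min}(t)\,\gamma_d(t)$ with $\gamma_d(t)>1$.

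The main obstacle is not this qualitative strict-inequality statement, which follows cleanly once the amplification factor $\rho(t)$ is recognized as the structural analogue of the leakage gap $\alpha-\beta$ in \Cref{lem:leakage1}. Rather, the real difficulty is deferred to the quantitative Step~2: in the exponential case $\rho(t)=\alpha^{n-m}$ stays bounded away from $1$, so one recovers a uniform $\gamma_l>1$ and hence logarithmic convergence exactly as in the leakage proof, but in the linear case $\rho(t)\to 1$ as $t\to\infty$, so $\gamma_s(t)$ degrades toward $1$ and a careful accounting of the cumulative product of the $\gamma_s(t)$ is required to extract the weaker polynomial rate $t=\Omega\!\left((1/\epsilon)^{C_{\delta, len_{P_2}}}\right)$ claimed in \Cref{thm:ub-changing-flow}.
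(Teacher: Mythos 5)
Your proposal is correct and follows essentially the same route as the paper's proof: same expansion of the pheromone ratio under zero leakage, same reliance on \Cref{lem:inc-helper_min_ratio}, and the same factorization $\frac{\psso(t+1)}{\psst(t+1)} = r_{min}(t)\gamma_s(t)$ with $a(t),b(t)\geq 1$. The only difference is cosmetic: you absorb the flow-growth ratio $\rho(t) = \bflow{d}(t-m+1)/\bflow{d}(t-n+1)$ into your $b(t)$ (defined via actual backward flows), whereas the paper defines $b(t)$ via normalized pheromone levels and keeps $\bflow{d}(t-m+1) > \bflow{d}(t-n+1)$ explicit in $\gamma_s(t)$; the two expressions for $\gamma_s(t)$ are algebraically identical.
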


\begin{proof}

 The pheromone levels on edges $(s,s_1)$ and $(s,s_2)$ at time $t+1$ is provided by the following expressions. 
\begin{align*}
    \psso(t+1) = \delta \left(\psso(t) + 
    \fsso(t) +  \fsos(t)\right) \quad 
    \psst(t+1) = \delta \left(\psst(t) + \fsst(t) +  \fsts(t)\right)~.\\
\end{align*}
Therefore, for $t \geq L$, we get
 \begin{align}
\frac{\psso(t+1)}{\psst(t+1)} &= \frac{ \psso(t) + 
    \fsso(t) +  \fsos(t)}{ \psst(t) + \fsst(t) +  \fsts(t)}\\
    &= \frac{ \psso(t) + 
    \fsso(t) +  (1-l_{P_1}) \ \bflow{d_1d}(t-m+1)}{ \psst(t) + \fsst(t) +  (1 - l_{P_2}) \ \bflow{d_2d}(t-n+1)}\\\label{eq:inc-9-s1-n}
    &= \frac{ \psso(t) + 
    \fsso(t) +   \ \bflow{d_1d}(t-m+1)}{ \psst(t) + \fsst(t) +   \ \bflow{d_2d}(t-n+1)}\\
    &= \frac{ \psso(t) + 
    \fsso(t) +   \ \bflow{d}(t-m+1) \nbp{d_1d}(t-m+1)}{ \psst(t) + \fsst(t) +   \ \bflow{d}(t-n+1)\nbp{d_2d}(t-n+1)}~.\label{eq:inc-9-s1}
\end{align}
where we used $l_{P_1} = l_{P_2} = 0$ in \Cref{eq:inc-9-s1-n} and express the flow in terms of normalized pheromone level in \Cref{eq:inc-9-s1}. By the definition of $r_{ss_1}(t)$ and linear decision rule we know that,
\begin{align*}
\frac{\psso(t)}{\psst(t)} = \frac{  
    \fsso(t)}{ \fsst(t)} = r_{ss_1}(t)  \geq r_{min}(t)~.
\end{align*}
From \Cref{lem:inc-helper_min_ratio}, we know that
\begin{align*}
    \frac{\nbp{d_1d}(t-m+1)}{\nbp{d_2d}(t-n+1)} \geq r_{min}(t)~.
\end{align*}
For $t \geq L$, define 
\begin{align}
a(t) \defeq \frac{r_{ss_1(t)}}{r_{min}(t)}, \ b(t) \defeq \frac{\nbp{d_1d}(t-m+1)}{\nbp{d_2d}(t-n+1)}\frac{1}{r_{min}(t)}~.
\end{align}
Note that, from the definition of $r_{min}(t)$ and using \Cref{lem:inc-helper_min_ratio}, we know that $a(t) \geq 1$ and $b(t) \geq 1$, for all $t \geq L$.
Now, we can write
\begin{align}
\label{eq:inc-9-s1_1}
\frac{\psso(t)}{\psst(t)} = \frac{
    \fsso(t)}{ \fsst(t)} = a(t) r_{min}(t), \ \frac{\nbp{d_1d}(t-m+1)}{\nbp{d_2d}(t-n+1)} = b(t) r_{min}(t)
\end{align}
Substituting this in \Cref{eq:inc-9-s1}, we get 
\begin{align}
\label{eq:inc-9-s2}
\frac{\psso(t+1)}{\psst(t+1)} &= 
    r_{min}(t)\frac{a(t) \psst(t) + 
    a(t)\fsst(t) +   b(t)  \ \ \bflow{d}(t-m+1) \nbp{d_2d}(t-n+1)}{ \psst(t) + \fsst(t) +   \ \bflow{d}(t-n+1)\nbp{d_2d}(t-n+1)}\\
    &= r_{min}(t) \gamma_s(t)
\end{align}
where we define 
\begin{align}
    \gamma_s(t) \defeq \frac{a(t) \psst(t) + 
    a(t)\fsst(t) +   b(t)  \ \ \bflow{d}(t-m+1) \nbp{d_2d}(t-n+1)}{ \psst(t) + \fsst(t) +   \ \bflow{d}(t-n+1)\nbp{d_2d}(t-n+1)} 
   .
\end{align}
Since the flow is flow is increasing with time, $\bflow{d}(t-m+1) > \bflow{d}(t-n+1)$, as $n > m$.  Also, $a(t) \geq 1$, $b(t) \geq 1$. Therefore, we get that $\gamma_s(t) > 1$ for all $t \geq L$. 

This completes the proof of the Lemma.
\end{proof}

\paragraph{Step 2.}
In this step, we will lower bound  $\gamma(t)$ to show that $r_{min}(t)$ increases sufficiently every $L$ time steps. For the multiplicative increase case, we will show that $\gamma(t) \geq \left(1 + \frac{c_1}{\alpha^{len_{P_2}}} \right)$ for some constant $c_1 > 0$. For the additive increase case, we will show that $\gamma(t) \geq \left(1 + \frac{c_2}{t} \right)$ for some constant $c_2 > 0$.  If the pheromone levels on the edges are too high as compared to the flow, it will take more time for the relative pheromone levels to change. So we will first show that there exists a time $T_1$, such that for $t \geq T_1$, the pheromone levels and flow are comparable. Our lower bounds for $\gamma(t)$  will hold for all $t \geq T_1$.

\begin{lemma}\label{lem:inc-pbound1}
In the flow dynamics governed by the linear decision rule, the pheromone level on any edge $e=(u,v)$ is always  bounded as follows:
\begin{equation*}
 p_{uv}(t) \leq \frac{2(\lf(t)+\rf(t))}{1-\delta} ~.\quad 
\end{equation*}
for all 
\begin{align*}
t \geq T_1 \defeq \max_{(u,v) \in E}\left( \frac{log \left(\frac{p_{uv}(0)}{\lf(0) + \rf(0)}\right)} {log \left(\frac{1}{\delta}\right)}\right)
\end{align*}
\end{lemma}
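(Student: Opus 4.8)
The plan is to reuse the recursive unrolling of the pheromone update from the constant-flow \Cref{lem:pbound1}, with a single new ingredient: monotonicity of the injected flow is what lets us replace the fixed bounds $\fflow{uv}(t)\le\lf$, $\bflow{uv}(t)\le\rf$ used there by the time-varying bounds $\fflow{uv}(t)\le\lf(t)$, $\bflow{uv}(t)\le\rf(t)$.

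First I would establish the edgewise bound that the flow carried by any edge never exceeds the currently injected flow, i.e. $\fflow{uv}(t)\le\lf(t)$ and $\bflow{uv}(t)\le\rf(t)$ for all $(u,v)$ and $t$. On the parallel-path graph, the forward flow injected at $s$ at some earlier time $\tau$ only splits (at $s$) and possibly leaks as it travels, so the forward flow on any edge at time $t$ is at most the amount injected when it entered the graph, namely $\lf(\tau)$ for some $\tau\le t$ (and is $0$ before the flow reaches that edge). Since $\lf$ and $\rf$ are monotonically increasing in both the exponential and the linear regimes of \Cref{thm:ub-changing-flow}, we have $\lf(\tau)\le\lf(t)$, which yields the claim; the backward direction is symmetric. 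I expect this propagation/monotonicity step to require the most care, in particular checking the time-shift $\tau\le t$ argument at the merge vertex $d$, though on single edges the splitting structure keeps the bound clean.

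Next I would unroll the pheromone recurrence. Starting from
\[
p_{uv}(t)=\delta\bigl(p_{uv}(t-1)+\fflow{uv}(t-1)+\bflow{uv}(t-1)\bigr)
\]
and substituting $\fflow{uv}(t-1)\le\lf(t-1)$ and $\bflow{uv}(t-1)\le\rf(t-1)$, then iterating down to time $0$, gives
\[
p_{uv}(t)\le\delta^{t}p_{uv}(0)+\sum_{i=1}^{t}\delta^{i}\bigl(\lf(t-i)+\rf(t-i)\bigr).
\]
Applying monotonicity once more, $\lf(t-i)\le\lf(t)$ and $\rf(t-i)\le\rf(t)$ for every $i\le t$, so the sum is dominated by the geometric series $\sum_{i\ge 1}\delta^{i}=\delta/(1-\delta)\le 1/(1-\delta)$, contributing at most $(\lf(t)+\rf(t))/(1-\delta)$.

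Finally I would dispose of the transient term $\delta^{t}p_{uv}(0)$. The threshold $T_1$ is defined precisely so that $\delta^{t}p_{uv}(0)\le\lf(0)+\rf(0)$ whenever $t\ge\log\bigl(p_{uv}(0)/(\lf(0)+\rf(0))\bigr)/\log(1/\delta)$, and by monotonicity $\lf(0)+\rf(0)\le\lf(t)+\rf(t)$. Combining the two contributions, for $t\ge T_1$,
\[
p_{uv}(t)\le\bigl(\lf(t)+\rf(t)\bigr)+\frac{\lf(t)+\rf(t)}{1-\delta}\le\frac{2(\lf(t)+\rf(t))}{1-\delta},
\]
where the last step uses $1\le 1/(1-\delta)$ since $\delta\in(0,1)$; taking the maximum over edges in the definition of $T_1$ handles all edges simultaneously. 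Apart from the monotonicity argument of the first paragraph, everything reduces to the same geometric-series bookkeeping as \Cref{lem:pbound1}.
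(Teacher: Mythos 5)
Your proposal is correct and follows essentially the same route as the paper's proof: unroll the pheromone recurrence, bound the edge flows by the injected flows $\lf(\cdot)+\rf(\cdot)$, use monotonicity of the injected flow to dominate the sum by a geometric series times $\lf(t)+\rf(t)$, and use the definition of $T_1$ to absorb the transient term $\delta^t p_{uv}(0)$ into $\lf(0)+\rf(0)\le\lf(t)+\rf(t)$. The only difference is that you make explicit the edgewise bound $\fflow{uv}(t)\le\lf(t)$, $\bflow{uv}(t)\le\rf(t)$ (via the injection-time argument on the parallel-path graph), which the paper uses implicitly without comment.
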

\begin{proof}
Consider any edge $e=(u,v)$ and time $t>0$,
\begin{align*}
    p_{uv}(t) & = \delta(\phe(t-1) + \fflow{uv}(t-1)  + \bflow{uv}(t-1))
     \leq \delta \left(\phe(t-1) + \lf(t-1)+\rf(t-1) \right)~,\\
    & \leq  \sum_{i=1}^{t}\delta ^{i} (\lf(i-1)+\rf(i-1)) + \delta^{t} \phe(0)\\
    &\leq (\lf(t)+\rf(t)) \sum_{i=1}^{t}\delta ^{i}  + \delta^{t} \phe(0)
\end{align*}
where we used that $\fflow{s}(t)$ and $\bflow{d}(t)$ are monotonically increasing in the last inequality.
For $t \geq \frac{log \left(\frac{p_{uv}(0)}{\lf(0) + \rf(0)}\right)} {log \left(\frac{1}{\delta}\right)}$, we get \begin{align}
    p_{uv}(t)  \leq \frac{(\lf(t)+\rf(t))}{1-\delta} + (\lf(0) + \rf(0))
    \leq \frac{2(\lf(t)+\rf(t))}{1-\delta}.
\end{align}
where we used that $\fflow{s}(t)$ and $\bflow{d}(t)$ are monotonically increasing.
Since $T_1 \geq \frac{log \left(\frac{p_{uv}(0)}{\lf(0) + \rf(0)}\right)} {log \left(\frac{1}{\delta}\right)}$, we get that for $t \geq T_1$,
\begin{align}
    p_{uv}(t) 
    \leq \frac{2(\lf(t)+\rf(t))}{1-\delta}.
\end{align}
\end{proof}

In \Cref{eq:inc-9-s1_1}, we defined $a(t)$ and $b(t)$ such that \begin{align*}
\frac{\psso(t)}{\psst(t)} = \frac{
    \fsso(t)}{ \fsst(t)} = a(t) r_{min}(t), \ \frac{\nbp{d_1d}(t-m+1)}{\nbp{d_2d}(t-n+1)} = b(t) r_{min}(t)
\end{align*} 
Similarly, for $t \geq L$, we define $c(t) \defeq \frac{r_{d_1d}(t-n+1)}{r_{min}(t)}$.  From the definition of $r_{min}(t)$, we know that $c(t) \geq 1$ for all $t \geq L$. We will show a relationship between $b(t)$ and $c(t)$ which will be useful later.
\begin{lemma}
\label{lem:inc-helper-s1}
For all $t \geq L$, $1+c(t) r_{min}(t) \leq b(t)(1+r_{min}(t))$.
\end{lemma}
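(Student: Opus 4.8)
The plan is to follow the same purely algebraic route used for the analogous leakage lemma (\Cref{lem:helper-s1}), since the inequality involves only the normalized backward pheromone ratios and never references the actual flow magnitudes — and those ratios behave identically whether the flow is fixed or increasing. The key structural fact I would exploit is that on the two parallel edges into $d$ the two normalized backward pheromone levels sum to one, $\nbp{d_1 d}(\tau) + \nbp{d_2 d}(\tau) = 1$ at every time $\tau$, so the relative pheromone $r_{d_1 d}(\tau) = \nbp{d_1 d}(\tau)/\nbp{d_2 d}(\tau)$ can be inverted to recover each individual normalized level.

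First I would unpack the definition of $c(t)$. Since $r_{d_1 d}(t-n+1) = c(t) r_{min}(t)$ and $r_{d_1 d}(t-n+1) = \nbp{d_1 d}(t-n+1)/\nbp{d_2 d}(t-n+1)$ with $\nbp{d_1 d}(t-n+1) = 1 - \nbp{d_2 d}(t-n+1)$, a short rearrangement yields $1/\nbp{d_2 d}(t-n+1) = 1 + c(t) r_{min}(t)$. Next I would unpack the definition of $b(t)$: from $\nbp{d_1 d}(t-m+1)/\nbp{d_2 d}(t-n+1) = b(t) r_{min}(t)$, substituting the expression just obtained for $1/\nbp{d_2 d}(t-n+1)$ gives $\nbp{d_1 d}(t-m+1) = b(t) r_{min}(t)/(1 + c(t) r_{min}(t))$.

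The final ingredient is the lower bound coming directly from the potential. Because $r_{d_1 d}(t-m+1) \geq r_{min}(t)$ by definition of $r_{min}(t)$, and again using $\nbp{d_2 d}(t-m+1) = 1 - \nbp{d_1 d}(t-m+1)$, I get $\nbp{d_1 d}(t-m+1) \geq r_{min}(t)/(1 + r_{min}(t))$. Chaining this with the expression for $\nbp{d_1 d}(t-m+1)$ derived in the previous step yields $b(t) r_{min}(t)/(1 + c(t) r_{min}(t)) \geq r_{min}(t)/(1 + r_{min}(t))$, and cross-multiplying (all quantities being positive) produces exactly $b(t)(1 + r_{min}(t)) \geq 1 + c(t) r_{min}(t)$, as claimed.

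I do not anticipate a genuine obstacle here: the statement is an identity-plus-monotonicity fact about normalized pheromone ratios on two parallel edges, and the increasing-flow setting introduces nothing new, precisely because $b(t)$ and $c(t)$ were defined through normalized pheromone levels rather than raw flows. The only point requiring care is bookkeeping the two different time indices $t-m+1$ and $t-n+1$, and ensuring that the $r_{min}(t)$ lower bound is applied to $r_{d_1 d}(t-m+1)$, which lies inside the window defining $r_{min}(t)$ whenever $t \geq L$, so that the chain of inequalities is valid for all $t \geq L$.
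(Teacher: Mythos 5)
Your proposal is correct and follows essentially the same argument as the paper's proof: invert the definition of $c(t)$ to get $1/\nbp{d_2 d}(t-n+1) = 1 + c(t)r_{min}(t)$, substitute into the definition of $b(t)$ to express $\nbp{d_1 d}(t-m+1)$, then lower-bound that quantity by $r_{min}(t)/(1+r_{min}(t))$ using $r_{d_1 d}(t-m+1) \geq r_{min}(t)$ and cross-multiply. Your observation that the argument carries over unchanged from the fixed-flow case because $b(t)$ and $c(t)$ are defined via normalized pheromone levels is also exactly why the paper's two proofs (Lemma~\ref{lem:helper-s1} and Lemma~\ref{lem:inc-helper-s1}) are verbatim identical.
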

\begin{proof}
We know 
\begin{align}
    r_{d_1 d}(t-n+1) = \frac{\nbp{d_1 d}(t-n+1)}{\nbp{d_2 d}(t-n+1)} = c(t) r_{min}(t)
\end{align}
which gives
\begin{align}
     \frac{1 - \nbp{d_2 d}(t-n+1)}{\nbp{d_2 d}(t-n+1)} = c(t) r_{min}(t)
\end{align}
which gives $\frac{1}{\nbp{d_2d}(t-n+1)} = 1 + c(t)r_{min}(t)$. 
We also know 
\begin{align}
    \frac{\nbp{d_1 d}(t-m+1)}{\nbp{d_2 d}(t-n+1)} = b(t) r_{min}(t)~.
\end{align}
Substituting $\frac{1}{\nbp{d_2d}(t-n+1)} = 1 + c(t)r_{min}(t)$, we get 
\begin{align}
    \nbp{d_1 d}(t-m+1) = \frac{b(t) r_{min}(t)}{1 + c(t)r_{min}(t)}~.
\end{align}
%%%%%%
Finally, from the definition of $r_{min}(t)$, we know 
\begin{align}
    \frac{\nbp{d_1 d}(t-m+1)}{\nbp{d_2 d}(t-m+1)} = r_{d_1 d}(t-m+1) \geq r_{min}(t).
\end{align}
Since $\nbp{d_2 d}(t-m+1) = 1 - \nbp{d_1 d}(t-m+1)$, this gives
\begin{align}
    {\nbp{d_1 d}(t-m+1)}  \geq \frac{r_{min}(t)}{r_{min}(t) + 1}.
\end{align}
Substituting the value $\nbp{d_1 d}(t-m+1) = \frac{b(t) r_{min}(t)}{1 + c(t)r_{min}(t)}$, we get 
\begin{align}
 \frac{b(t) r_{min}(t)}{1 + c(t)r_{min}(t)} \geq \frac{r_{min}(t)}{r_{min}(t) + 1}.
\end{align}
which implies
\begin{align}
    b(t)(1+r_{min}(t)) \geq 1+c(t) r_{min}(t)  ~.
\end{align}
This finishes the proof of the Lemma.

%%%%%%

\end{proof}

Now, we come to the main part of step 2 where we lower bound $\gamma(t)$  for all $t \geq T_1 + L$. From step 1, we know that $\gamma(t) = min\{\gamma_s(t+L-1),\gamma_s(t+L-2), \cdots, \gamma_s(t), \gamma_d(t+L-1),\gamma_d(t+L-2), \cdots, \gamma_d(t) \}$. Therefore, to lower bound $\gamma(t)$, we need to lower bound $\gamma_s(t)$ and $\gamma_d(t)$.

Below we will prove a lower bound for  $\gamma_s(t)$.

\begin{lemma}\label{lem:inc-leakage2}
For all time $t \geq L+T_1$, $r_{ss_1}(t+1) \geq r_{min}(t) \gamma_s(t)$ where  
\begin{align*}
    \gamma_s(t) \geq \ 1 + \frac{(1 - \delta)\left(\bflow{d}(t-m+1) - \bflow{d}(t-n+1)\right)}{  6(\fflow{s}(t)+\bflow{d}(t))   } 
\end{align*}
\end{lemma}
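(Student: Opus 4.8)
The plan is to follow the template of \Cref{lem:leakage2} (the fixed-flow leakage case), replacing the role of the leakage constants $\alpha,\beta$ with the two backward flow values at the time-shifted arguments. Starting from the expression for $\gamma_s(t)$ established in \Cref{lem:inc-leakage1}, abbreviate $Q_m \defeq \bflow{d}(t-m+1)$, $Q_n \defeq \bflow{d}(t-n+1)$ and $N \defeq \nbp{d_2d}(t-n+1)$, so that
$$\gamma_s(t) = \frac{a(t)\psst(t) + a(t)\fsst(t) + b(t) Q_m N}{\psst(t) + \fsst(t) + Q_n N}.$$
Using $a(t) \ge 1$ in the numerator, and then $Q_n N \le b(t) Q_n N$ (valid since $b(t)\ge 1$) to enlarge the denominator, I would obtain
$$\gamma_s(t) \ge \frac{\psst(t) + \fsst(t) + b(t) Q_m N}{\psst(t) + \fsst(t) + b(t) Q_n N} = 1 + \frac{b(t)(Q_m - Q_n) N}{\psst(t) + \fsst(t) + b(t) Q_n N}.$$
Crucially $Q_m > Q_n$, because the backward flow $\bflow{d}(\cdot)$ is monotonically increasing and $m < n$ forces $t-m+1 > t-n+1$; this already gives a bound strictly larger than $1$. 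The remaining work is purely to extract the clean denominator $6(\fflow{s}(t)+\bflow{d}(t))/(1-\delta)$.

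Dividing numerator and denominator of the fractional term by $b(t) N$ yields
$$\gamma_s(t) - 1 \ge \frac{Q_m - Q_n}{\dfrac{\psst(t)}{b(t)N} + \dfrac{\fsst(t)}{b(t)N} + Q_n},$$
so it suffices to upper bound the three terms in this denominator. For $Q_n = \bflow{d}(t-n+1)$ I use monotonicity of the flow to get $Q_n \le \bflow{d}(t) \le \frac{\fflow{s}(t)+\bflow{d}(t)}{1-\delta}$. For the other two terms I write $\psst(t) = (\psso(t)+\psst(t))\,\nfp{ss_2}(t)$ and $\fsst(t) = \fflow{s}(t)\,\nfp{ss_2}(t)$, giving
$$\frac{\psst(t)}{b(t)N} = (\psso(t)+\psst(t))\,\frac{\nfp{ss_2}(t)}{b(t)\,\nbp{d_2d}(t-n+1)}, \qquad \frac{\fsst(t)}{b(t)N} = \fflow{s}(t)\,\frac{\nfp{ss_2}(t)}{b(t)\,\nbp{d_2d}(t-n+1)}.$$

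The key step, exactly as in the leakage proof, is the cancellation $\frac{\nfp{ss_2}(t)}{b(t)\,\nbp{d_2d}(t-n+1)} \le 1$. This follows from $\nfp{ss_2}(t) = \tfrac{1}{1+r_{ss_1}(t)}$ and $\nbp{d_2d}(t-n+1) = \tfrac{1}{1+r_{d_1d}(t-n+1)}$, together with $r_{ss_1}(t) = a(t)r_{min}(t) \ge r_{min}(t)$, $r_{d_1d}(t-n+1) = c(t)r_{min}(t)$, and the inequality $1 + c(t)r_{min}(t) \le b(t)(1+r_{min}(t))$ from \Cref{lem:inc-helper-s1}. This makes the normalized-pheromone factor disappear and leaves $\frac{\psst(t)}{b(t)N} \le \psso(t)+\psst(t) \le \frac{4(\fflow{s}(t)+\bflow{d}(t))}{1-\delta}$ (by the pheromone upper bound of \Cref{lem:inc-pbound1}) and $\frac{\fsst(t)}{b(t)N} \le \fflow{s}(t) \le \frac{\fflow{s}(t)+\bflow{d}(t)}{1-\delta}$. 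Summing the three bounds gives a denominator at most $\frac{6(\fflow{s}(t)+\bflow{d}(t))}{1-\delta}$, and substituting back produces the claimed bound on $\gamma_s(t)$. I expect the cancellation via \Cref{lem:inc-helper-s1} to be the only delicate point; the rest is the routine application of the pheromone bound and flow monotonicity, and the argument for $\gamma_d(t)$ is entirely symmetric.
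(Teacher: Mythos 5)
Your proposal is correct and follows essentially the same route as the paper's proof: the same reduction via $a(t),b(t)\geq 1$, the same cancellation $\frac{\nfp{ss_2}(t)}{b(t)\,\nbp{d_2d}(t-n+1)}\leq 1$ via \Cref{lem:inc-helper-s1}, the same pheromone bound from \Cref{lem:inc-pbound1}, and the same use of flow monotonicity to reach the constant $6$. The only (cosmetic) differences are that you split the denominator bound as $4+1+1$ rather than the paper's $5+1$, and your manipulation of the intermediate fraction is actually slightly cleaner than the paper's displayed chain.
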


\begin{proof}
From \Cref{lem:inc-leakage1}, we know that 
\begin{align}
\frac{\psso(t+1)}{\psst(t+1)} &= 
    r_{min}(t)\gamma_s(t)
\end{align}
where 
\begin{align*}
   \gamma_s(t) \defeq \frac{a(t) \psst(t) + 
    a(t)\fsst(t) +   b(t)  \ \ \bflow{d}(t-m+1) \nbp{d_2d}(t-n+1)}{ \psst(t) + \fsst(t) +   \ \bflow{d}(t-n+1)\nbp{d_2d}(t-n+1)}
\end{align*}
%We want to show that $\gamma_s(t) \geq \gamma_{s_l}$ for some fixed constant $\gamma_{s_l} > 1$. 
\begin{comment}
We consider two cases:

\textbf{Case 1:} $a(t) \geq 2$.\\
In this case, $\min\left(2, \frac{\alpha}{\beta}\right)$
\begin{align}
    \gamma_s(t) &= 
    \frac{a(t) \psst(t) + 
    a(t)\fsst(t) +   b(t)  \ \ \bflow{d}(t-m+1) \nbp{d_2d}(t-n+1)}{ \psst(t) + \fsst(t) +   \ \bflow{d}(t-n+1)\nbp{d_2d}(t-n+1)}\\ \label{eq:inc-10-s1}
    &\geq \frac{2 \psst(t) + 
    2\fsst(t) +    \ \ \bflow{d}(t-m+1) \nbp{d_2d}(t-n+1)}{ \psst(t) + \fsst(t) +   \ \bflow{d}(t-n+1)\nbp{d_2d}(t-n+1)}\\
     &\geq \ \min\left(2, \frac{\bflow{d}(t-m+1)}{\bflow{d}(t-n+1)}\right) \frac{ \psst(t) + 
    \fsst(t) +    \ \ \bflow{d}(t-n+1) \nbp{d_2d}(t-n+1)}{ \psst(t) + \fsst(t) +   \ \bflow{d}(t-n+1)\nbp{d_2d}(t-n+1)}
     \\
     &= \ \min\left(2, \frac{\bflow{d}(t-m+1)}{\bflow{d}(t-n+1)}\right)\\
     &=  \ \min\left(2, \left( 1 + \frac{\bflow{d}(t-m+1) - \bflow{d}(t-n+1)}{\bflow{d}(t-n+1)} \right) \right)~.
\end{align}
where we used $a(t) \geq 2$, $b(t) \geq 1$ for inequality \ref{eq:inc-10-s1}.

\textbf{Case 2:} $a(t) < 2$.\\
In this case,
\end{comment}
Using $a(t) \geq 1$ and $b(t) \geq 1$, we get
\begin{align}
\gamma_s(t) 
    &\geq \frac{ \psst(t) + 
    \fsst(t) +   b(t)  \ \ \bflow{d}(t-m+1) \nbp{d_2d}(t-n+1)}{ a(t)\psst(t) + a(t)\fsst(t) +   \ b(t)\bflow{d}(t-n+1)\nbp{d_2d}(t-n+1)}\\\label{eq:inc-10-s5}
    &=  1 + \frac{\bflow{d}(t-m+1) - \bflow{d}(t-n+1)}{\frac{ \psst(t)}{b(t)\nbp{d_2d}(t-n+1)} + \frac{ \fsst(t)}{b(t)\nbp{d_2d}(t-n+1)} + \bflow{d}(t-n+1)} 
\end{align}

 Now, to lower bound $\gamma_s(t)$, we need to upper bound $\frac{ \psst(t)}{b(t)\nbp{d_2d}(t-n+1)}$ and  $\frac{ \fsst(t)}{b(t)\nbp{d_2d}(t-n+1)}$. Using the definition of normalized pheromone level, we can write
\begin{align}
    \frac{ \psst(t)}{b(t)\nbp{d_2d}(t-n+1)} &= \left( {p_{ss_1}(t)+p_{ss_2}(t)} \right) \left( \frac{ \nfp{ss_2}(t)}{b(t) \nbp{d_2 d}(t-n+1)} \right) \\\label{eq:inc-10-s2}
    &\leq  \left(\frac {4(\fflow{s}(t)+\bflow{d}(t))}{1-\delta} \right) \left( \frac{ \nfp{ss_2}(t)}{b(t) \nbp{d_2 d}(t-n+1)} \right)
\end{align}
where we used the upper bound for pheromone level shown in \Cref{lem:inc-pbound1}. Similarly, we can write 
\begin{align}
\label{eq:inc-10-s3}
    \frac{ \fsst(t)}{b(t)\nbp{d_2d}(t-n+1)} &=  {\fflow{s}(t)}  \left( \frac{ \nfp{ss_2}(t)}{b(t) \nbp{d_2 d}(t-n+1)} \right).
\end{align}
Combining \Cref{eq:inc-10-s2} and  \Cref{eq:inc-10-s3}, we  get
\begin{align}
    \frac{ \psst(t)}{b(t)\nbp{d_2d}(t-n+1)} + \frac{ \fsst(t)}{b(t)\nbp{d_2d}(t-n+1)} &\leq 
    \left( \frac{4(\fflow{s}(t)+\bflow{d}(t))}{(1 - \delta)}  + {\fflow{s}(t)}\right)\left( \frac{ \nfp{ss_2}(t)}{b(t) \nbp{d_2 d}(t-n+1)} \right)\\\label{eq:inc-10-s4}
    &\leq \left( \frac{5(\fflow{s}(t)+\bflow{d}(t))}{(1 - \delta)}  \right)\left( \frac{ \nfp{ss_2}(t)}{b(t) \nbp{d_2 d}(t-n+1)} \right)
     %&=  C_{\fflow{s}, \bflow{d}, \delta} \left( \frac{a(t) \nfp{ss_2}(t)}{b(t) \nbp{d_2 d}(t-n+1)} \right)
\end{align}
Now, we upper bound  $\left( \frac{ \nfp{ss_2}(t)}{b(t) \nbp{d_2 d}(t-n+1)} \right)$. From the definition of normalized pheromone level, we get 
\begin{align*}
     \frac{ \nfp{ss_2}(t)}{b(t) \nbp{d_2 d}(t-n+1)}  &= \frac{1}{b(t)} \left(\frac{1+r_{d_1 d}(t-n+1)}{1+r_{ss_1}(t)} \right)
\end{align*}
We know $r_{d_1 d}(t-n+1) = c(t) r_{min}(t)$ where $c(t) \geq 1$, and we earlier defined $r_{ss_1}(t) = a(t) r_{min}$ where $a(t) \geq 1$. This gives us 
\begin{align*}
     \frac{ \nfp{ss_2}(t)}{b(t) \nbp{d_2 d}(t-n+1)}  &= \frac{1}{b(t)} \left(\frac{1+c(t) r_{min}(t)}{1+a(t) r_{min}(t)} \right)\\
     &\leq 
     \frac{1}{b(t)} \left(\frac{1+c(t) r_{min}(t)}{1+ r_{min}(t)} \right)
\end{align*}
In \Cref{lem:inc-helper-s1}, we show that $(1+c(t) r_{min}(t)) \leq b(t)(1+r_{min}(t))$, which gives 
\begin{align*}
 \frac{ \nfp{ss_2}(t)}{b(t) \nbp{d_2 d}(t-n+1)}  &\leq 1
\end{align*}

Substituting this in \Cref{eq:inc-10-s4}, we get
\begin{align}
    \frac{ \psst(t)}{b\bflow{d_2d}(t-n+1)} + \frac{ \fsst(t)}{b\bflow{d_2d}(t-n+1)} \leq  \frac{5(\fflow{s}(t)+\bflow{d}(t))}{(1 - \delta)}  
\end{align}
Substituting this in \Cref{eq:inc-10-s5}, we get 
\begin{align}
    \gamma_s(t)
    &\geq  1 + \frac{\bflow{d}(t-m+1) - \bflow{d}(t-n+1)}{ \left( \frac{5(\fflow{s}(t)+\bflow{d}(t))}{(1 - \delta)}  \right) + \bflow{d}(t-n+1)} 
\end{align}

Since the flow is monotonically increasing, we can further simplify this to get
\begin{align}
    \gamma_s(t) &\geq  \ 1 + \frac{\bflow{d}(t-m+1) - \bflow{d}(t-n+1)}{ \left( \frac{6(\fflow{s}(t)+\bflow{d}(t))}{(1 - \delta)}  \right) }  \\
    &=  \ 1 + \frac{(1 - \delta)\left(\bflow{d}(t-m+1) - \bflow{d}(t-n+1)\right)}{  6(\fflow{s}(t)+\bflow{d}(t))   } 
\end{align}
\end{proof}
Using a similar argument, we can bound $\gamma_d(t)$ as
\begin{align*}
    \gamma_d(t) \geq  \ 1 + \frac{(1 - \delta)\left(\fflow{s}(t-m+1) - \fflow{s}(t-n+1)\right)}{  6(\fflow{s}(t)+\bflow{d}(t))   } .
\end{align*}
Now, using these lower bounds on $\gamma_s(t)$ and $\gamma_l(t)$, we can lower bound $\gamma(t)$ for the multiplicative increase and additive increase case.
\begin{lemma}
\label{lem:inc-exp-gamma-lb}
Consider the multiplicative increase case, that is, when $\fflow{s}(t) = \alpha^t\fflow{s}(0)$ and $\bflow{d}(t) = \alpha^t\fflow{d}(0)$ for some $\alpha > 1$. For  all $t \geq L + T_1$.
\begin{align*}
    \gamma(t) \geq C_{\fflow{s}(0),\bflow{d}(0),\delta,\alpha,m}  > 1
\end{align*}
where $C_{\fflow{s}(0),\bflow{d}(0),\delta,\alpha,m}$ is some constant dependent on $\fflow{s}(0), \bflow{d}(0),\delta, \alpha$ and $m$.
\end{lemma}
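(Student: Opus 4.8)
The plan is to start directly from the two uniform per-step bounds already established in \Cref{lem:inc-leakage2} — namely $\gamma_s(t') \geq 1 + \frac{(1-\delta)(\bflow{d}(t'-m+1) - \bflow{d}(t'-n+1))}{6(\fflow{s}(t') + \bflow{d}(t'))}$ together with its mirror image $\gamma_d(t') \geq 1 + \frac{(1-\delta)(\fflow{s}(t'-m+1) - \fflow{s}(t'-n+1))}{6(\fflow{s}(t') + \bflow{d}(t'))}$ — and to substitute the exponential flow profile $\fflow{s}(t) = \alpha^t \fflow{s}(0)$, $\bflow{d}(t) = \alpha^t \bflow{d}(0)$. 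The crucial observation is that every term in both the numerator and denominator of these bounds carries a common factor of $\alpha^{t'}$, which cancels, leaving a lower bound that is \emph{independent} of the time $t'$ at which it is evaluated. This is precisely the structural feature that makes the exponential case yield a constant (rather than vanishing) multiplicative gain, and it is what distinguishes it from the linear-increase case.

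Concretely, I would first compute
$$\bflow{d}(t'-m+1) - \bflow{d}(t'-n+1) = \alpha^{t'+1}\bflow{d}(0)\left(\alpha^{-m} - \alpha^{-n}\right),$$
and $\fflow{s}(t') + \bflow{d}(t') = \alpha^{t'}(\fflow{s}(0) + \bflow{d}(0))$, so that the ratio appearing in the bound for $\gamma_s$ collapses to the constant $\frac{\alpha\,\bflow{d}(0)(\alpha^{-m} - \alpha^{-n})}{\fflow{s}(0) + \bflow{d}(0)}$. Since the path lengths are integers with $m < n$, we have $n \geq m+1$ and hence $\alpha^{-m} - \alpha^{-n} \geq \alpha^{-m}(1 - \alpha^{-1}) = \alpha^{-m}\frac{\alpha - 1}{\alpha}$; this step removes the dependence on $n = len_{P_2}$ and produces the clean bound $\gamma_s(t') \geq 1 + \frac{(1-\delta)(\alpha-1)\alpha^{-m}\bflow{d}(0)}{6(\fflow{s}(0)+\bflow{d}(0))}$, phrased in terms of $\fflow{s}(0), \bflow{d}(0), \delta, \alpha, m$ only, as the subscript of the target constant demands. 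The identical manipulation applied to $\gamma_d$ simply replaces $\bflow{d}(0)$ by $\fflow{s}(0)$ in the numerator.

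Finally, recalling that $\gamma(t) = \min\{\gamma_s(t+L-1), \dots, \gamma_s(t), \gamma_d(t+L-1), \dots, \gamma_d(t)\}$, I would note that each of these $2L$ quantities is bounded below by one of the two $t$-independent constants just derived, and take the smaller one to obtain
$$\gamma(t) \geq 1 + \frac{(1-\delta)(\alpha-1)\alpha^{-m}\min(\fflow{s}(0), \bflow{d}(0))}{6(\fflow{s}(0)+\bflow{d}(0))} \defeq C_{\fflow{s}(0),\bflow{d}(0),\delta,\alpha,m},$$
which is strictly greater than $1$ because $0 < \delta < 1$, $\alpha > 1$, and the initial flows are positive. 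There is no genuinely hard step here — the entire content is the cancellation of $\alpha^{t'}$. The only point requiring care is bookkeeping: I must ensure the whole evaluation window $[t, t+L-1]$ lies inside the regime $t' \geq L + T_1$ where the per-step bounds of \Cref{lem:inc-leakage2} are valid, which is guaranteed by the hypothesis $t \geq L + T_1$ since then $[t, t+L-1] \subseteq [L+T_1, \infty)$.
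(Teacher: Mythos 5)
Your proposal is correct and follows essentially the same route as the paper's proof: both substitute the exponential flow profile into the bounds on $\gamma_s$ and $\gamma_d$ from \Cref{lem:inc-leakage2}, cancel the common $\alpha^{t}$ factor, use $n \geq m+1$ to eliminate the dependence on $len_{P_2}$, and take the minimum over the window defining $\gamma(t)$, arriving at exactly the same constant $C_{\fflow{s}(0),\bflow{d}(0),\delta,\alpha,m} = 1 + \frac{\min(\fflow{s}(0),\bflow{d}(0))(1-\delta)(\alpha-1)}{6(\fflow{s}(0)+\bflow{d}(0))\alpha^{m}}$. Your closing remark on verifying that the window $[t, t+L-1]$ stays in the regime where the per-step bounds hold is a point the paper glosses over, but it is handled identically in substance.
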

\begin{proof}
From \Cref{lem:inc-leakage2}, we know that 
\begin{align*}
    \gamma_s(t) &\geq  \ 1 + \frac{(1 - \delta)\left(\bflow{d}(t-m+1) - \bflow{d}(t-n+1)\right)}{  6(\fflow{s}(t)+\bflow{d}(t))   } \\
    &= 1 + \frac{(1 - \delta)\left(\bflow{d}(0)\alpha^{t-m+1} - \bflow{d}(0){\alpha^{t-n+1}}\right)}{  6(\fflow{s}(0)+\bflow{d}(0))\alpha^t   } \\
    &= 1 + \frac{(1 - \delta)\left(\bflow{d}(0)\alpha^{t-m+1} - \bflow{d}(0){\alpha^{t-n+1}}\right)}{  6(\fflow{s}(0)+\bflow{d}(0))\alpha^t   } \\
    &= 1 + \frac{\bflow{d}(0)(1 - \delta)\left(\alpha^{n-m} - {1}\right)}{  6(\fflow{s}(0)+\bflow{d}(0))\alpha^{n-1}  } \\
    &= 1 + \frac{\bflow{d}(0)(1 - \delta)}{  6(\fflow{s}(0)+\bflow{d}(0)) }\left(\frac{1}{\alpha^{m-1}} - \frac{1}{\alpha^{n-1}} \right)
    %&= \min\left(2, \left(1 + \frac{\bflow{d}(0)(1 - \delta)\left(\alpha^{n-m} - {1}\right)}{  11(\fflow{s}(0)+\bflow{d}(0))\alpha^{n-1}  } \right) \right)\\
    %&= \min\left(2, \left(1 + \frac{\bflow{d}(0)(1 - \delta)\left(1 - \frac{1}{\alpha^{n-m}}\right)}{  11(\fflow{s}(0)+\bflow{d}(0))\alpha^{m-1}  } \right) \right)\\
   %&\geq \min\left(2, \left(1 + \frac{\bflow{d}(0)(1 - \delta)\left(\alpha-1\right)}{  11(\fflow{s}(0)+\bflow{d}(0))\alpha^{m}  } \right) \right)
\end{align*}
Using $n > m$, this gives us
\begin{align*}
    \gamma_s(t) &\geq 1 + \frac{\bflow{d}(0)(1 - \delta)\left(\alpha - 1\right)}{  6(\fflow{s}(0)+\bflow{d}(0))\alpha^{m}  }
\end{align*}
Similarly, we get 
\begin{align*}
\gamma_d(t) \geq 1 + \frac{\fflow{s}(0)(1 - \delta)\left(\alpha - {1}\right)}{  6(\fflow{s}(0)+\bflow{d}(0))\alpha^{m}  } 
\end{align*}
We define 
\begin{align*}
C_{\fflow{s}(0),\bflow{d}(0),\delta,\alpha,m} \defeq  1 + \frac{min(\fflow{s}(0),\bflow{d}(0))(1 - \delta)\left(\alpha - {1}\right)}{  6(\fflow{s}(0)+\bflow{d}(0))\alpha^{m}  } 
\end{align*}
It is easy to see that $\gamma_d(t) \geq C_{\fflow{s}(0),\bflow{d}(0),\delta,\alpha,m}$, $\gamma_s(t) \geq C_{\fflow{s}(0),\bflow{d}(0),\delta,\alpha,m}$, and $ C_{\fflow{s}(0),\bflow{d}(0),\delta,\alpha,m} > 1$, for all $t \geq L + T_1$.
We know that $\gamma(t) = min\{\gamma_s(t+L-1),\gamma_s(t+L-2), \cdots, \gamma_s(t), \gamma_d(t+L-1),\gamma_d(t+L-2), \cdots, \gamma_d(t) \}$. This implies $\gamma(t) \geq C_{\fflow{s}(0),\bflow{d}(0),\delta,\alpha,n,m}$ which completes the proof.
\end{proof}

\begin{lemma}
\label{lem:inc-lin-gamma-lb}
In the additive increase case, that is, when $\fflow{s}(t) = \fflow{s}(0) + \alpha t$ and $\bflow{d}(t) = \fflow{d}(0) + \alpha t$ for some $\alpha > 0$,
\begin{align*}
   \gamma(t) \geq 1 + \frac{\alpha(1 - \delta)}{  6(\fflow{s}(0)+\bflow{d}(0) + 2 \alpha L + 2 \alpha t )   }  
\end{align*}
 for all $t \geq L + T_1$.
\end{lemma}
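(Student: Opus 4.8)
The plan is to substitute the linear flow profile directly into the per-step lower bounds on $\gamma_s$ and $\gamma_d$ supplied by \Cref{lem:inc-leakage2}, and then pass to the minimum over the length-$L$ window that defines $\gamma(t)$. Recall from \Cref{lem:inc-leakage2} (and its stated analogue for $\gamma_d$) that for every $t' \geq L + T_1$,
\begin{align*}
\gamma_s(t') &\geq 1 + \frac{(1-\delta)\bigl(\bflow{d}(t'-m+1) - \bflow{d}(t'-n+1)\bigr)}{6\bigl(\fflow{s}(t') + \bflow{d}(t')\bigr)}, \\
\gamma_d(t') &\geq 1 + \frac{(1-\delta)\bigl(\fflow{s}(t'-m+1) - \fflow{s}(t'-n+1)\bigr)}{6\bigl(\fflow{s}(t') + \bflow{d}(t')\bigr)}.
\end{align*}

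First I would evaluate the numerators under $\bflow{d}(t) = \bflow{d}(0) + \alpha t$ and $\fflow{s}(t) = \fflow{s}(0) + \alpha t$. Because the two time shifts differ by exactly $(n-m)$ steps, the increments telescope to $\bflow{d}(t'-m+1) - \bflow{d}(t'-n+1) = \alpha(n-m)$ and similarly $\fflow{s}(t'-m+1) - \fflow{s}(t'-n+1) = \alpha(n-m)$, independent of $t'$. Since both paths have integer length with $len_{P_1} < len_{P_2}$, i.e.\ $m < n$, we have $n - m \geq 1$, so each numerator difference is at least $\alpha$. This integrality step is the one place where a little care is needed: it is precisely what keeps the numerator bounded away from zero.

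Next I would control the denominator uniformly over the window. By the definition in Step 1, $\gamma(t)$ is the minimum of $\gamma_s$ and $\gamma_d$ evaluated at the times $t, t+1, \dots, t+L-1$, all of which exceed $L + T_1$ since $t \geq L + T_1$, so the lemma applies termwise. For any such $t' \leq t + L - 1$,
\begin{align*}
\fflow{s}(t') + \bflow{d}(t') = \fflow{s}(0) + \bflow{d}(0) + 2\alpha t' \leq \fflow{s}(0) + \bflow{d}(0) + 2\alpha t + 2\alpha L,
\end{align*}
using monotonicity of the flow in $t'$. Combining the constant numerator lower bound $\alpha(n-m) \geq \alpha$ with this uniform denominator upper bound yields $\gamma_s(t'), \gamma_d(t') \geq 1 + \frac{\alpha(1-\delta)}{6(\fflow{s}(0)+\bflow{d}(0)+2\alpha t + 2\alpha L)}$ for every $t'$ in the window.

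Finally, taking the minimum over the window leaves the bound unchanged, giving $\gamma(t) \geq 1 + \frac{\alpha(1-\delta)}{6(\fflow{s}(0)+\bflow{d}(0)+2\alpha L + 2\alpha t)}$ as claimed. I expect no substantive obstacle: once \Cref{lem:inc-leakage2} is in hand, the argument is a direct substitution, and the only genuine subtleties are the integrality bound $n - m \geq 1$ (which guarantees a nonvanishing numerator) and the replacement of the per-step denominators across the whole $L$-window by their largest value at time $t+L$ (which is exactly what produces the $+2\alpha L$ term in the stated denominator).
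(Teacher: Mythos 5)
Your proposal is correct and follows essentially the same route as the paper's proof: substitute the linear flow profile into the bound of \Cref{lem:inc-leakage2}, telescope the numerator to $\alpha(n-m)\geq\alpha$, bound the denominator uniformly over the length-$L$ window (which yields the $+2\alpha L$ term), and take the minimum defining $\gamma(t)$. If anything, you spell out the termwise window argument slightly more explicitly than the paper does, but there is no substantive difference.
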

\begin{proof}
From \Cref{lem:inc-leakage2}, we know that 
\begin{align*}
    \gamma_s(t) &\geq  \ 1 + \frac{(1 - \delta)\left(\bflow{d}(t-m+1) - \bflow{d}(t-n+1)\right)}{  6(\fflow{s}(t)+\bflow{d}(t))   } \\
    &= 1 + \frac{(1 - \delta)\left(\alpha(n-m)\right)}{  6(\fflow{s}(0)+\bflow{d}(0) + 2 \alpha t)   } \\
    &\geq 1 + \frac{\alpha(1 - \delta)}{  6(\fflow{s}(0)+\bflow{d}(0) + 2 \alpha t) } 
\end{align*}
Similarly, we get 
\begin{align*}
\gamma_d(t) \geq 1 + \frac{\alpha(1 - \delta)}{  6(\fflow{s}(0)+\bflow{d}(0) + 2 \alpha t)   }  
\end{align*}

We know that $\gamma(t) = min\{\gamma_s(t+L-1),\gamma_s(t+L-2), \cdots, \gamma_s(t), \gamma_d(t+L-1),\gamma_d(t+L-2), \cdots, \gamma_d(t) \}$. This gives us 
\begin{align*}
   \gamma(t) \geq 1 + \frac{\alpha(1 - \delta)}{  6(\fflow{s}(0)+\bflow{d}(0) + 2 \alpha L + 2 \alpha t )   }  
\end{align*}
\end{proof}

\paragraph{Step 3.} 
In this step, we use the lower bounds on $\gamma(t)$ shown in step 2, to find the rate of convergence.
\begin{lemma}
Consider the multiplicative increase case, that is, when $\fflow{s}(t) = \alpha^t\fflow{s}(0)$ and $\bflow{d}(t) = \alpha^t\fflow{d}(0)$ for some $\alpha > 1$.
For all $t \geq L + T_1 + T_2$, $\nfp{uv}(t) \geq 1 - \epsilon$ and $\nbp{uv}(t) \geq 1 - \epsilon$, for all $(u,v) \in P_1$. Here,
\begin{align*}
T_1 &= \max_{(u,v) \in E}\left( \frac{log \left(\frac{p_{uv}(0)}{\lf(0) + \rf(0)}\right)} {log \left(\frac{1}{\delta}\right)}\right)\\
    T_2 &= L \frac{ log\left(\frac{2}{\epsilon r_{min}(L) }\right)}{log\left(C_{\fflow{s}(1),\bflow{d}(1),\delta,\alpha,m}\right)} \leq \frac{12  (\fflow{s}(0)+\bflow{d}(0))}{(1-\delta)min(\fflow{s}(0),\bflow{d}(0))} \frac{\alpha^m L}{\alpha - 1}  { log\left(\frac{2}{\epsilon r_{min}(L) }\right)}
\end{align*}
 
\end{lemma}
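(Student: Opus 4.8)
The plan is to run the same counting argument as in Step~3 of \Cref{thm:ub-leakage}, now fed by the geometric growth of the potential established for the exponential regime. Write $C \defeq C_{\fflow{s}(0),\bflow{d}(0),\delta,\alpha,m} > 1$, the constant produced by \Cref{lem:inc-exp-gamma-lb}. From Step~1, each new entry $r_{ss_1}(t+1)$ and $r_{d_1d}(t+1)$ into the sliding window exceeds $r_{min}(t)$ (as $\gamma_s,\gamma_d > 1$), so $r_{min}(t)$ is non-decreasing for $t \geq L$; in particular $r_{min}(L+T_1) \geq r_{min}(L)$. From Step~1 combined with \Cref{lem:inc-exp-gamma-lb} we have $r_{min}(t+L) \geq \gamma(t)\,r_{min}(t) \geq C\,r_{min}(t)$ for every $t \geq L+T_1$. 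Iterating this bound $k$ times gives $r_{min}(L+T_1+kL) \geq C^{k}\,r_{min}(L+T_1) \geq C^{k}\,r_{min}(L)$.

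First I would choose $k$ so the right-hand side reaches the target level $2/\epsilon$. Solving $C^{k} r_{min}(L) \geq 2/\epsilon$ yields $k \geq \log(2/(\epsilon\,r_{min}(L)))/\log C$, so taking $T_2 = L\,\log(2/(\epsilon\,r_{min}(L)))/\log C$ (rounding $k$ up to the next integer, which is absorbed into the $\Omega$) guarantees $r_{min}(t) \geq 2/\epsilon$ for all $t \geq L+T_1+T_2$. Note $r_{min}(L) > 0$ because the initial pheromone on $P_1$ is positive by hypothesis, so the logarithm is well defined.

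Next I would translate the potential bound into the claimed statement on normalized pheromone levels. For $t \geq L+T_1+T_2$, using $r_{ss_1}(t) \geq r_{min}(t) \geq 2/\epsilon$ and the identity $\nfp{ss_1}(t) = r_{ss_1}(t)/(1+r_{ss_1}(t)) = 1 - 1/(1+r_{ss_1}(t))$, I get $\nfp{ss_1}(t) \geq 1 - 1/r_{min}(t) \geq 1 - \epsilon/2 \geq 1-\epsilon$, and symmetrically $\nbp{d_1d}(t) \geq 1-\epsilon$. For every remaining edge $(u,v) \in P_1$ — all edges except $(s,s_1)$ in the forward direction and $(d_1,d)$ in the backward direction — the relevant endpoint has a unique incident edge carrying positive pheromone, so $\nfp{uv}(t) = 1$ (respectively $\nbp{uv}(t) = 1$) holds trivially. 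This covers all edges of $P_1$ in both directions.

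Finally I would derive the explicit upper bound on $T_2$ by substituting $C = 1 + x$ with $x = \min(\fflow{s}(0),\bflow{d}(0))(1-\delta)(\alpha-1)/(6(\fflow{s}(0)+\bflow{d}(0))\alpha^{m})$ from \Cref{lem:inc-exp-gamma-lb}, and bounding $1/\log C$ from above. Since $x \in (0,1]$, the elementary inequality $\log(1+x) \geq x/2$ gives $1/\log C \leq 2/x$, and plugging $2/x$ into the expression for $T_2$ reproduces exactly the stated bound $\tfrac{12(\fflow{s}(0)+\bflow{d}(0))}{(1-\delta)\min(\fflow{s}(0),\bflow{d}(0))}\tfrac{\alpha^{m}L}{\alpha-1}\log(2/(\epsilon\,r_{min}(L)))$. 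The only mildly delicate point is this last logarithm estimate; everything else is the bookkeeping of a geometric series, so I do not expect a genuine obstacle here, as the substantive work was already carried out in Steps~1 and~2.
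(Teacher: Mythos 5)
Your proposal is correct and follows essentially the same route as the paper's own proof: monotonicity of $r_{min}$ up to time $L+T_1$, geometric growth by the factor $C_{\fflow{s}(0),\bflow{d}(0),\delta,\alpha,m}$ from \Cref{lem:inc-exp-gamma-lb} every $L$ steps thereafter, conversion of $r_{min}\geq 2/\epsilon$ into the normalized-pheromone bounds (with the off-terminal edges of $P_1$ handled trivially), and an elementary logarithm estimate to get the explicit bound on $T_2$. The only cosmetic difference is that you bound $1/\log C$ via $\log(1+x)\geq x/2$ for $x\in(0,1]$, while the paper uses $\log(1+x)\geq x/(1+x)$ together with $x\leq 1$; both yield the same factor $2/x$ and hence the same final expression.
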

\begin{proof}
In step 1, we show that $r_{min}(t)$ is non-decreasing for $t \geq L$, we know that $r_{min}(L + T_1) \geq r_{min}(L)$. And from step 2, we know that $r_{min}(t)$ increases at least by a factor of $C_{\fflow{s}(0),\bflow{d}(0),\delta,\alpha,n,m}$ every $L$ time steps, for all $t \geq L+T_1$. For clarity of writing, we use $C$ to denote $C_{\fflow{s}(0),\bflow{d}(0),\delta,\alpha,m}$ in the rest of this proof. This  gives us 
\begin{align*}
r_{min}(t) \geq r_{min}(L)  C^{\lfloor\frac{t - L - T_1}{L}\rfloor}
\end{align*}
Let $T_2 = L \frac{ log\left(\frac{2}{\epsilon r_{min}(L) }\right)}{log(C)}$. For  $t \geq L+T_1+T_2$ time steps, we would get that $r_{min} \geq \frac{2}{\epsilon}$, which implies $\nfp{ss_1}(t) \geq 1 - \epsilon$ and $\nbp{d_1 d}(t) \geq 1 - \epsilon$. Since the pheromone level on all edges on $P_1$ is always non-zero, we trivially know that $\nfp{uv}(t) = 1$ for $(u, v) \in P_1 \setminus {(s,s_1)}$ and $\nbp{uv}(t) = 1$ for $(u, v) \in P_1 \setminus {(d_1, d)}$. This gives us that for $t \geq L + T_1 + T_2$, normalized pheromone level on all edges on $P_1$ are at least $1- \epsilon$.

To get the desired upper bound on $T_2$, we use $\frac{x}{1+x} \leq log(1+x)$ for all $x \geq -1$. Using this we can write 
\begin{align*}
\frac{1}{log(C)} &= \frac{1}{log\left( 1 + \frac{min(\fflow{s}(0),\bflow{d}(0))(1 - \delta)\left(\alpha - {1}\right)}{  6(\fflow{s}(0)+\bflow{d}(0))\alpha^{m}  }  \right)}\\
&\leq 1 + \frac{  6(\fflow{s}(0)+\bflow{d}(0))\alpha^{m}  } {min(\fflow{s}(0),\bflow{d}(0))(1 - \delta)\left(\alpha - {1}\right)}\\
&\leq \frac{  12(\fflow{s}(0)+\bflow{d}(0))\alpha^{m}  } {min(\fflow{s}(0),\bflow{d}(0))(1 - \delta)\left(\alpha - {1}\right)}
\end{align*}
Substituting this in the expression for $T_2$, we get the desired bound.
\end{proof}
 Note that the expression for $T_2$ involves a $\frac{\alpha^m}{\alpha - 1}$ term. So if $\alpha$ is a constant, this would give exponential dependence on the length of the shortest path $m$. But by setting $\alpha \approx 1+ \frac{1}{m}$, we would get $\frac{\alpha^m}{\alpha - 1} \approx m$ (for $m$ large enough), making the dependence on the length of the shortest path linear. 
\begin{lemma}
Consider the additive increase case, that is, when $\fflow{s}(t) = \fflow{s}(0) + \alpha t$ and $\bflow{d}(t) = \fflow{d}(0) + \alpha t$ for some $\alpha > 0$.
For all $t \geq (L + T_1 + T_2) \left(\frac{2}{r_{min}(L)\epsilon} \right)^{{C_{\delta,L}}}$, $\nfp{uv}(t) \geq 1 - \epsilon$ and $\nbp{uv}(t) \geq 1 - \epsilon$, for all $(u,v) \in P_1$. Here,
\begin{align*}
T_1 &= \max_{(u,v) \in E}\left( \frac{log \left(\frac{p_{uv}(0)}{\lf(0) + \rf(0)}\right)} {log \left(\frac{1}{\delta}\right)}\right)\\
    T_2 &= \frac{2\alpha L + \fflow{s}(0) + \bflow{d}(0)}{2 \alpha}
\end{align*}
\end{lemma}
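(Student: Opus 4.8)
The plan is to follow Step 3 of the exponential case, but to handle the fact that here the per-window multiplicative gain degrades with time. Recall from Step 1 that $r_{min}(t)$ is non-decreasing for $t \geq L$ and satisfies $r_{min}(t+L) \geq r_{min}(t)\gamma(t)$, and from \Cref{lem:inc-lin-gamma-lb} that for $t \geq L + T_1$,
\[
\gamma(t) \geq 1 + \frac{\alpha(1-\delta)}{6\left(\fflow{s}(0)+\bflow{d}(0) + 2\alpha L + 2\alpha t\right)}.
\]
First I would iterate the recursion across $k$ consecutive windows of length $L$ starting at $L+T_1$: writing $s_j \defeq L + T_1 + jL$ and using the non-decreasing property to replace $r_{min}(L+T_1)$ by $r_{min}(L)$, I obtain
\[
r_{min}(L+T_1+kL) \;\geq\; r_{min}(L)\prod_{j=0}^{k-1}\gamma(s_j).
\]

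Next I would take logarithms and convert the product into a sum. The crucial difference from the exponential case is that $\gamma(s_j)-1$ decays like $1/j$, so the individual gains vanish; nevertheless their sum diverges harmonically, which is exactly what keeps $r_{min}$ growing to infinity. Since the lower bound on $\gamma(s_j)-1$ is always at most $1$ (the denominator contains the term $12\alpha L$, which dominates $\alpha(1-\delta)$), I may apply $\log(1+x)\geq x/2$ on $[0,1]$ to each factor, giving
\[
\log\!\left(\prod_{j=0}^{k-1}\gamma(s_j)\right) \geq \sum_{j=0}^{k-1}\frac{\alpha(1-\delta)}{12\left(\fflow{s}(0)+\bflow{d}(0)+2\alpha L + 2\alpha s_j\right)} = \frac{1-\delta}{24L}\sum_{j=0}^{k-1}\frac{1}{j_0 + j},
\]
where $j_0 \defeq \frac{\fflow{s}(0)+\bflow{d}(0)+4\alpha L + 2\alpha T_1}{2\alpha L}$. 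A short computation using $T_2 = \frac{2\alpha L + \fflow{s}(0)+\bflow{d}(0)}{2\alpha}$ yields the clean identity $j_0 L = L + T_1 + T_2$, which is what ultimately produces the stated form of the convergence time.

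To finish, I would bound the harmonic sum below by an integral, $\sum_{j=0}^{k-1}\frac{1}{j_0+j} \geq \int_0^k \frac{dx}{j_0+x} = \log\frac{j_0+k}{j_0}$, giving the polynomial growth $r_{min}(L+T_1+kL) \geq r_{min}(L)\big(\frac{j_0+k}{j_0}\big)^{(1-\delta)/(24L)}$. Solving for the number of windows needed to force $r_{min}\geq 2/\epsilon$ gives $j_0+k \geq j_0\big(\tfrac{2}{\epsilon\, r_{min}(L)}\big)^{24L/(1-\delta)}$, so that $t = L+T_1+kL \geq j_0 L\big(\tfrac{2}{\epsilon\, r_{min}(L)}\big)^{24L/(1-\delta)} = (L+T_1+T_2)\big(\tfrac{2}{\epsilon\, r_{min}(L)}\big)^{C_{\delta,L}}$ with $C_{\delta,L} = \tfrac{24L}{1-\delta}$. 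Finally, $r_{min}(t)\geq 2/\epsilon$ implies $\nfp{ss_1}(t) = \frac{r_{ss_1}(t)}{1+r_{ss_1}(t)} \geq \frac{r_{min}(t)}{1+r_{min}(t)} \geq 1-\epsilon$, and symmetrically $\nbp{d_1d}(t)\geq 1-\epsilon$; since every other edge of $P_1$ carries positive pheromone and has no competitor, its normalized level equals $1$, establishing the claim for all $(u,v)\in P_1$.

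The main obstacle is the harmonic/logarithmic accounting: unlike the exponential regime, where $\gamma$ stays bounded away from $1$ and a geometric product suffices, here I must show that a product of factors each tending to $1$ still diverges, and extract the correct polynomial exponent. The delicate points are justifying $\log(1+x)\geq x/2$ uniformly over the window indices (immediate here, since the gain is always below $1$) and, more importantly, aligning the integral lower bound with the definition of $j_0$ so that the constant collapses exactly to $(L+T_1+T_2)/L$, thereby yielding the precise super-polynomial convergence time with exponent $C_{\delta,L}$.
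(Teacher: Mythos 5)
Your proof is correct and takes essentially the same route as the paper's: iterate the per-window gain $\gamma$, convert the product to a harmonic-type sum via $\log(1+x)\geq x/2$, lower-bound that sum by a logarithm, and solve for the threshold, finishing with monotonicity of $r_{min}$ and the trivial edges of $P_1$. The only (harmless) difference is that you start the windows at $L+T_1$ and keep the exact denominator, exploiting the identity $j_0L = L+T_1+T_2$, whereas the paper first waits until $t\geq T_2$ to crudely bound the denominator by $4\alpha t$; this gives you the slightly sharper exponent $C_{\delta,L} = 24L/(1-\delta)$ versus the paper's $48L/(1-\delta)$, which is immaterial since the statement only requires a constant depending on $\delta$ and $L$.
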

\begin{proof}
 From step 2, we know that $r_{min}(t)$ increases at least by a factor of $\gamma(t)$  every $L$ time steps for $t \geq L+T_1$, where
\begin{align*}
   \gamma(t) \geq 1 + \frac{\alpha(1 - \delta)}{  6(\fflow{s}(0)+\bflow{d}(0) + 2 \alpha L + 2 \alpha t )   }  
\end{align*}
Note that for $t \geq T_2$,
\begin{align*}
   \gamma(t) &\geq 1 + \frac{\alpha(1 - \delta)}{  6(4 \alpha t )   }  \\
   &= 1 + \frac{C_{ \delta}}{t}
\end{align*}
where we define $C_{ \delta}
\defeq \frac{(1-\delta)}{24}$. Using $1+x \geq exp({\frac{x}{2}})$ for $0<x<1$, we get 
\begin{align*}
   \gamma(t) &\geq  exp\left({\frac{C_\delta}{2t}}\right)
\end{align*}
This gives us that $r_{min}(t)$ increases at least by a factor of $exp\left({\frac{C_\delta}{2t}}\right)$ every $L$ time steps for $t \geq L+T_1+T_2$. 
In step 1, we show that $r_{min}(t)$ is non-decreasing for $t \geq L$, so we know that $r_{min}(L + T_1+T_2) \geq r_{min}(L)$.  Define $C_{\delta,L} = \frac{2L}{C_\delta}$ and  $T \defeq L+T_1+T_2$. Also, suppose $t$ is of the form $T+kL$ for some positive integer $k$.  Then this gives us

\begin{align*}
r_{min}(t) &\geq r_{min}(L) \left( exp\left(\frac{C_\delta}{2T} \right) exp\left(\frac{C_\delta}{2(T+L)} \right) \cdots exp\left(\frac{C_\delta}{2(t-L)} \right) \right)\\
&= r_{min}(L)  exp\left(\frac{C_\delta}{2L} \sum_{i=0}^{\frac{t-T}{L}-1} \frac{1}{\frac{T}{L}+i} \right)\\
&\geq r_{min}(L)exp\left(\frac{C_\delta}{2L} log\left(\frac{t}{T}\right) \right)\\
&= r_{min}(L)exp\left(\frac{1}{C_{\delta, L}} log\left(\frac{t}{T}\right) \right)
%&\geq r_{min}(L)  exp\left(C_{\delta,L} \sum_{i=1}^{\frac{t-T}{L}} \frac{1}{i} \right)\\
%&\geq r_{min}(L)exp\left(C_{\delta, L} log(\frac{t-T}{L}) \right)
\end{align*}
where we used $\sum_{i=0}^{x-1} \frac{1}{c+i} \geq log(\frac{x+c}{c})$ for the last inequality. 
From here, we get that for $t \geq T \left(\frac{2}{r_{min}(L)\epsilon} \right)^{{C_{\delta,L}}}$, $r_{min}(t) \geq \frac{2}{\epsilon}$. 
In the calculations above, we also assumed that $t$ is of the form $T+kL$ for some positive integer $k$. But as we know that $r_{min}(t)$ is monotonically increasing for all $t \geq L$, $r_{min}(t) \geq \frac{2}{\epsilon}$ holds for all $t \geq T \left(\frac{2}{r_{min}(L)\epsilon} \right)^{{C_{\delta,L}}}$.
This  implies $\nfp{ss_1}(t) \geq 1 - \epsilon$ and $\nbp{d_1 d}(t) \geq 1 - \epsilon$. Since the pheromone level on all edges on $P_1$ is always non-zero, we trivially know that $\nfp{uv}(t) = 1$ for $(u, v) \in P_1 \setminus {(s,s_1)}$ and $\nbp{uv}(t) = 1$ for $(u, v) \in P_1 \setminus {(d_1, d)}$. This gives us that for $t \geq T \left(\frac{2}{r_{min}(L)\epsilon} \right)^{{C_{\delta,L}}}$, normalized pheromone level on all edges on $P_1$ are at least $1- \epsilon$.
\end{proof}

This completes the proof of \Cref{thm:ub-changing-flow}.

\subsection{Connecting leakage and number of vertices}
\label{app:connect_leakage_vertices}

Here, we show that for any graph (not necessarily the one with parallel paths), the path with minimum leakage also has approximately the minimum number of vertices as long as variation in leakage between different vertices is not too large. (The leakage we consider in the lemma below are only for non-terminal vertices. By convention, we do not have leakage at terminal vertices in our model.)

\begin{lemma}
Suppose for all pairs of  vertices $u$ and $v$, $\log(1-l_u)$ and $log(1-l_v)$ are within a $(1+\epsilon)$ factor of each other, where $l_v \in (0, 1)$ for all   $v$. Then the path with the minimum leakage has number of vertices at most $(1+\epsilon)$ times the path with the minimum number of vertices.
\end{lemma}
\begin{proof}
Let $\max_v ~ (1-l_v) = \alpha$ for some $\alpha \in (0, 1)$. Then, since $\log(1-l_u)$ and $\log(1-l_v)$ are within a factor of $(1+\epsilon)$ for all pairs of vertices $u$ and $v$, we get 
\[
\min_v ~ (1-l_v) \geq \alpha^{1+\epsilon}.
\]
Let $P_1$ be the path with the minimum number of vertices between $s$ and $d$. And let the number of vertices on $P_1$ between equal to $m$ (not counting $s$ and $d$). Let $P_2$ be the path with the minimum leakage among all paths between $s$ and $d$, and let the number of vertices on $P_2$ be equal to $n$ (not counting $s$ and $d$). 

Then, by the bounds on leakage, we get bounds on path leakage for $P_1$ and $P_2$,
\begin{align*}
l_{P_1} &\leq 1- \alpha^{m(1+\epsilon)},\\
l_{P_2} &\geq 1 - \alpha^{n}.
\end{align*}

Since $l_{P_2} \leq l_{P_1}$ ($P_2$ is the path with the minimum leakage), we get
\[
\alpha^n \geq \alpha^{m(1+\epsilon)},
\]
which implies
\[
m(1+\epsilon)\log \left(\frac{1}{\alpha} \right) \geq n \log \left(\frac{1}{\alpha} \right),
\]
which gives
\[
n \leq m(1+\epsilon).
\]
Therefore, the path with the minimum leakage has number of vertices at most $(1+\epsilon)$ times the path with the minimum number of vertices.
\end{proof}

We do not consider the degenerate case of $l_v = 0$ or $l_v = 1$ in the above lemma as in that case, the condition on leakage would imply that either all vertices have leakage $0$ in which case all paths have leakage $0$, or all vertices have leakage $1$ in which case all paths have leakage $1$. Our claims on the effect of leakage are only relevant and interesting when we are not in these degenerate cases.

\newcommand{\epspp}{\epsp'}
\newcommand{\lo}{\alpha}
\newcommand{\lt}{\beta}
\newcommand{\cue}{c_{g, r}}
\section{Characterization of other rules}
% Here we prove the results stated in \Cref{sec:general_rules_lb} and for convenience we recall the definition of $\setflb$. Let $\setflb$ be the set of all symmetric, continuous, monotone, pheromone ratio based rules on the 2 parallel path graph; formally any $g \in \setflb$ satisfy the following:
% \begin{itemize}
%     \item The rule $g$ is symmetric, that is, the same rule applies to both $s$ and $d$.
%     \item The rule $g:[0,1/2] \rightarrow [0,1]$ is pheromone ratio based, meaning at all time steps it takes the minimum of the two normalized pheromone levels on edges belonging to the two paths and returns the fraction of the flow on the minimum edge.
%     \item The rule $g$ is monotonically non-decreasing, meaning $g(x) \leq g(x')$ for all $x,x' \in [0,1/2]$ and $x \leq x'$. 
%     %Further, $g$ is a continuous function, and satisfies $g(0)=0$ and $g(1/2)=1/2$.
% \end{itemize}

\subsection{Necessity of bidirectional flow (\Cref{prop:one})}
\firstprop*
\begin{proof}[Proof of \Cref{prop:one}]
Let $P_1$ and $P_2$ denote be two parallel paths between $s$ and $d$, and $s_1$ and $s_2$ be neighboring vertices of $s$ on $P_1$ and $P_2$ respectively. Without loss of generality, let $P_1$ be the minimum leakage or the shortest path, and let the flow be unidirectional from $s$ to $d$.

%In the setting of one sided flow, for the dynamics on $G$ governed by the pheromone based rule $g$ to converge to any single path it should be true that,
%$$\psso(t) \rightarrow 0 \text{ or } \psst(t) \rightarrow 0 \text{ as }t \rightarrow \infty~.$$
As the flow is unidirectional, we get that  the pheromone levels on the edges incident on $s$ is only a function of
their initial pheromone levels and forward flow at $s$. That is, for some function $F$, $\psso(t)$ and $\psst(t)$ can be written as
\begin{align*}
\psso(t) &= F(\psso(0),\psst(0),\fflow{s}(0), \fflow{s}(1),\dots \fflow{s}(t-1))\\
\psst(t) &= F(\psst(0),\psso(0),\fflow{s}(0), \fflow{s}(1),\dots \fflow{s}(t-1))
\end{align*}
Given $(\psso(0),\psst(0))$, and $\fflow{s}(t)$ for all $t \geq 0$, suppose the dynamics converges to $P_1$.  Now if we swap the initial pheromone level on the two edges incident to $s$, then the dynamics would converge to $P_2$. Therefore, for one of these two initial pheromone settings, the dynamics does not converge to the minimum leakage or the shortest path.
\end{proof}

\renewcommand{\epsp}{\epsilon}

\subsection{Necessity of the linear rule for convergence to the minimum leakage path (\Cref{prop:two})}
\secondprop*
\begin{proof}[Proof of \Cref{prop:two}]
For notational convenience, we define $\alpha=1-l_{P_1}$ and $\beta=1-l_{P_2}$. Without loss of generality we assume $P_{1}$ to be the path of minimum leakage and therefore we let $\alpha \geq \beta$. Let $n$ and $m$ be the number of vertices between $s$ and $d$ on paths $P_1$ and $P_2$ respectively. We name the vertices from left to right on path $P_1$ by $v_{0}$ to $v_{n+1}$ and $P_2$ by $u_0$ to $u_{m+1}$, with the convention $v_0=u_0=s$ and $v_{n+1}=u_{m+1}=d$. We also let $s_1=v_1,s_2=u_1,d_1=v_n$ and $d_2=u_m$. Let $\fflow{s}$ and $\bflow{d}$ denote the fixed incoming forward and backward flow. We divide the analysis into two cases.
For any non-linear $g \in \setflb$, there exists an $r \in (0,1/2)$ such that $g(r)\neq r$. For such an $r$, one of the following two conditions holds:
\begin{itemize}
    \item $g(r)< r$.
    \item $g(r)>r$.
\end{itemize}

\paragraph{Case 1:} Suppose $g(r)< r$, then pick the following initial configuration:

\begin{itemize}
\item Assign pheromone value on the edges $(s,s_1),(s,s_2), (d_1, d),(d_2, d)$ such that the normalized pheromone level on $(s,s_1),(d_1, d)$ is $\leq r$, that is $\npsso(0),\nbp{d_1d}(0) \leq r$. Further assign the flow values on the edges such that they satisfy the following,
\begin{equation}
     \fflow{v_{i}v_{i+1}}(0) \leq \fflow{s}\cdot r \cdot\prod_{  j \leq i} (1-l_{v_{j}}) \text{  and }\bflow{v_{i}v_{i+1}}(0) \leq \bflow{d}\cdot r\cdot\prod_{j \geq i+1} (1-l_{v_{j}})
\end{equation}
\begin{equation}
    \fflow{u_{i}u_{i+1}}(0) \geq \fflow{s}\cdot (1-r)\cdot\prod_{j \leq i} (1-l_{u_{j}}) \text{ and }\bflow{u_{i}u_{i+1}}(0) \geq \bflow{d}\cdot (1-r)\cdot\prod_{j \geq i+1} (1-l_{u_{j}})~.
\end{equation}
(Since we do not assume any leakage at the terminal vertices while defining the path leakage (see definition \ref{def:path_leakage}), we set $l_{v_0} = l_{u_0} = l_{v_{n+1}} = l_{u_{m+1}}=0$ in the expressions above.  )
\item Let the leakage value at each vertex be such that the parameters $\alpha$ and $\beta$ satisfy the following inequality, 
$\frac{\lo}{\lt} \leq 1+ \frac{\cue}{4 \cdot r} \cdot  \min\left(\frac{\fflow{s}}{\bflow{d}},\frac{\bflow{d}}{\bflow{s}}\right)$, where $\cue \defeq r-g(r)>0$. As the forward and backward flow is fixed, the above constraint on the leakage parameters only depend on $g$ and therefore satisfy the conditions of the theorem.
\end{itemize}
We show by induction that the above inequalities on flow and pheromone levels hold for all time $t\geq 0$ and therefore the system does not converge to the minimum leakage path.

\paragraph{Hypothesis:} At time $t\geq 0$, pheromone values on the edges $(s,s_1),(s,s_2), (d_1, d),(d_2, d)$ are such that the normalized pheromone level on $(s,s_1),(d_1, d)$ is $\leq r$, that is $\npsso(t),\nbp{d_1d}(t) \leq r$. Further the flow values on the edges satisfy the following,
\begin{equation}
    \fflow{v_{i}v_{i+1}}(t) \leq \fflow{s}\cdot r\cdot\prod_{j \leq i} (1-l_{v_{j}}) \text{ and }\bflow{v_{i}v_{i+1}}(t) \leq \bflow{d}\cdot r\cdot\prod_{j \geq i+1} (1-l_{v_{j}})
\end{equation}
\begin{equation}
    \fflow{u_{i}u_{i+1}}(t) \geq \fflow{s} \cdot(1-r)\cdot\prod_{j \leq i} (1-l_{u_{j}}) \text{ and }\bflow{u_{i}u_{i+1}}(t) \geq \bflow{d} \cdot(1-r)\cdot\prod_{j \geq i+1} (1-l_{u_{j}})~.
\end{equation}

\paragraph{Base case:} The conditions trivially hold at time $0$ because of the initial setting of flow and pheromone levels described above. 

\paragraph{Induction Step:} To prove the hypothesis for time $t+1$, all we need to show is that, $\npsso(t+1),\npddo(t+1) \leq r$, $\fflow{v_0v_1}(t+1)=\fflow{ss_1}(t+1) \leq \fflow{s} \cdot r$ and $\bflow{v_{n}v_{n+1}}(t+1)=\bflow{d_1d}(t+1) \leq \bflow{d} \cdot r$; all the remaining inequalities follow from these basic inequalities. Also note that from the definition of our case, that is $g(r)<r$, we get that the inequalities $\npsso(t+1),\npddo(t+1) \leq r$ further imply the following,
$$\fflow{ss_1}(t+1) = \fflow{s} \cdot g(\npsso(t+1)) \leq  \fflow{s}\cdot g(r) = \fflow{s}\cdot r~,$$ 
and similarly, 
$$\bflow{d_1d}(t+1) < \bflow{d} \cdot r~.$$
In the above we used monotonically non-decreasing property of decision rule $g$. Therefore it is enough to show that $\npsso(t+1),\npddo(t+1) \leq r$. As the proof for the bound on $\npddo(t+1)$ is analogous to that of $\npsso(t+1)$, in the remainder we focus our attention towards the proof for $\npsso(t+1)$. Recall the definition of $\npsso(t+1)$,
\begin{align}
\npsso(t+1)=\frac{\psso(t+1)}{\psso(t+1)+\psst(t+1)} &= \frac{ \psso(t) + 
    \fsso(t) +  \fsos(t)}{ \psst(t) + \psso(t)+ \fsst(t) +\fsso(t) +  \fsts(t)+\fsos(t)}
\end{align}
We know by the induction step that,
\begin{align}\label{eq:P41}
\frac{\psso(t)}{\psst(t)+\psso(t)} < r~. 
\end{align}
Also note that,
\begin{equation}\label{eq:P42}
    \frac{\fsos(t)}{ \fsts(t)+\fsos(t)}=\frac{\frac{\fsos(t)}{\fsts(t)}}{ 1+\frac{\fsos(t)}{\fsts(t)}} \leq \frac{\frac{r \alpha}{(1-r) \beta}}{1+\frac{r \alpha}{(1-r) \beta}}=\frac{r \alpha}{\beta+r(\alpha-\beta)}  \leq \frac{\alpha}{\beta} \cdot r~.
\end{equation}
In the above we used the monotonically non-decreasing property of $x/(1+x)$ and the conditions provided by the induction step at time $t$, that is $\fsos(t)\leq \bflow{d} \cdot r \cdot \alpha$ and $\fsts(t)\geq \bflow{d} \cdot (1-r)\cdot \beta$; which implies $\frac{\fsos(t)}{\fsts(t)} \leq \frac{r \alpha}{(1-r) \beta}$. In the fourth inequality, we used $\alpha-\beta \geq 0$ and $r\geq 0$.

Let $\cue \defeq r-g(r)>0$ and note that we have the following upper bound on the flow value,
\begin{equation}\label{eq:P43}
    \fsso(t) = \fflow{s} \cdot g(\npsso(t)) \leq  \fflow{s}\cdot g(r) = \fflow{s}\cdot(r-\cue)~.
\end{equation}
In the above we used $\npsso(t) \leq r$ and monotonicity of decision rule $g$.
Using these bounds, we provide an upper bound for the normalized pheromone level.
\begin{align}
\npsso(t+1)&= \frac{ \psso(t) + 
    \fsso(t) +  \fsos(t)}{ \psst(t) + \psso(t)+ \fsst(t) +\fsso(t) +  \fsts(t)+\fsos(t)},\\
    &\leq \frac{ r\cdot (\psst(t) + \psso(t)) +\fsso(t)+r\cdot\frac{\alpha}{\beta}\cdot (\fsts(t)+\fsos(t))}{ \psst(t) + \psso(t)+ \fsst(t) +\fsso(t) +  \fsts(t)+\fsos(t)},\\
    & = r+\frac{r\cdot\left(\frac{\alpha}{\beta}-1\right)\cdot (\fsts(t)+\fsos(t))+\fsso(t)-r(\fsso(t) +  \fsst(t))}{ \psst(t) + \psso(t)+ \fsst(t) +\fsso(t) +  \fsts(t)+\fsos(t)},\\
    & \leq r+\frac{2r\bflow{d}\cdot\left(\frac{\alpha}{\beta}-1\right)+ \fflow{s}\cdot (r-\cue)-r\cdot \fflow{s}}{ \psst(t) + \psso(t)+ \fsst(t) +\fsso(t) +  \fsts(t)+\fsos(t)},\\
    & \leq r+\frac{2r\bflow{d}\cdot\left(\frac{\alpha}{\beta}-1\right)-\cue \fflow{s}}{ \psst(t) + \psso(t)+ \fsst(t) +\fsso(t) +  \fsts(t)+\fsos(t)},\\
    & \leq r-\frac{\frac{1}{2}\cue \cdot 
    \fflow{s}}{ \psst(t) + \psso(t)+ \fsst(t) +\fsso(t) +  \fsts(t)+\fsos(t)},\\
    &\leq r~.
\end{align} 
In the second inequality we used \Cref{eq:P41,eq:P42}. In the third equality, we rearranged the terms. In the fourth inequality, we used $\fsts(t),\fsos(t) \leq \bflow{d}$,  $\fsso(t) +  \fsst(t) = \fflow{s}$ and \Cref{eq:P43}. In the fifth inequality, we simplified the expression. The sixth inequality follows because $\alpha$, $\beta$, $\fflow{s}$ and $\bflow{d}$ satisfy $\frac{\lo}{\lt} \leq 1+ \frac{\cue}{4 \cdot r} \cdot  \min\left(\frac{\fflow{s}}{\bflow{d}},\frac{\bflow{d}}{\fflow{s}}\right)$. Therefore, the previous derivation gives us,
$$\npsso(t+1) \leq r~,$$ 
and we conclude the first case.

\paragraph{Case 2:} Suppose $g(r)> r$, then pick the following initial configuration:
\begin{itemize}
\item Assign pheromone value on the edges $(s,s_1),(s,s_2), (d_1, d),(d_2, d)$ such that the normalized pheromone level on $(s,s_2),(d_2, d)$ is $\geq r$, that is $\npsst(0),\nbp{d_2d}(0) \geq r$. Further assign the flow values on the edges such that they satisfy the following,
\begin{equation}
    \fflow{v_{i}v_{i+1}}(0) \leq \fflow{s}\cdot (1-r)\cdot\prod_{j \leq i} (1-l_{v_{j}}) \text{ and }\bflow{v_{i}v_{i+1}}(0) \leq \bflow{d}\cdot (1-r)\cdot\prod_{j \geq i+1} (1-l_{v_{j}})
\end{equation}
\begin{equation}
    \fflow{u_{i}u_{i+1}}(0) \geq \fflow{s} \cdot r\cdot\prod_{j \leq i} (1-l_{u_{j}}) \text{ and }\bflow{u_{i}u_{i+1}}(0) \geq \bflow{d}\cdot r\cdot\prod_{j \geq i+1} (1-l_{u_{j}})~.
\end{equation}
(Since we do not assume any leakage at the terminal vertices while defining the path leakage (see definition \ref{def:path_leakage}), we set $l_{v_0} = l_{u_0} = l_{v_{n+1}} = l_{u_{m+1}}=0$ in the expressions above.  )
\item Let the leakage value at each vertex be such that the parameters $\alpha$ and $\beta$ satisfy the following inequality, 
$1 \geq \frac{\lt}{\lo} \geq 1- \frac{\cue}{4 \cdot r}\cdot\min\left( \frac{\fflow{s}}{ \bflow{d}},\frac{\bflow{d}}{ \bflow{s}}\right)$, where $\cue \defeq g(r)-r>0$. As the forward and backward flow is fixed, the above constraint on the leakage parameters only depend on $g$ and therefore satisfy the conditions of the theorem.
\end{itemize}
We show by induction that the above inequalities on flow and pheromone levels hold for all time $t\geq 0$ and therefore the system does not converge to the minimum leakage path.
\paragraph{Hypothesis:} At time $t\geq 0$, pheromone values on the edges $(s,s_1),(s,s_2), (d_1, d),(d_2, d)$ are such that the normalized pheromone level on $(s,s_2),(d_2, d)$ is $\geq r$, that is $\npsst(t),\nbp{d_2d}(t) \geq r$. Further the flow values on the edges satisfy the following,
\begin{equation}
    \fflow{v_{i}v_{i+1}}(t) \leq \fflow{s} \cdot (1-r)\cdot\prod_{j \leq i} (1-l_{v_{j}}) \text{ and }\bflow{v_{i}v_{i+1}}(t) \leq \bflow{d} \cdot(1-r)\cdot\prod_{j \geq i+1} (1-l_{v_{j}})
\end{equation}
\begin{equation}
    \fflow{u_{i}u_{i+1}}(t) \geq \fflow{s} \cdot r\cdot\prod_{j \leq i} (1-l_{u_{j}}) \text{ and }\bflow{u_{i}u_{i+1}}(t) \geq \bflow{d} \cdot r\cdot\prod_{j \geq i+1} (1-l_{u_{j}})~.
\end{equation}

\paragraph{Base case:} The conditions trivially hold at time $0$ because of the initial setting of flow and pheromone levels described above.

\paragraph{Induction Step:} To prove the hypothesis for time $t+1$, all we need to show is that, $\npsst(t+1),\npddt(t+1) \geq r$, $\fflow{u_0u_1}(t+1)=\fflow{ss_2}(t+1) \geq \fflow{s} \cdot  r$ and $\bflow{u_{m}u_{m+1}}(t+1)=\bflow{d_2d}(t+1) \geq \bflow{d} \cdot r$; all the remaining inequalities follows from these basic inequalities. Also, from the definition of our case, that is $g(r)>r$, we get that the inequalities $\npsst(t+1),\npddt(t+1) \geq r$ further imply $\fflow{ss_2}(t+1) \geq \fflow{s}\cdot r$ and $\bflow{d_2d}(t+1) \geq \bflow{d} \cdot r $. 

To see this, note that 
$$\fflow{ss_2}(t+1) = \fflow{s} \cdot g(\npsst(t+1)) \geq  \fflow{s}\cdot g(r) \geq \fflow{s}\cdot r~,$$ 
when $\npsst(t+1) \leq 1/2$
and,
$$\fflow{ss_2}(t+1) = \fflow{s} \cdot (1 - g(1 -\npsst(t+1))) \geq \fflow{s} \cdot (1 - g(1/2)) = \fflow{s}\cdot \frac{1}{2} \geq  \fflow{s}\cdot r~,$$
when $\npsst(t+1) \geq 1/2$, where we used monotonically non-decreasing property of $g$ for the above inequalities. Also, we used $g(1/2) = 1/2$ and $r \leq 1/2$ for the last inequality. Here, we had to consider two cases because function $g$ takes as input the minimum of the two normalized pheromone levels. Similarly we can show $\bflow{d_2d}(t+1) \geq \bflow{d} \cdot r~.$

Therefore it is enough to show that $\npsst(t+1),\npddt(t+1) \geq r$. As the proof for the bound on $\npddt(t+1)$ is analogous to that of $\npsst(t+1)$, in the remainder we focus our attention on the proof for $\npsst(t+1)$. Recall the definition of $\npsst(t+1)$,
\begin{align}
\npsst(t+1)=\frac{\psst(t+1)}{\psso(t+1)+\psst(t+1)} &= \frac{ \psst(t) + 
    \fsst(t) +  \fsts(t)}{ \psso(t) + \psst(t)+ \fsso(t) +\fsst(t) +  \fsos(t)+\fsts(t)}
\end{align}
We know by the induction step that,
\begin{align}\label{eq:P411}
\frac{\psst(t)}{\psso(t)+\psst(t)} \geq  r~. 
\end{align}
Also note that,
\begin{equation}\label{eq:P412}
\frac{\fsts(t)}{ \fsos(t)+\fsts(t)}=\frac{\frac{\fsts(t)}{\fsos(t)}}{ 1+\frac{\fsts(t)}{\fsos(t)}} \geq \frac{\frac{r \beta }{(1-r)\alpha}}{1+\frac{r \beta }{(1-r)\alpha}} =\frac{r \beta}{\alpha-r(\alpha -\beta)}\geq \frac{\beta}{\alpha} \cdot r~.
\end{equation}
In the above we used the monotonically non-decreasing property of $x/(1+x)$ and the conditions provided by the induction step at time $t$, that is $\fsts(t)\geq \bflow{d} \cdot  r \cdot \beta$ and $\fsos(t)\leq \bflow{d} \cdot (1-r)\cdot \alpha$; which implies $\frac{\fsts(t)}{\fsos(t)} \geq \frac{r \beta }{(1-r)\alpha}$. In the fourth inequality, we used $\alpha-\beta \geq 0$ and $1 \geq r\geq 0$.

Let $\cue \defeq g(r)-r>0$.  We have the following upper bound on the flow value,
\begin{equation}\label{eq:P413}
    \fsst(t)  \geq   \fflow{s}\cdot(r+\cue)~,
\end{equation}
To see this, note that
\begin{equation*}
    \fsst(t) = \fflow{s} \cdot g(\npsst(t)) \geq  \fflow{s}\cdot g(r) = \fflow{s}\cdot(r+\cue)~,
\end{equation*}
when $\npsst(t) \leq 1/2$. In the above we used $\npsst(t) \geq r$ and monotonicity of decision rule $g$. And
\begin{equation*}
    \fsst(t) = \fflow{s} \cdot (1- g(1-\npsst(t))) \geq \fflow{s} \cdot (1- g(1/2)) = \fflow{s} \cdot \frac{1}{2} = \fflow{s} \cdot g(1/2)  \geq \fflow{s}\cdot g(r) = \fflow{s}\cdot(r+\cue)~,
\end{equation*}
when $\npsst(t) > 1/2$. In the above, we used $\npsst(t) > 1/2$, monotonicity of decision rule $g$, $g(1/2) = 1/2$ and $r \leq 1/2$.

Using these bounds, we provide an upper bound for the normalized pheromone level.
\begin{align}
\npsst(t+1)&= \frac{  \psst(t)+ 
    \fsst(t) +  \fsts(t)}{ \psso(t) + \psst(t)+ \fsso(t) +\fsst(t) +  \fsos(t)+\fsts(t)},\\
    & \geq \frac{ r\cdot (\psso(t)+\psst(t)) + 
    \fsst(t) +  \frac{\beta}{\alpha}\cdot r \cdot (\fsos(t)+\fsts(t))}{ \psso(t) + \psst(t)+ \fsso(t) +\fsst(t) +  \fsos(t)+\fsts(t)}\\
    & = r+\frac{(\frac{\lt}{\lo}-1)\cdot r\cdot(\fsts(t)+\fsos(t))+
    \left( \fsst(t)- r \cdot (\fsso(t) +  \fsst(t))\right)}{ \psso(t) + \psst(t)+ \fsso(t) +\fsst(t) +  \fsos(t)+\fsts(t)},\\
    & \geq r+\frac{(\frac{\lt}{\lo}-1)\cdot r \cdot 2 \bflow{d}+
    \left( \fsst(t)- r\cdot \fflow{s}\right)}{ \psso(t) + \psst(t)+ \fsso(t) +\fsst(t) +  \fsos(t)+\fsts(t)},\\
    & \geq r+\frac{(\frac{\lt}{\lo}-1)\cdot r \cdot 2 \bflow{d}+
    \fflow{s}\left( r+\cue-r\right)}{ \psso(t) + \psst(t)+ \fsso(t) +\fsst(t) +  \fsos(t)+\fsts(t)},\\
    & = r+\frac{(\frac{\lt}{\lo}-1)\cdot r \cdot 2 \bflow{d}+\cue
    \fflow{s}}{ \psso(t) + \psst(t)+ \fsso(t) +\fsst(t) +  \fsos(t)+\fsts(t)},\\
    &\geq r+\frac{\frac{1}{2}\cue
    \fflow{s}}{ \psso(t) + \psst(t)+ \fsso(t) +\fsst(t) +  \fsos(t)+\fsts(t)}~,\\
    &\geq r~.
\end{align}
In the second inequality we used \Cref{eq:P411,eq:P412}. In the third equality, we rearranged terms. In the fourth inequality, we used $\frac{\beta}{\alpha}\leq 1$ and $\fsts(t),\fsos(t) \leq \bflow{d}$, $\fsso(t) +  \fsst(t) \leq \fflow{s}$ inequalities. In the fifth inequality, we used \Cref{eq:P413}.
The seventh inequality holds because $(\frac{\lt}{\lo}-1)\cdot r \cdot 2 \bflow{d} \geq -\frac{1}{2}\cue \fflow{s}$, which is equivalent to $\frac{\lt}{\lo} \geq 1- \frac{\cue}{4 \cdot r} \frac{\fflow{s}}{\bflow{d}}$. Note that this constraint is satisfied by our choice of values for leakage parameters. Further the previous derivation gives us,
$$\npsst(t+1) \geq r~,$$
and we conclude the second case.
\end{proof}

\subsection{Necessity of the linear rule for convergence to the shortest path (\Cref{prop:nl-incflow})}

\thirdprop*
\begin{proof}[Proof of \Cref{prop:nl-incflow}]
Let $n$ and $m$ be the number of vertices between $s$ and $d$ on paths $P_1$ and $P_2$ respectively. Without loss of generality we let $P_{1}$ to be the shortest path (that is, $n < m$). We name the vertices from left to right on path $P_1$ by $v_{0}$ to $v_{n+1}$ and $P_2$ by $u_0$ to $u_{m+1}$, with the convention $v_0=u_0=s$ and $v_{n+1}=u_{m+1}=d$. We also let $s_1=v_1,s_2=u_1,d_1=v_n$ and $d_2=u_m$. We divide the analysis into two cases.

For any non-linear $g \in \setflb$, there exists an $r \in (0,1/2)$ such that $g(r)\neq r$. For such an $r$, one of the following two conditions holds:
$g(r)< r$ or $g(r)>r$.
\paragraph{Case 1:} Suppose $g(r)< r$, then we pick the following initial configuration and flow values:
\begin{itemize}
\item Assign pheromone values on the edges $(s,s_1),(s,s_2), (d_1, d),(d_2, d)$ such that the normalized pheromone level on $(s,s_1),(d_1, d)$ is $\leq r$, that is $\npsso(0),\nbp{d_1d}(0) \leq r$. Further assign the flow values on the edges such that they satisfy the following,
\begin{equation}
    \fflow{v_{i}v_{i+1}}(0) \leq \fflow{s}(0) \cdot r \text{ and }\bflow{v_{i}v_{i+1}}(0) \leq \bflow{d}(0) \cdot r
\end{equation}
\begin{equation}
    \fflow{u_{i}u_{i+1}}(0) \geq \fflow{s}(0) \cdot(1-r) \text{ and }\bflow{u_{i}u_{i+1}}(0) \geq \bflow{d}(0) \cdot (1-r)~.
\end{equation}
\item Let $\fflow{s}(0) = \bflow{d}(0) = 1$. We specify the values of flow $\fflow{s}(t),\bflow{d}(t)$ for times $t>0$ later in the proof. 
\end{itemize}
We show by induction that the above inequalities on flow and pheromone levels hold for all time $t\geq 0$ and therefore the system does not converge to the shortest path.
\paragraph{Hypothesis:} At time $t\geq 0$, pheromone value on the edges $(s,s_1),(s,s_2), (d_1, d),(d_2, d)$ are such that the normalized pheromone level on $(s,s_1),(d_1, d)$ is $\leq r$, that is $\npsso(t),\nbp{d_1d}(t) \leq r$. Further the flow values on the edges satisfy the following,
\begin{equation}
    \fflow{v_{i}v_{i+1}}(t) \leq \fflow{s}(t-i) \cdot r \text{ and }\bflow{v_{i}v_{i+1}}(t) \leq \bflow{d}(t-(n-i)) \cdot r
\end{equation}
\begin{equation}
    \fflow{u_{i}u_{i+1}}(t) \geq \fflow{s}(t-i) \cdot(1-r) \text{ and }\bflow{u_{i}u_{i+1}}(t) \geq \bflow{d}(t-(m-i)) \cdot (1-r)~.
\end{equation}
In the above equations, $\fflow{s}(t')\defeq\fflow{s}(0)$ and $\bflow{d}(t')\defeq\bflow{d}(0)$ for all $t' <0$.
\paragraph{Base case:} The conditions trivially hold at time $0$ because of the initial setting of flow and pheromone levels described above.

\paragraph{Induction Step:} To prove the hypothesis for time $t+1$, all we need to show is that, $\npsso(t+1),\npddo(t+1) \leq r$, $\fflow{v_0v_1}(t+1)=\fflow{ss_1}(t+1) \leq \fflow{s}(t+1) \cdot  r$ and $\bflow{v_{n}v_{n+1}}(t+1)=\bflow{d_1d}(t+1) \leq \bflow{d}(t+1) \cdot r$; all the remaining inequalities follow from these basic inequalities. Also note that from the definition of our case, that is $g(r)<r$, we get that the inequalities $\npsso(t+1),\npddo(t+1) \leq r$ further imply the following,
$$\fflow{ss_1}(t+1) = \fflow{s}(t+1) \cdot g(\npsso(t+1)) \leq  \fflow{s}(t+1)\cdot g(r) < \fflow{s}(t+1)\cdot r~,$$ 
and similarly, 
$$\bflow{d_1d}(t+1) < \bflow{d}(t+1) \cdot r~.$$

In the above we used monotonically non-decreasing property of decision rule $g$.
Therefore it is enough to show that $\npsso(t+1),\npddo(t+1) \leq r$. As the proof for the bound on $\npddo(t+1)$ is analogous to that of $\npsso(t+1)$, in the remainder we focus our attention on the proof for $\npsso(t+1)$. Recall the definition of $\npsso(t+1)$,
\begin{align}
\npsso(t+1)=\frac{\psso(t+1)}{\psso(t+1)+\psst(t+1)} &= \frac{ \psso(t) + 
    \fsso(t) +  \fsos(t)}{ \psst(t) + \psso(t)+ \fsst(t) +\fsso(t) +  \fsts(t)+\fsos(t)}
\end{align}
We know by the induction step that,
\begin{align}\label{eq:P51}
\frac{\psso(t)}{\psst(t)+\psso(t)} \leq  r~. 
\end{align}
Also note that,
\begin{equation}\label{eq:P52}
\frac{\fsos(t)}{ \fsts(t)+\fsos(t)} = \frac{\frac{\fsos(t)}{\fsts(t)}}{ 1+\frac{\fsos(t)}{\fsts(t)}} \leq \frac{\frac{r\cdot \bflow{d}(t-n)}{(1-r)\cdot \bflow{d}(t-m)}}{1+\frac{r\cdot \bflow{d}(t-n)}{(1-r)\cdot \bflow{d}(t-m)}}= \frac{r\cdot \bflow{d}(t-n)}{\bflow{d}(t-m)+r\cdot (\bflow{d}(t-n)-\bflow{d}(t-m))}\leq  \frac{\bflow{d}(t-n)}{\bflow{d}(t-m)}\cdot r~.    
\end{equation}
In the above we used the monotonically non-decreasing property of $x/(1+x)$ and the conditions provided by the induction step at time $t$, that is $\fsos(t)\leq  r \cdot \bflow{d}(t-n)$ and $\fsts(t)\geq (1-r)\cdot \bflow{d}(t-m)$; which implies that $\frac{\fsos(t)}{\fsts(t)} \leq \frac{r\cdot \bflow{d}(t-n)}{(1-r)\cdot \bflow{d}(t-m)}$. In the final inequality, we used $0 \leq r \leq 1$ and $\bflow{d}(t-n)-\bflow{d}(t-m) \geq 0$. Note that $\bflow{d}(t-n)-\bflow{d}(t-m) \geq 0$ follows because $n \leq m$ and the fact that the incoming flow is non-decreasing.

\renewcommand{\lo}{\bflow{d}(t-n)}
\renewcommand{\lt}{\bflow{d}(t-m)}
Let $\cue \defeq  r-g(r)>0$, then we have following upper bound on the flow value,
\begin{equation}\label{eq:P53}
    \fsso(t) = \fflow{s}(t) \cdot g(\npsso(t)) \leq  \fflow{s}(t)\cdot g(r) = \fflow{s}(t)\cdot(r-\cue)~,
\end{equation}
In the above we used $\npsso(t) \leq r$ and monotonicity of decision rule $g$.
Using these bounds, we provide an upper bound for the normalized pheromone level.
\begin{align}
\npsso(t+1)&= \frac{ \psso(t) + 
    \fsso(t) +  \fsos(t)}{ \psst(t) + \psso(t)+ \fsst(t) +\fsso(t) +  \fsts(t)+\fsos(t)},\\
    & \leq \frac{ r\cdot (\psst(t) + \psso(t)) + 
    \fsso(t) +  \frac{\lo}{\lt}\cdot r\cdot(\fsts(t)+\fsos(t))}{ \psst(t) + \psso(t)+ \fsst(t) +\fsso(t) +  \fsts(t)+\fsos(t)},\\
    & =  r+\frac{(\frac{\lo}{\lt}-1)\cdot r\cdot(\fsts(t)+\fsos(t))+
    \left( \fsso(t)- r(\fsso(t) +  \fsst(t))\right)}{ \psst(t) + \psso(t)+ \fsst(t) +\fsso(t) +  \fsts(t)+\fsos(t)},\\
    & \leq  r+\frac{(\frac{\lo}{\lt}-1)\cdot r\cdot(\bflow{d}(t-m)+\bflow{d}(t-n))+
    \fflow{s}(t)\left( r-\cue-r\right)}{ \psst(t) + \psso(t)+ \fsst(t) +\fsso(t) +  \fsts(t)+\fsos(t)},\\
    & \leq  r+\frac{(\frac{\lo}{\lt}-1)\cdot r\cdot(\bflow{d}(t-m)+\bflow{d}(t-n))-\cue
    \fflow{s}(t)}{ \psst(t) + \psso(t)+ \fsst(t) +\fsso(t) +  \fsts(t)+\fsos(t)},\\
    &\leq  r-\frac{\frac{1}{2}\cue
    \fflow{s}(t)}{ \psst(t) + \psso(t)+ \fsst(t) +\fsso(t) +  \fsts(t)+\fsos(t)}~,\\
    &\leq  r~.
\end{align}
In the second inequality we used \Cref{eq:P51,eq:P52}. In the third equality, we rearranged terms. In the fourth inequality, we used the following inequalities: $\frac{\lo}{\lt}-1 \geq 0$, $\fsts(t)\leq \bflow{d}(t-m)$, $\fsos(t) \leq \bflow{d}(t-n)$, $\fsso(t) +  \fsst(t) \leq \fflow{s}(t)$ and \Cref{eq:P53}. In the fifth inequality, we simplified the expression.
The sixth inequality holds if $(\frac{\lo}{\lt}-1)\cdot r\cdot(\bflow{d}(t-m)+\bflow{d}(t-n)) \leq \frac{1}{2}\cue \fflow{s}(t)$. 

We will set functions $\fflow{s}(t)$ and $\bflow{d}(t)$ such that the this condition is satisfied. Suppose we set $\fflow{s}(t) = \bflow{d}(t) = (1 + \alpha)^t$ for some $\alpha \geq 0$. Then the 
condition above is satisfied if $\alpha$ satisfies
\begin{align}
   (1+\alpha)^{m-n} \leq 1+\frac{\cue}{2 \cdot r} \frac{1}{(1+\alpha)^{-n}+(1+\alpha)^{-m}}=1+\frac{\cue}{2 \cdot r} \frac{(1+\alpha)^{m}}{1+(1+\alpha)^{m-n}} 
\end{align}

Suppose we set $\alpha$ small enough such that $(1+\alpha)^{m-n} <1+ \frac{1}{3}\min(\frac{\cue}{2\cdot r},1)$, then $\alpha$ satisfies

% For the symmetric case, that is $\fflow{s}(t)=\bflow{d}(t)=f(t)$ for all $t \geq 0$ the above condition simplifies to $\frac{f(t-n)}{f(t-m)} \leq 1+\frac{\cue}{2\cdot r} (\frac{f(t)}{f(t-n)+f(t-m)})$. \kiran{Shivam read this} Let $f(t)=(1+\alpha)^{t}$ for all $t\geq 0$ and $\alpha \geq 0$. The condition $\frac{f(t-n)}{f(t-m)} \leq 1+\frac{\cue}{2\cdot r} (\frac{f(t)}{f(t-n)+f(t-m)})$ is met if $\alpha$ satisfies the following, $(1+\alpha)^{m-n} \leq 1+\frac{\cue}{2 \cdot r} \frac{1}{(1+\alpha)^{-n}+(1+\alpha)^{-m}}=1+\frac{\cue}{2 \cdot r} \frac{(1+\alpha)^{m}}{1+(1+\alpha)^{m-n}}$. Suppose $(1+\alpha)^{m-n} <1+ \frac{1}{3}\min(\frac{\cue}{2\cdot r},1)$, then note that,
\begin{align}
(1+\alpha)^{m-n}\leq 1+ \frac{\cue}{2\cdot r} \cdot \frac{1}{3} \leq 1+  \frac{\cue}{2\cdot r} \cdot \frac{1}{1+(1+\alpha)^{m-n}} \leq 1+  \frac{\cue}{2\cdot r} \cdot \frac{(1+\alpha)^{m}}{1+(1+\alpha)^{m-n}}~.
\end{align}
In the first inequality we used $(1+\alpha)^{m-n} <1+ \frac{1}{3} \cdot \frac{\cue}{2\cdot r}$. In the second inequality we used $(1+\alpha)^{m-n} <1+ \frac{1}{3} \leq 2$. In the third inequality we used $\alpha \geq 0$, which implies $(1+\alpha)^{m} \geq 1$. Therefore there exists a setting of $\alpha$  which depends only on $g$, $n$ and $m$ such that when the incoming flow increases by a factor of $1 + \alpha$ at each time step,
 the previous analysis goes through and we get 
$$\npsso(t+1) \leq r~.$$
Therefore, by induction, the dynamics never converges to the shortest path $P_1$. We conclude the first case.

\paragraph{Case 2:} Suppose $g(r)> r$, then we pick the following initial configuration:
\begin{itemize}
\item Assign pheromone value on the edges $(s,s_1),(s,s_2), (d_1, d),(d_2, d)$ such that the normalized pheromone level on $(s,s_2),(d_2, d)$ is $\geq r$, that is $\npsst(0),\nbp{d_2d}(0) \geq r$. Further assign the flow values on the edges such that they satisfy the following,
\begin{equation}
    \fflow{v_{i}v_{i+1}}(0) \leq \fflow{s}(0)\cdot (1-r) \text{ and }\bflow{v_{i}v_{i+1}}(0) \leq \bflow{d}(0) \cdot(1-r)
\end{equation}
\begin{equation}
    \fflow{u_{i}u_{i+1}}(0) \geq \fflow{s}(0) \cdot r \text{ and }\bflow{v_{i}v_{i+1}}(0) \geq \bflow{d}(0) \cdot r~.
\end{equation}
\item Let $\fflow{s}(0) = \bflow{d}(0) = 1$. We specify the values of flow $\fflow{s}(t),\bflow{d}(t)$ for times $t>0$ later in the proof.
\end{itemize}
We show by induction that the above inequalities on flow and pheromone levels hold for all time $t\geq 0$ and therefore the system does not converge to the shortest path.
\paragraph{Hypothesis:} At time $t\geq 0$, pheromone value on the edges $(s,s_1),(s,s_2), (d_1, d),(d_2, d)$ are such that the normalized pheromone level on $(s,s_2),(d_2, d)$ is $\geq r$, that is $\npsst(t),\nbp{d_2d}(t) \geq r$. Further the flow values on the edges satisfy the following,
\begin{equation}
    \fflow{v_{i}v_{i+1}}(t) \leq \fflow{s}(t-i) \cdot(1-r) \text{ and }\bflow{v_{i}v_{i+1}}(t) \leq \bflow{d}(t-(n-i)) \cdot(1-r)
\end{equation}
\begin{equation}
    \fflow{u_{i}u_{i+1}}(t) \geq \fflow{s}(t-i) \cdot r \text{ and }\bflow{v_{i}v_{i+1}}(t) \geq \bflow{d}(t-(m-i)) \cdot r~.
\end{equation}
In the above equations, $\fflow{s}(t')=\fflow{s}(0)$ and $\bflow{d}(t')=\bflow{d}(0)$ for all $t' <0$.
\paragraph{Base case:} The conditions trivially hold at time $0$ because of the initial setting of flow and pheromone levels described above.

\paragraph{Induction Step:} To prove the hypothesis for time $t+1$, all we need to show is that, $\npsst(t+1),\npddt(t+1) \geq r$, $\fflow{u_0u_1}(t+1)=\fflow{ss_2}(t+1) \geq \fflow{s}(t+1) \cdot  r$ and $\bflow{u_{m}u_{m+1}}(t+1)=\bflow{d_2d}(t+1) \geq \bflow{d}(t+1) \cdot r$; all the remaining inequalities follows from these basic inequalities. Also note that from the definition of our case, that is $g(r)>r$, we get that the inequalities $\npsst(t+1),\npddt(t+1) \geq r$
% further imply the following,
% $$\fflow{ss_2}(t+1) = \fflow{s}(t+1) \cdot g(\npsst(t+1)) \geq  \fflow{s}(t+1)\cdot g(r) > \fflow{s}(t+1)\cdot r~,$$ 
% and similarly, 
% $$\bflow{d_2d}(t+1) > \bflow{d}(t+1) \cdot r~.$$
% In the above we used monotonically non-decreasing property of decision rule $g$.
further imply $\fflow{ss_2}(t+1) \geq \fflow{s}(t+1)\cdot r$ and $\bflow{d_2d}(t+1) \geq \bflow{d}(t+1) \cdot r $. 

To see this, note that 
$$\fflow{ss_2}(t+1) = \fflow{s}(t+1) \cdot g(\npsst(t+1)) \geq  \fflow{s}(t+1)\cdot g(r) \geq \fflow{s}(t+1)\cdot r~,$$ 
when $\npsst(t+1) \leq 1/2$
and,
$$\fflow{ss_2}(t+1) = \fflow{s}(t+1) \cdot (1 - g(1 -\npsst(t+1))) \geq \fflow{s}(t+1) \cdot (1 - g(1/2)) = \fflow{s}(t+1)\cdot \frac{1}{2} \geq  \fflow{s}(t+1)\cdot r~,$$
when $\npsst(t+1) \geq 1/2$, where we used monotonically non-decreasing property of $g$ for the above inequalities. Also, we used $g(1/2) = 1/2$ and $r \leq 1/2$ for the last inequality. Here, we had to consider two cases because function $g$ takes as input the minimum of the two normalized pheromone levels. Similarly we can show $\bflow{d_2d}(t+1) \geq \bflow{d}(t+1) \cdot r~.$

Therefore it is enough to show that $\npsst(t+1),\npddt(t+1) \geq r$. As the proof for the bound on $\npddt(t+1)$ is analogous to that of $\npsst(t+1)$, in the remainder we focus our attention on the proof for $\npsst(t+1)$. Recall the definition of $\npsst(t+1)$,
\begin{align}
\npsst(t+1)=\frac{\psst(t+1)}{\psso(t+1)+\psst(t+1)} &= \frac{ \psst(t) + 
    \fsst(t) +  \fsts(t)}{ \psso(t) + \psst(t)+ \fsso(t) +\fsst(t) +  \fsos(t)+\fsts(t)}
\end{align}
We know by the induction step that,
\begin{align}\label{eq:P511}
\frac{\psst(t)}{\psso(t)+\psst(t)} \geq  r~. 
\end{align}
Also note that,
\begin{equation}\label{eq:P512}
\frac{\fsts(t)}{ \fsos(t)+\fsts(t)}=\frac{\frac{\fsts(t)}{\fsos(t)}}{1+\frac{\fsts(t)}{\fsos(t)}} \geq \frac{\frac{r \cdot \bflow{d}(t-m)}{(1-r)\cdot \bflow{d}(t-n)}}{1+\frac{r \cdot \bflow{d}(t-m)}{(1-r)\cdot \bflow{d}(t-n)}}=\frac{r \cdot \bflow{d}(t-m)}{\bflow{d}(t-n)-r\cdot (\bflow{d}(t-n)-\bflow{d}(t-m))} \geq \frac{\bflow{d}(t-m)}{\bflow{d}(t-n)}\cdot r~.
\end{equation}
In the above we used the monotonically non-decreasing property of $x/(1+x)$ and the conditions provided by the induction step at time $t$, that is $\fsts(t)\geq r \cdot \bflow{d}(t-m)$ and $\fsos(t)\leq (1-r)\cdot \bflow{d}(t-n)$; which implies $\frac{\fsts(t)}{\fsos(t)} \geq \frac{r \cdot \bflow{d}(t-m)}{(1-r)\cdot \bflow{d}(t-n)}$. In the final inequality we used $0 \leq r \leq 1$ and $\bflow{d}(t-n)-\bflow{d}(t-m)) \geq 0$, which follows because $n \leq m$ and the incoming flow is non-decreasing.

% Let $\cue \defeq g(r)- r>0$, then we have following lower bound on the flow value,
% \begin{equation}\label{eq:P513}
%     \fsst(t) = \fflow{s}(t) \cdot g(\npsst(t)) \geq  \fflow{s}(t)\cdot g(r) = \fflow{s}(t)\cdot(r+\cue)~,
% \end{equation}
% In the above we used $\npsst(t) \geq r$ and monotonicity of decision rule $g$.

Let $\cue \defeq g(r)-r>0$.  We have the following upper bound on the flow value,
\begin{equation}\label{eq:P513}
    \fsst(t)  \geq   \fflow{s}(t)\cdot(r+\cue)~,
\end{equation}
To see this, note that
\begin{equation*}
    \fsst(t) = \fflow{s}(t) \cdot g(\npsst(t)) \geq  \fflow{s}(t)\cdot g(r) = \fflow{s}(t)\cdot(r+\cue)~,
\end{equation*}
when $\npsst(t) \leq 1/2$. In the above we used $\npsst(t) \geq r$ and monotonicity of decision rule $g$. And
\begin{align*}
    \fsst(t) &= \fflow{s}(t) \cdot (1- g(1-\npsst(t))) \geq \fflow{s}(t) \cdot (1- g(1/2))\\ &= \fflow{s}(t) \cdot \frac{1}{2} = \fflow{s}(t) \cdot g(1/2)  \geq \fflow{s}(t)\cdot g(r) = \fflow{s}(t)\cdot(r+\cue)~,
\end{align*}
when $\npsst(t) > 1/2$. In the above, we used $\npsst(t) > 1/2$, monotonicity of decision rule $g$, $g(1/2) = 1/2$ and $r \leq 1/2$.

Using these bounds, we provide an upper bound for the normalized pheromone level.
\begin{align}
\npsst(t+1)&= \frac{ \psst(t) + 
    \fsst(t) +  \fsts(t)}{ \psso(t) + \psst(t)+ \fsso(t) +\fsst(t) +  \fsos(t)+\fsts(t)},\\
    &\geq  \frac{ r \cdot (\psso(t) + \psst(t))+   \fsst(t)+ \frac{\lt}{\lo}\cdot r\cdot(\fsts(t)+\fsos(t))
    }{ \psso(t) + \psst(t)+ \fsso(t) +\fsst(t) +  \fsos(t)+\fsts(t)},\\
    & = r+\frac{(\frac{\lt}{\lo}-1)\cdot r\cdot(\fsts(t)+\fsos(t))+
    \left( \fsst(t)- r\cdot(\fsso(t) +  \fsst(t))\right)}{ \psso(t) + \psst(t)+ \fsso(t) +\fsst(t) +  \fsos(t)+\fsts(t)},\\
    & \geq r+\frac{(\frac{\lt}{\lo}-1)\cdot r\cdot(\bflow{d}(t-m)+\bflow{d}(t-n))+
    \fflow{s}(t)\cdot\left( r+\cue-r\right)}{ \psso(t) + \psst(t)+ \fsso(t) +\fsst(t) +  \fsos(t)+\fsts(t)},\\
    & \geq r+\frac{(\frac{\lt}{\lo}-1)\cdot r\cdot(\bflow{d}(t-m)+\bflow{d}(t-n))+\cue
    \fflow{s}(t)}{ \psso(t) + \psst(t)+ \fsso(t) +\fsst(t) +  \fsos(t)+\fsts(t)},\\
    &\geq r+\frac{\frac{1}{2}\cue
    \fflow{s}(t)}{ \psso(t) + \psst(t)+ \fsso(t) +\fsst(t) +  \fsos(t)+\fsts(t)}~,\\
    &\geq r~.
\end{align}
In the second inequality we used \Cref{eq:P511,eq:P512}. In the third equality, we rearranged terms. In the fourth inequality, we used the following inequalities: $\frac{\lt}{\lo}-1\leq 0$, $\fsts(t)\leq \bflow{d}(t-m)$, $\fsos(t) \leq \bflow{d}(t-n)$,  $\fsso(t) +  \fsst(t) \leq \fflow{s}(t)$ and \Cref{eq:P513}. In the fifth inequality, we simplified the expression.
The sixth inequality holds if $(\frac{\lt}{\lo}-1)\cdot r \cdot (\bflow{d}(t-m)+\bflow{d}(t-n)) \geq -\frac{1}{2}\cue
    \fflow{s}(t)$.

We will set functions $\fflow{s}(t)$ and $\bflow{d}(t)$ such that the this condition is satisfied. Suppose we set $\fflow{s}(t) = \bflow{d}(t) = (1 + \alpha)^t$ for some $\alpha \geq 0$. Then the 
condition above is satisfied if $\alpha$ satisfies    
\begin{align}
    (1+\alpha)^{n-m} \geq 1-\frac{\cue }{2 \cdot r} \frac{1}{(1+\alpha)^{-m}+(1+\alpha)^{-n}} =1-\frac{\cue }{2 \cdot r} \frac{(1+\alpha)^{n}}{(1+\alpha)^{n-m}+1}.
\end{align}
% For the symmetric case, that is $\fflow{s}(t)=\bflow{d}(t)=f(t)$ for all $t \geq 0$ the above condition simplifies to $\frac{f(t-m)}{f(t-n)} \geq 1-\frac{\cue}{2 \cdot r} \cdot \frac{f(t)}{f(t-m)+f(t-n)}$. Let $f(t)=(1+\alpha)^{t}$ for all $t\geq 0$ and $\alpha \geq 0$. The condition $\frac{f(t-m)}{f(t-n)} \geq 1-\frac{\cue}{2 \cdot r} \cdot \frac{f(t)}{f(t-m)+f(t-n)}$ is met if $\alpha$ satisfies the following,
%     $(1+\alpha)^{n-m} \geq 1-\frac{\cue }{2 \cdot r} \frac{1}{(1+\alpha)^{-m}+(1+\alpha)^{-n}} =1-\frac{\cue }{2 \cdot r} \frac{(1+\alpha)^{n}}{(1+\alpha)^{n-m}+1}$. 
Recall that $n \leq m$. Suppose we set $\alpha$ small enough such that $(1+\alpha)^{n-m} \geq 1-\frac{1}{2} \min(\frac{\cue }{2 \cdot r},1)$, then note that, 
\begin{align}
    (1+\alpha)^{n-m} \geq 1-\frac{1}{2} \cdot \frac{\cue }{2 \cdot r} \geq 1- \frac{\cue }{2 \cdot r} \cdot \frac{1}{1+(1+\alpha)^{n-m}} \geq 1- \frac{\cue }{2 \cdot r} \cdot \frac{(1+\alpha)^{n}}{1+(1+\alpha)^{n-m}}
\end{align}
    In the first inequality, we used $(1+\alpha)^{n-m} \geq 1-\frac{1}{2} \min(\frac{\cue }{2 \cdot r},1)  \geq 1-\frac{1}{2} \frac{\cue }{2 \cdot r}$. In the second inequality we used, $\alpha \geq 0$, $n \leq m$; which implies $(1+\alpha)^{n-m} \leq 1$. In the final inequality, we used $\alpha \geq 0$, which implies $(1+\alpha)^{n} \geq 1$.
    Therefore there exists a setting of $\alpha$  which depends only on $g$, $n$ and $m$ such that when the incoming flow increases by a factor of $1 + \alpha$ at each time step,
 the previous analysis goes through and we get 
$$\npsst(t+1) \geq r~.$$
Therefore, by induction, the dynamics never converges to the shortest path $P_1$. We conclude the second case.
\end{proof}

\section{Simulation Details}
\label{sec:app_sim_details}
In Section \ref{sec:results}, we discussed various simulation results for linear and non-linear decision rules. We give more details of these simulations here.

\paragraph{Graph families considered.} We ran the simulations for three kinds of directed graphs: graphs sampled from the $G(n, p)$ model, graphs sampled from  the $G(n, p)$ model with the additional locality constraint that an edge can exist between vertex $i$ and $j$ only if $|i - j| \leq k$ for some parameter $k$ (with the source vertex and the destination the vertex numbered 1 and $n$ respectively), and the grid graph. $G(n, p)$ model with the additional constraint ensures that the graph has long paths between the source and the destination. We only considered instances where there was at least one path from the source to the destination.

For the $G(n, p)$ model and its locally constrained version, we consider two kinds of graph families: one where an edge is allowed from $i$ to $j$  only if $i < j$ (resulting in a DAG), and the other where the edges can go both ways. We will use $G(n, p)$ and $G(n, p)$ local to denote the $G(n, p)$ graph and it's locally constrained version where edges can go both ways, and $G(n, p)$ DAG and $G(n, p)$ local DAG to denote their acyclic versions respectively.

\subsection{Linear Decision Rule}
In Section \ref{sec:parallel_paths_ub}, we discussed that with linear decision rule, in the fixed incoming flow setting, the dynamics converges to the path with the minimum leakage. With increasing flow, and no leakage, the dynamics converges to the shortest path. We generated a large number of instances with different values for $n$, $p$, $k$ and other parameters and observed the desired convergence in all the simulated instances.  Below, we describe the details of the parameter settings we considered.

For all the simulations, the decay parameter $\delta$ was set to $0.9$.  The initial forward flow level at $s$ and backward flow level at $d$ was chosen uniformly at random from $(0.5,1)$, and the initial flow at all other vertices was set to $0$.  We ran each simulation instance with two settings for initial pheromone level: (i) setting initial pheromone levels at all edges to $1$, (ii) choosing initial pheromone level on the edges uniformly at random from $(0, 1)$.
%Leakage chosen randomly from (0, 1)
%Decay parameter delta fixed for all the edges, and chosen randomly from (0,1)
%n, p, k, number of iterations

\subsubsection{Leakage with fixed flow}

For all the randomly generated instance, we chose the leakage value of all the vertices uniformly at random from $(0,0.1)$. For all these instances, the dynamics converged to the path with the minimum leakage. Below, we describe the parameter values for different graph families.
\begin{enumerate}
    
    \item $G(n, p)$ and $G(n,p)$ DAG: We ran 1000 instances each with $(n,p)$ set to $(100, 0.05), (100, 0.1)$ and $(100, 0.5)$. We ran 100 instances each with $(n,p)$ set to $(1000, 0.01)$ and $(1000, 0.005)$.
    
    \item $G(n, p)$ local and $G(n, p)$ local DAG: We ran 1000 instances with $(n, p) = (100, 0.5)$ and window size of 10, and 10 instances with $(n, p) = (1000, 0.5)$ and window size of 40.
 
    \item Grid graph: We ran the simulations on a $10X10$ grid graph with source and destination at the diagonally opposite extreme vertices. We ran $100$ instances. The graph structure was same across all these instances, but other parameter values such as leakage levels, initial pheromone levels etc. were chosen randomly as described above.
\end{enumerate}

\subsubsection{Increasing flow with no leakage}
In this case, the leakage was set to $0$ for all the vertices. We increase the forward flow level at $s$ and backward flow level at $d$ by a factor of $1.1$ every time step. To avoid floating point overflow, instead of multiplying the forward and backward flow values at $s$ and $d$ by $1.1$, we keep these two values the same, and divide all other flow values and pheromone levels in the graph by a factor of $1.1$. For any decision rule that only depends on the normalized pheromone levels (which holds for the linear rule), it is not difficult to see that only the relative values of flow and pheromone level matter, and this gives rise to exactly the same dynamics (up to scaling) as when we multiply the forward  flow  at $s$ and backward flow at $d$ by $1.1$.  When the value of pheromone level or flow at any edge became too small (smaller than the minimum allowed value  for a 64 bit floating point number which is  $\approx 10^{-323}$), we rounded it to zero.

For all the graphs, we planted a short path in the graph so as to ensure that the shortest path is unique. In all the simulations, we observed that the dynamics converges to this shortest path. Below, we describe the details for different graph types.

\begin{enumerate}
    \item $G(n, p)$ and $G(n, p)$ DAG: We ran 1000 instances each with $(n,p)$ set to $(100, 0.05), (100, 0.1)$ and $(100, 0.5)$. We ran 100 instances each with $(n,p)$ set to $(1000, 0.01)$ and $(1000, 0.005)$. For all the instances, we plant a randomly generated shortest path of length 1 smaller than the previous shortest path, so that there is unique shortest path in the graph so obtained.

    \item $G(n, p)$ local and $G(n, p)$ local DAG: We ran 1000 instances with $(n, p) = (100, 0.5)$ and window size of 10, and 100 instances with $(n, p) = (1000, 0.5)$ and window size of 40. We planted the shortest path having edges $(1,k+r), (k+r, 2(k+r))
    ,(2(k+r),3(k+r))$   and so on. Here, $1$ is the source vertex and $n$ is the destination vertex, and $r$ is a random integer in $[1, \text{window size}]$.
    
    \item Grid graph: We ran the simulations on a $10X10$ grid graph with source and destination at the diagonally opposite extreme vertices.  We ran $100$ instances. For each instance, we planted a randomly generated shortest path of length randomly chosen from $\{9, 10, 11\}$.
\end{enumerate}

\subsection{Nonlinear Decision Rules}
In Section \ref{sec:general_rules_lb}, we showed that for a reasonably general family of  decision rules, the linear decision rule is its unique member with guaranteed convergence to the path with minimum leakage or the shortest path. Note that these results only suggest that linear decision rule is necessary for \emph{guaranteed convergence} to the shortest or the minimum leakage path. Can it be the case that even with non-linear decision rules, the forces of increasing flow and leakage still help in finding shorter or smaller leakage paths respectively, compared to the paths found in the absence of these forces? To understand this question, we ran simulations for various non-linear decision rules.
We observe that within each graph family, for a large fraction of graph instances, the path found in the presence of these forces has length (respectively leakage) smaller than or equal to the length (respectively leakage) of the path found in their absence. These simulations suggest that the usefulness of the forces of 
leakage and increasing flow is not limited to the linear decision rule. Below, we discuss more details of these simulations.

\subsubsection{Decision rules details}
We consider the following non-linear decision rules:
\begin{itemize}
    \item \textbf{Quadratic}: This rule divides the flow in proportion to the square of the pheromone levels.
    \item \textbf{1.1 power}: This rule divides the flow in proportion to the $1.1^{\text{th}}$ power of the pheromone levels. We study this rule to understand the effect of strength of non-linearity. Since this rule is closer to the linear rule compared to the quadratic rule, we would expect that the forces of leakage and increasing flow are more effective with this rule compared to the quadratic rule.
    \item \textbf{Quadratic-with-offset}: This rule is a slight variant of the quadratic rule and has been used previously \cite{ deneubourg1990self} to model ant behaviour. This rule adds a fixed positive constant $c$ to the pheromone levels, and divides the flow in proportion to the square of the offsetted pheromone levels. The parameter $c$ was added to encourage exploration.
   
   \item \textbf{Rank-edge}: This rule was introduced in \cite{chandrasekhar2018distributed}. This rule ranks the edges from highest to lowest pheromone levels. If multiple edges have the same pheromone level, they get the same rank. Let $q \in (0, 1)$ be some parameter. This rule sends $(1-q)$ fraction of flow to the first ranked edge, $q(1-q)$ fraction of flow to the second ranked edge, $q^2(1-q)$ fraction of flow to the third ranked edge and so on. In general, the $i^{\text{th}}$ ranked edge gets $q^{i-1}(1-q)$ fraction of flow.  If there are $k$ edges at the $i^{\text{th}}$ rank, then each of them gets $\frac{q^{i-1}(1-q)}{k}$ fraction of flow. One exception to this rule is the lowest ranked edge which gets all the remaining flow that has not been assigned to any other edge. If there are multiple edges at the lowest rank, then this remaining flow gets divided equally among them.
 \end{itemize}  
\subsubsection{Results} 
As in the linear decision rule case, we considered the $G(n, p)$, $G(n, p)$ local, $G(n, p)$ DAG, $G(n, p)$ local DAG and the grid graph for our simulations with non-linear decision rules. In the increasing flow case, similar to the linear decision rule setting, we add a random shortest path to these graphs so that the shortest path is unique. For each non-linear decision rule and each graph family, we ran simulations with 100 random graph instances.

% Similarly, Figure 4b shows the path leakage obtained by the quadratic decision rule with and without leakage present at the vertices, and by the linear decision rule with leakage present. There is a subtle distinction here between path leakage as an objective function and leakage as a process affecting the dynamics. For each graph instance, we assign leakage values to vertices (see Appendix C for details). This gives us a path leakage objective function which we measure in all the three cases. However, in the case of the quadratic rule without leakage, the leakage process is not applied at vertices during the dynamics.

In Table \ref{tab:leakage-non_lin-effect}, we show the effect of leakage with non-linear decision rules. There is a subtle distinction here between path leakage as an objective function and leakage as a process affecting the dynamics. For each graph instance, we assign leakage values to vertices.  This gives us a path leakage objective function for each graph instance. For each graph instance, we compare two settings, the baseline setting in which the leakage process is not applied at vertices during the dynamics and the main setting in which the leakage process is applied. Now for each graph instance, we compare the path leakage objective function of the path found in the baseline setting and the main setting. We report the fraction of instances where the path leakage objective in the main setting is unchanged or smaller as compared to the baseline setting. We also report the average percentage difference in the path leakage objective between the baseline and the main setting,  that is
\[
 \frac{1}{n} \sum_{i=1}^n 100*\frac{\text{path\_leakage}(\text{main}_i) - \text{path\_leakage}(\text{baseline}_i)}{\text{path\_leakage}(\text{baseline}_i)}.
\]
Here, $\text{path\_leakage}(\text{main}_i)$ and $\text{path\_leakage}(\text{baseline}_i)$ denote the path leakage objective obtained in the main setting and the baseline setting respectively, for the $i^{\text{th}}$ graph instance.
In the last column, we report the average percentage difference in path leakage for the baseline compared to the optimum (minimum) path leakage, that is
\[
 \frac{1}{n} \sum_{i=1}^n 100*\frac{\text{path\_leakage}(\text{opt}_i) - \text{path\_leakage}(\text{baseline}_i)}{\text{path\_leakage}(\text{baseline}_i)}.
\]
Here, $\text{path\_leakage}(\text{opt}_i)$ denotes the optimum path leakage for the $i^{\text{th}}$ graph instance.
Due to symmetry in the grid graph, the dynamics does not converge to a path in the baseline setting, therefore we do not consider grid graphs while studying the effect of leakage with non-linear decision rules.

% For each graph instance, we run simulations under two settings: the baseline setting in which there is no leakage at vertices, and the second setting in which there is leakage present at each vertex. Although there is no leakage at any vertex in the baseline setting, we still use the leakage values of the second setting to compute the leakage of the path found in this setting. The leakage of the path found in the baseline setting is then compared to the leakage of the path found in the second setting. We report the fraction of instances where the path leakage in the second setting is unchanged or smaller as compared to the baseline setting. We also report the average percentage difference in the leakage of path found between the baseline and the second setting. In the last column, we report the average percentage difference in path leakage for the baseline compared to the optimum (minimum) leakage path. Due to symmetry in the grid graph, the dynamics does not converge to a path in the baseline setting, therefore we do not consider grid graphs while studying the effect of leakage with non-linear decision rules.

In Table \ref{tab:inc_flow-non_lin-effect}, we show the effect of increasing flow with non-linear decision rules. Again, we consider the baseline setting where there is no increasing flow, and the main setting with increasing flow. We report the fraction of instances where the path length in the main setting is unchanged or smaller as compared to the baseline setting.
We also report the average percentage difference in the length of path found between the baseline and the main setting, that is
\[
 \frac{1}{n} \sum_{i=1}^n 100*\frac{\text{path\_length}(\text{main}_i) - \text{path\_length}(\text{baseline}_i)}{\text{path\_length}(\text{baseline}_i)}.
\]
Here, $\text{path\_length}(\text{main}_i)$ and $\text{path\_length}(\text{baseline}_i)$ denote the path length obtained in the main setting and the baseline setting respectively, for the $i^{\text{th}}$ graph instance.
In the last column, we report the average percentage difference in path length for the baseline compared to the shortest length path, that is 
\[
 \frac{1}{n} \sum_{i=1}^n 100*\frac{\text{path\_length}(\text{opt}_i) - \text{path\_length}(\text{baseline}_i)}{\text{path\_length}(\text{baseline}_i)}.
\]
Here, $\text{path\_length}(\text{opt}_i)$ denotes the shortest path length for the $i^{\text{th}}$ graph instance.

% Please add the following required packages to your document preamble:
% \usepackage{multirow}
\begin{table}[]
\begin{tabular}{clcccc}
\hline
\multicolumn{1}{l}{\textbf{\begin{tabular}[c]{@{}l@{}}Decision rule\end{tabular}}} &
  \textbf{Graph family} &
  \multicolumn{1}{l}{\textbf{\begin{tabular}[c]{@{}l@{}}\% instances with\\ unchanged path leak.\end{tabular}}} &
  \multicolumn{1}{l}{\textbf{\begin{tabular}[c]{@{}l@{}}\% instances with\\ smaller path leak.\end{tabular}}} &
  \multicolumn{1}{l}{\textbf{\begin{tabular}[c]{@{}l@{}}Avg. \%\\ change\end{tabular}}} &
  \multicolumn{1}{l}{\textbf{\begin{tabular}[c]{@{}l@{}}Avg. \%\\ change opt.\end{tabular}}} \\ \hline
  \multirow{4}{*}{1.1 power} &
  $G(n, p)$ local &
  1 &
  98 &
  -27.5 &
  -43.9 \\
 &
  \multicolumn{2}{l}{$G(n, p)$ local DAG \hspace{34pt}     0} &
  100 &
  -38.1 &
  -47.7 \\
 &
  $G(n, p)$ &
  70 &
  30 &
  -12.3 &
  -26.1 \\
 &
  $G(n, p)$ DAG &
  88 &
  12 &
  -3.9 &
  -17.7 \\ \hline
\multirow{4}{*}{Quadratic} &
  $G(n, p)$ local &
  27 &
  67 &
  -8.2 &
  -44.3 \\
 &
  \multicolumn{2}{l}{$G(n, p)$ local DAG \hspace{31pt}     24} &
  73 &
  -12.5 &
  -47.2 \\
 &
  $G(n, p)$ &
  83 &
  16 &
  -4.5 &
  -24.8 \\
 &
  $G(n, p)$ DAG &
  94 &
  6 &
  -2.4 &
  -22.3 \\ \hline
\multirow{4}{*}{\begin{tabular}[c]{@{}c@{}}Quadratic\\ -with-\\ offset\end{tabular}} &
  $G(n, p)$ local &
  2 &
  79 &
  -10.5 &
  -43.1 \\
 &
  \multicolumn{2}{l}{$G(n, p)$ local DAG \hspace{34pt}        1} &
  93 &
  -17.6 &
  -47.6 \\
 &
  $G(n, p)$ &
  84 &
  14 &
  -5.2 &
  -25.4 \\
 &
  $G(n, p)$ DAG &
  93 &
  6 &
  -2.7 &
  -22.6 \\ \hline
\multirow{4}{*}{Rank-edge} &
  $G(n, p)$ local &
  67 &
  32 &
  -2.8 &
  -40.6 \\
 &
  \multicolumn{2}{l}{$G(n, p)$ local DAG \hspace{31pt}         63} &
  31 &
  -2.5 &
  -40.2 \\
 &
  $G(n, p)$ &
  78 &
  22 &
  -4.1 &
  -21.2 \\
 &
  $G(n, p)$ DAG &
  81 &
  19 &
  -4.1 &
  -15.2 \\ \hline
\end{tabular}
\caption{Effect of leakage with non-linear decision rules.}
\label{tab:leakage-non_lin-effect}
\end{table}
% Please add the following required packages to your document preamble:
% \usepackage{multirow}
\begin{table}[]
\begin{tabular}{clcccc}
\hline
\multicolumn{1}{l}{\textbf{\begin{tabular}[c]{@{}l@{}}Decision rule\end{tabular}}} &
  \textbf{Graph family} &
  \multicolumn{1}{l}{\textbf{\begin{tabular}[c]{@{}l@{}}\% instances with\\ unchanged path len.\end{tabular}}} &
  \multicolumn{1}{l}{\textbf{\begin{tabular}[c]{@{}l@{}}\% instances with\\ smaller path len.\end{tabular}}} &
  \multicolumn{1}{l}{\textbf{\begin{tabular}[c]{@{}l@{}}Avg. \%\\ change\end{tabular}}} &
  \multicolumn{1}{l}{\textbf{\begin{tabular}[c]{@{}l@{}}Avg. \%\\ change opt.\end{tabular}}} \\ \hline
  \multirow{5}{*}{1.1 power} &
  $G(n, p)$ local &
  77 &
  23 &
  -4.6 &
  -4.6 \\
 &
  \multicolumn{2}{l}{$G(n, p)$ local DAG \hspace{28pt}     22} &
  78 &
  -29 &
  -29 \\
 &
  Grid &
  56 &
  44 &
  -8.3 &
  -16.4 \\
 &
  $G(n, p)$ &
  100 &
  0 &
  0 &
  0 \\
 &
  $G(n, p)$ DAG &
  96 &
  4 &
  -1.3 &
  -1.3 \\ \hline
\multirow{5}{*}{Quadratic} &
  $G(n, p)$ local &
  83 &
  17 &
  -3.3 &
  -4.4 \\
 &
  \multicolumn{2}{l}{$G(n, p)$ local DAG \hspace{28pt}     55} &
  45 &
  -13.8 &
  -19.6 \\
 &
  Grid &
  85 &
  12 &
  -1.7 &
  -15 \\
 &
  $G(n, p)$ &
  100 &
  0 &
  0 &
  0 \\
 &
  $G(n, p)$ DAG &
  99 &
  1 &
  -0.3 &
  -2.4 \\ \hline
\multirow{5}{*}{\begin{tabular}[c]{@{}c@{}}Quadratic\\ -with-\\ offset\end{tabular}} &
  $G(n, p)$ local &
  72 &
  28 &
  -5.7 &
  -6.5 \\
 &
  \multicolumn{2}{l}{$G(n, p)$ local DAG  \hspace{28pt}       14} &
  85 &
  -38.8 &
  -42 \\
 &
  Grid &
  27 &
  71 &
  -29.2 &
  -39.6 \\
 &
  $G(n, p)$ &
  100 &
  0 &
  0 &
  0 \\
 &
  $G(n, p)$ DAG &
  92 &
  8 &
  -3.9 &
  -5.4 \\ \hline
\multirow{5}{*}{Rank-edge} &
  $G(n, p)$ local &
  99 &
  1 &
  -0.1 &
  -0.1 \\
 &
  \multicolumn{2}{l}{$G(n, p)$ local DAG  \hspace{28pt}       95} &
  5 &
  -0.9 &
  -0.9 \\
 &
  Grid &
  83 &
  17 &
  -2.7 &
  -2.7 \\
 &
  $G(n, p)$ &
  100 &
  0 &
  0 &
  0 \\
 &
  $G(n, p)$ DAG &
  100 &
  0 &
  0 &
  0 \\ \hline
\end{tabular}
\caption{Effect of increasing flow with non-linear decision rules.}
\label{tab:inc_flow-non_lin-effect}
\end{table}

We discuss our major observations below:

\begin{itemize}
    \item \textbf{Usefulness of forces of leakage and increasing flow not limited to the linear rule}: For a large fraction of graph instances in each graph family, the path found in the presence of these forces has length (respectively leakage) smaller than or equal to the length (respectively leakage) of the path found in their absence. This holds for all non-linear decision rules considered. 
    
    \item \textbf{Forces of leakage and increasing flow more effective with weaker non-linearity}:
    One would intuitively hope that the closer a non-linear decision rule is to the linear rule, the more effective the forces of leakage and increasing flow would be. To study this, we  compare the quadratic decision rule with 1.1 power rule. For all graph families, we observe that the percentage of instances with strictly smaller path leakage (path length resp.) is higher for 1.1 power rule. We also observe that the average percentage change in path leakage (path length resp.) is more negative for 1.1 power rule. This shows that the forces of leakage and increasing flow are more effective for 1.1 power rule (which is closer to the linear rule) compared to the quadratic rule.
    
    \item \textbf{Effectiveness of the forces of leakage and increasing flow varies across graph families}: We observe that the effectiveness of leakage and increasing flow varies across graph types. For instance, across all non-linear rules considered, we observe that the percentage of instances with strictly smaller path leakage (path length resp.) is higher for $G(n, p)$ local and $G(n, p)$ local DAG graphs compared to $G(n, p)$ and $G(n, p)$  DAG graphs. We also observe that there are a few instances where these forces end up increasing the path leakage (path length resp.). For instance, with 1.1 power rule and for $G(n, p)$ local graph family, 1\% of instances end up with increased leakage compared to the baseline. It is an interesting direction for future research to understand what graph properties affect the effectiveness of leakage and increasing flow with non-linear decision rules.
    
    \item \textbf{Non-linear decision rules can prefer short paths even in the absence of increasing flow}: We observe that non-linear decision rules considered end up finding relatively short paths even without increasing flow. Such a behaviour is consistent with past observations \cite{goss1989self,chandrasekhar2018distributed} where various non-linear decision rules have been shown to find relatively short paths .  For instance, with $G(n, p)$ graphs, we observe that all the non-linear rules considered end up finding the shortest path for all graph instances.   However, increasing flow does help in nudging the dynamics towards shorter paths (when the dynamics does not already converge to the shortest path).
    
\item \textbf{Faster convergence with non-linear decision rules compared to linear decision rules}: For the non-linear decision rules we considered, we observe that simulations take significantly fewer iterations to converge compared to linear decision rules. This is true for settings involving leakage as well as increasing flow. Moreover, we observe that convergence is faster with stronger non-linearity. For instance, we observe that convergence with quadratic rule is faster compared to 1.1 power rule. Recall that we also observed that the forces of leakage and increasing flow were more effective with weaker non-linearity. In this sense, the strength of non-linearity can be thought of as a useful knob to balance the trade-off between convergence time and effectiveness of forces of leakage and increasing flow.
\end{itemize}    

\subsubsection{Parameter setting and implementation details}
In this section, we describe how we set various parameters for the non-linear decision rule simulations:

\begin{itemize}
    \item \textbf{Graph parameter setting}: For the $G(n, p)$ and $G(n, p)$ DAG graphs, we set $n = 100$ and $p = 0.1$. For $G(n, p)$ local and $G(n, p )$ local DAG, we set $n = 100, \ p=0.5$ and window size $k= 10$. For the grid graph, we consider a $10 X 10$ grid. Similar to the linear decision rule case, we plant a random shortest path to the graphs in increasing flow case so that the shortest path is unique.
    
    \item \textbf{Decay parameter and initial flow}: The decay parameter $\delta$ was set to $0.9$.  The initial forward flow level at $s$ and backward flow level at $d$ was chosen uniformly at random from $(0.5,1)$, and the initial flow at all other vertices was set to $0$. 
    
    \item \textbf{Leakage}:  For the case when there is leakage at vertices, the leakage at each vertex is chosen uniformly at random from $(0, 0.1)$.
    
    \item \textbf{Increasing flow rate}: In the case of increasing flow, we increase the incoming forward and backward flow by a factor of $1.1$ at each time step. Similar to the linear decision rule case, to avoid floating point  overflow, we implement this by decreasing all other flow and pheromone levels by a factor of $1.1$ except the incoming forward and backward flow. For any decision rule that only depends on the normalized pheromone levels , it is not difficult to see that only the relative values of flow and pheromone level matter, and our implementation gives rise to exactly the same dynamics (up to scaling) as when we increased the incoming forward and backward flow by a factor of $1.1$. All the non-linear decision rules we consider here only depend on the normalized pheromone levels except the quadratic-with-offset rule. For the quadratic-with-offset rule, we additionally also scale down parameter $c$ by $1.1$ in each step, which makes the dynamics equivalent (up to scaling) to the case where incoming flow is multiplied by $1.1$. When the value of pheromone level or flow at any edge becomes too small (smaller than the minimum allowed value  for a 64 bit floating point number $\approx 10^{-323}$), we round it to zero.
    
    \item \textbf{Decision rule parameters}: The quadratic and 1.1 power rules do not have any parameters to be set. Quadratic-with-offset rule has a parameter $c$. Note that when $c$ is large, the dynamics would not converge to a single path. We set $c$ to be small enough such that the dynamics for most instances end up converging to a single path (see definition of convergence below). For the leakage simulations, we set the value of $c = 0.25$ for all graph instances . For the increasing flow simulations, we set the value of $c = 1$ for the grid graph instances and $c = 0.5$ for all other graph instances.  Rank-edge decision rule also has a parameter $q$. We set $q = 0.01$ for all our simulations. 
    
    \item \textbf{Convergence criterion}: For all simulations (including linear rule simulations) except the quadratic-with-offset rule simulations, we consider the dynamics to have converged to some path when for each vertex on the path, at least 99\% of the forward (backward) flow entering it passes through its outgoing (incoming) edge on the path. For linear rule, this corresponds to each edge on the path having backward and forward normalized pheromone level at least 0.99.
    With quadratic-with-offset rule, we keep the threshold slightly lower to 95\%. If the threshold is too high for this rule, most instances would only converge to a path when $c$ is fairly small. However, with small $c$, this rule would be similar to the quadratic rule. To differentiate the dynamics with this rule from the quadratic rule, we keep the threshold slightly lower to 95\% which allows us to choose a slightly larger value of $c$.
    
    \item \textbf{Initial Pheromone levels}: We set initial pheromone level on all edges equal to one. Recall that with the linear rule, our simulations worked as expected even when the pheromone level at each edge is chosen uniformly at random from $(0, 1)$. However, for non-linear rules, we observed that for most instances, the simulations do not converge to a path when pheromone level is randomly initialized. In that sense, the linear rule seems more robust to pheromone initialization. Therefore, we run our simulations with non-linear rules with pheromone levels uniformly set to one so that the dynamics converges to a path. %Note that this way of pheromone initialization also seems more biologically realistic compared to random initialization.
    
    \item \textbf{Ignoring instances that do not converge to a path}: We discussed that we set the initial pheromone level uniformly to one so that the dynamics converges to a path. But even in this case, with non-linear decision rules, we observe that for a few instances the dynamics does not converge to a path. In our simulations, we ignored such instances, and only considered the instances where the dynamics converges to a path in both the baseline setting and the setting with leakage or increasing flow.
\end{itemize}

\end{document}